\newcommand\norm[1]{\left\lVert#1\right\rVert}
\newcommand*{\yes}{%
	\textpdfrender{
		TextRenderingMode=FillStroke,
		LineWidth=0.75pt, 
	}{\checkmark}%
}
\newcommand*{\no}{%
	\textpdfrender{
		TextRenderingMode=FillStroke,
		LineWidth=1pt, 
	}{\times}%
}
\newcolumntype{C}[1]{>{\centering\let\newline\\\arraybackslash\hspace{0pt}}m{#1}}
\newcolumntype{L}[1]{>{\raggedright\let\newline\\\arraybackslash\hspace{0pt}}p{#1}}
\newcommand{\pvars}{\mathbf{V}}
\newcommand{\fnames}{\mathbf{F}}
\newcommand{\monomials}{\mathbf{M}}
\newcommand{\vals}[1]{\mathbb{R}^{#1}}
\newcommand{\ret}{\texttt{ret}}
\newcommand{\labels}{\mathbf{L}}
\renewcommand{\paragraph}[1]{\smallskip\noindent\textsc{#1.}}
\newcommand{\satisfies}{\models}
\renewcommand{\mid}{\vert}
\newcommand{\obj}[1]{\langle \mathit{#1} \rangle}
\newcommand{\kw}[1]{\mbox{`\textbf{#1}'}}
\newcommand{\mainFunction}{{f_\texttt{main}}}
\newcommand{\loc}{\ell}
\newcommand{\lin}{\loc_{\texttt{in}}}
\newcommand{\lout}{\loc_{\texttt{out}}}
\newcommand{\locs}{\labels}
\newcommand{\precond}{\mathsf{Pre}}
\newcommand{\postcond}{\mathsf{Post}}
\newcommand{\invariant}{\mathsf{Inv}}
\newcommand{\inv}{\invariant}
\newcommand{\ind}{\mathsf{Ind}}
\newtheorem*{theorem*}{Theorem}
\newcommand{\strongInvAlgo}{\mathsf{StrongInvSynth}}
\newcommand{\weakInvAlgo}{\mathsf{WeakInvSynth}}
\newcommand{\recursiveStrongInvAlgo}{\mathsf{RecStrongInvSynth}}
\newcommand{\recursiveWeakInvAlgo}{\mathsf{RecWeakInvSynth}}
\lstdefinelanguage{prog}
{
	morekeywords={prob, if, then, else, fi, while, do, od, true, false, and, or, skip, return},
	sensitive = false
}
\newcommand{\func}[1]{\textsf{#1}}
\newtheorem{remark}{Remark}
\newtheorem*{proof}{Proof}
\theoremstyle{definition}
\begin{document}
	
	\title[Polynomial Invariant Generation]{Polynomial Invariant Generation for Non-deterministic Recursive Programs} 
	
	\author{Krishnendu Chatterjee}
	\affiliation{
		\institution{IST Austria}            
		\city{Klosterneuburg}
		\country{Austria}                    
	}
	\email{krishnendu.chatterjee@ist.ac.at}          
	
	\author{Hongfei Fu}
	\affiliation{
		\institution{Shanghai Jiao Tong University}            
		\city{Shanghai}
		\country{China}                    
	}
	\email{fuhf@cs.sjtu.edu.cn}    
	
	\author{Amir Kafshdar Goharshady}
	\affiliation{
		\institution{IST Austria}            
		\city{Klosterneuburg}
		\country{Austria}                    
	}
	\email{amir.goharshady@ist.ac.at}          
	
	\author{Ehsan Kafshdar Goharshady}
	\affiliation{
		\institution{Ferdowsi University of Mashhad}            
		\city{Mashhad}
		\country{Iran}                    
	}
	\email{e.goharshady1@gmail.com}          
	
	\begin{abstract}

We consider the classical problem of invariant generation for programs with polynomial assignments and focus on synthesizing invariants that are a conjunction of strict polynomial \emph{inequalities}. We present a sound and semi-complete method based on positivstellens\"{a}tze, i.e.~theorems in semi-algebraic geometry that characterize positive polynomials over a semi-algebraic set.

On the theoretical side, the worst-case complexity of our approach is subexponential, whereas the worst-case 
complexity of the previous complete method (Kapur, ACA 2004) is doubly-exponential.  Even when restricted to linear invariants, the best previous complexity for complete invariant generation is exponential (Col\'{o}n et al, CAV~2003). On the practical side, we reduce the invariant generation problem to quadratic programming (QCLP), which is a classical optimization problem with many industrial solvers. We demonstrate the applicability of our approach by providing experimental results on several academic benchmarks. To the best of our knowledge, the only previous invariant generation method
 that provides completeness guarantees for invariants consisting of polynomial inequalities is~(Kapur, ACA 2004), which relies on quantifier elimination and cannot even handle toy programs such as our running example.
\end{abstract}

\begin{CCSXML}
	<ccs2012>
	<concept>
	<concept_id>10003752.10003790.10002990</concept_id>
	<concept_desc>Theory of computation~Logic and verification</concept_desc>
	<concept_significance>500</concept_significance>
	</concept>
	<concept>
	<concept_id>10003752.10010124.10010138.10010139</concept_id>
	<concept_desc>Theory of computation~Invariants</concept_desc>
	<concept_significance>500</concept_significance>
	</concept>
	</ccs2012>
\end{CCSXML}

\ccsdesc[500]{Theory of computation~Logic and verification}
\ccsdesc[500]{Theory of computation~Invariants}

	\keywords{Invariant generation, Positivstellens\"{a}tze, Polynomial programs} 
	
	\maketitle

	\section{Introduction} \label{sec:intro}

\paragraph{Invariants} An assertion at a program location that is always satisfied by the variables whenever the location is reached is called an {\em invariant}. Invariants are essential for many quantitative analyses, as well as for fundamental problems such as proving termination~\cite{halbwachs1997verification,henzinger1994model,ngo2018bounded,cav17}. Invariant generation is a classical problem in verification and programming languages, and has been studied for decades, e.g.~for safety and liveness analysis~\cite{DBLP:books/daglib/0080029,cousot1978automatic,cousot1977abstract}.

\paragraph{Inductive Invariants} An {\em inductive} assertion is an assertion that holds at a location for the first visit to it and is preserved under every cyclic execution path to and from the location. Inductive assertions are guaranteed to be invariants, and the well-established method to prove an assertion is an invariant is to find an inductive invariant that strengthens it~\cite{sriram,DBLP:books/daglib/0080029}. 

\paragraph{Abstract Interpretation} One technique to find inductive invariants is {\em abstract interpretation}~\cite{cousot1978automatic}, which is primarily a theory of semantic approximations. It has been used for invariant generation by computing least fixed points of abstractions of the collecting semantics, but it guarantees completeness only for rare special cases~\cite{giacobazzi1997completeness}.

\paragraph{Linear vs Polynomial Invariants} For \emph{linear} invariant generation over programs with \emph{linear} updates, a sound and complete methodology was obtained by~\cite{sriram}. We consider programs with \emph{polynomial} updates and the problem of generating \emph{polynomial} invariants, i.e.~invariants that are a conjunction of polynomial \emph{inequalities} over program variables. Hence, our setting is more general than~\cite{sriram} in terms of the programs we analyze, and also the desired invariants. The only previous approach that provides completeness for this problem is~\cite{kapur2004automatically}. However, it has doubly-exponential complexity and is not practically applicable even to toy programs. Conversely, efficient but incomplete methods were proposed in~\cite{DBLP:journals/pacmpl/KincaidCBR18,DBLP:conf/fmcad/FarzanK15,DBLP:conf/pldi/KincaidBBR17}. Polynomial invariants are more desirable than linear invariants for a variety of reasons. First, there are many cases, such as the benchmarks in~\cite{benchmarks} and programs in reinforcement learning~\cite{qlearn}, where linear assertions are not enough and verification goals require higher-degree polynomial inequalities. Second, even when the desired assertions are linear, they might not be provable by means of a linear \emph{inductive} invariant, i.e.~the inductivity might require non-linearity. Finally, many programs have polynomial assignments and guards. For such programs, even when looking for linear inductive invariants, our approach is the first applicable method with completeness guarantees.

\paragraph{Motivation for Polynomial Invariants} Given that polynomial invariants provide greater expressiveness in comparison with linear invariants, they improve solutions to many classical problems, such as the following:
\setdefaultleftmargin{3mm}{0em}{0em}{0em}{0em}{0em}
\begin{compactitem}
	\item \emph{Safety Verification.} This is one of the most well-studied model checking problems: Given a program and a set of safety assertions that must hold at specific points of the program, prove that the assertions hold or report that they might be violated by the program. Many existing approaches for safety verification rely on invariants to prove the desired assertions (see~\cite{DBLP:books/daglib/0080029,alur2006predicate,padon2016ivy,albarghouthi2012ufo}). In these cases, weak invariants can lead to an increase in false positives, i.e.~if the supplied invariants are inaccurate and grossly overestimate the program's behavior, then the verifier might falsely infer that a true assertion can be violated. 
	\item \emph{Termination Analysis.} A principal approach in proving termination of programs is to synthesize ranking functions~\cite{floyd1993assigning}. Virtually all synthesis algorithms for ranking functions depend on invariants, e.g.~\cite{coloon2001synthesis,bradley2005linear,chen2007discoveringictac}. Having inaccurate invariants, such as linear instead of polynomial, can lead to a failure in the synthesis and hence inability to prove termination. The same point also applies to termination analysis of probabilistic programs~\cite{chakarov2013probabilistic,wang2019cost,huang2019modular}.
	\item \emph{Inferring Complexity Bounds.} Another fundamental problem is to find automated algorithms that infer asymptotic complexity bounds on the runtime of (recursive) programs. Current algorithms for tackling this problem, such as~\cite{cav17}, rely heavily on invariants and their accuracy. Inaccurate invariants can lead to an over-approximation of complexity or even failure to synthesize any complexity bound.
\end{compactitem}
These points not only justify the use of polynomial invariants, but also the need for completeness guarantees. Previous state-of-the-art approaches in polynomial invariant generation either lack such guarantees or have doubly-exponential runtime and cannot be applied even to toy programs.

\paragraph{Our Contribution} We consider two variants of the invariant generation problem. Informally, the \emph{weak} variant asks for an optimal invariant w.r.t.~a given objective function, while the \emph{strong} variant asks for a representative set of all invariants. Our contributions are as follows:
\begin{compactitem}
	\item \emph{Soundness and Semi-completeness.} We present a sound and semi-complete method to generate polynomial invariants for programs with polynomial updates. Our completeness requires a compactness condition that is satisfied by all real-world programs (Remark~\ref{rem:compact}). We also show that, using the standard notions of pre and post-conditions, our method can be extended to handle recursion as well.
	\item \emph{Theoretical Complexity.} We show that the worst-case complexity of our procedure is \emph{subexponential} if we consider polynomial invariants with rational coefficients. In comparison, complexity of the procedure in~\cite{kapur2004automatically} is doubly-exponential and the approach of~\cite{sriram}, which is sound and complete for \emph{linear} invariants, has \emph{exponential} complexity, whereas we show how to generate \emph{polynomial} invariants in \emph{subexponential} time. 
	\item \emph{Practical Approach.} We present a polynomial-time reduction from weak invariant generation to quadratic programming (QCLP). Solving QCLPs is an active area of research in optimization and there are many industrial solvers for handling its real-world instances. Using our algorithm, practical improvements to such solvers carry over to polynomial invariant generation. 
\end{compactitem}
Hence, our main contribution is theoretical, i.e.~presenting a sub-exponential sound and \emph{semi-complete} method for generating polynomial invariants. Moreover, we also demonstrate the applicability of our approach by providing experimental results on several academic examples from~\cite{benchmarks} that require polynomial invariants. Unsurprisingly, we observe that our approach is slower than previous sound but incomplete methods, so there is a trade-off between completeness and efficiency. However, we expect practical improvements in solving QCLPs to narrow the efficiency gap in the future. On the other hand, the only previous complete method, proposed in~\cite{kapur2004automatically}, is extremely impractical and cannot handle any of our benchmarks, not even our toy running example.

\paragraph{Techniques} While the approaches of~\cite{sriram,kapur2004automatically} use Farkas' lemma and quantifier elimination to generate invariants, our technique is based on a positivstellensatz. Our method replaces the quantifier elimination step with either (i)~an algorithm of~\cite{grigoriev1988solving} for characterizing solutions of systems of polynomial inequalities or (ii)~a reduction to QCLP.

\subsection{Related works}
 Automated invariant generation has received much attention in the past years, and various classes of approaches have been proposed, including recurrence analysis~\cite{DBLP:journals/pacmpl/KincaidCBR18,DBLP:conf/fmcad/FarzanK15,HumenbergerJK17,DBLP:conf/pldi/KincaidBBR17}, abstract interpretation~\cite{DBLP:conf/sas/BagnaraRZ05,DBLP:conf/sas/ChakarovS14,DBLP:journals/scp/Rodriguez-CarbonellK07,DBLP:conf/esop/CousotCFMMMR05,DBLP:journals/ipl/Muller-OlmS04}, constraint solving~\cite{kapur2004automatically,DBLP:conf/sas/KatoenMMM10,chencav15,DBLP:conf/atva/FengZJZX17,sriram,DBLP:conf/popl/SankaranarayananSM04,rodriguez2004automatic,DBLP:conf/atva/OliveiraBP16,DBLP:conf/popl/ChatterjeeNZ17,DBLP:journals/fcsc/YangZZX10,DBLP:conf/vmcai/Cousot05,DBLP:journals/fcsc/LinWYZ14}, inference~\cite{DBLP:conf/vmcai/GulwaniSV09,DBLP:conf/oopsla/DilligDLM13,DBLP:journals/fmsd/SharmaA16}, interpolation~\cite{DBLP:conf/tacas/McMillan08}, symbolic execution~\cite{DBLP:conf/icse/CsallnerTS08}, dynamic analysis~\cite{DBLP:conf/icse/NguyenKWF12} and learning~\cite{DBLP:conf/popl/0001NMR16}.

\begin{table*}
	\resizebox{\linewidth}{!}{
		
			  \setlength\extrarowheight{0pt}
			\begin{tabular}{|c|c|c|c|c|c|c|c|c|c|}
				\hline
				\textbf{Approach} & \makecell{\textbf{Assignments} \textbf{and} \textbf{Guards}} & \textbf{Invariants} & \textbf{Nondet} & \textbf{Rec} & \textbf{Prob}  & \textbf{Sound}  & \textbf{Complete} & \textbf{Weak}& \textbf{Strong}\\
				\hline
				\hline
				This Work & Polynomial & Polynomial & $\yes$ & $\yes$ & $\no$ & $\yes$ & $\yes^\blacklozenge$  & \makecell{$\yes$\\ QCLP} & \makecell{$\yes$ \\ Subexp}\\
				\hline
				
				\makecell{\cite{sriram}\\ CAV'03} & Linear$^c$ & Linear & $\yes$ & $\no$ & $\no$ & $\yes$ & $\yes$ & \makecell{$\yes$ \\ Exp$^\dagger$} & \makecell{$\yes$\\ Exp$^\dagger$}\\
				\hline
				
				\makecell{\cite{kapur2004automatically} \\ ACA'04} & Polynomial & Polynomial & $\yes$ & $\yes$ & $\no$ & $\yes$ & $\yes$ & \makecell{$\yes$\\ 2Exp} & \makecell{$\yes$\\2Exp}\\ \hline
				
				\makecell{\cite{dill}\\OOPSLA'13} & General & Linear (Presburger)& $\yes$ & $\yes$ & $\no$ & $\yes$ & $\no$ & $\no$ & $\no$\\ \hline
				
				\makecell{\cite{DBLP:conf/atva/FengZJZX17}\\ ATVA'17} & Polynomial & Polynomial & $\no$ & $\no$ & $\yes$ & $\yes$ & $\yes^a$ & \makecell{$\yes$ \\ Poly} & $\no$\\
				\hline
				
				\makecell{\cite{joel}\\LICS'18} & Linear$^\ddagger$ & Polynomial \emph{Equalities} & $\yes$ & $\no$ & $\no$ & $\yes^\ddagger$ & $\yes^\ddagger$ & $\no$ & $\yes^{\ddagger,b}$ \\
				\hline
				
				\makecell{\cite{DBLP:journals/pacmpl/KincaidCBR18}\\ POPL'18} & Polynomial, Exponential, Logarithmic & Polynomial,  Exponential, Logarithmic & $\yes$ & $\yes$ & $\no$ & $\yes$ & $\no$ & $\no$ & $\no$ \\
				\hline
				
				\makecell{\cite{rodriguez2004automatic}* \\ ISSAC'04} & Polynomial, Exponential & Polynomial \emph{Equalities} & $\yes$ & $\no$ & $\no$ & $\yes$ & $\yes$ & \makecell{$\yes^b$} & \makecell{$\yes^b$} \\
				\hline
				
				\makecell{\cite{DBLP:conf/popl/SankaranarayananSM04}\\ POPL'04} & Polynomial$^c$ & Polynomial \emph{Equalities} & $\yes$ & $\no$ & $\no$ & $\yes$ & $\yes^b$ & \makecell{$\yes^b$} & \makecell{$\yes^b$} \\
				\hline
				
				\makecell{\cite{DBLP:conf/fmcad/FarzanK15}\\FMCAD'15} & \makecell{General$^d$} & \makecell{General$^d$} & $\yes$ & $\no^e$ & $\no$ & $\yes$ & $\no$ & $\no$ & $\no$\\
				\hline 
				
				\makecell{\cite{DBLP:conf/pldi/KincaidBBR17}\\PLDI'17} & General & General & $\yes$ & $\yes$ & $\no$ & $\yes$ & $\no$ & $\no$ & $\no$\\
				\hline
				
				\makecell{\cite{DBLP:conf/atva/OliveiraBP16}\\ATVA'16} & Polynomial, \textit{Without Conditional Branching} & Polynomial \emph{Equalities} & $\yes$ & $\no$ & $\no$ & $\yes$ & $\yes$ & \makecell{$\yes$\\Poly} & \makecell{$\yes$\\Poly} \\
				\hline
				
				\makecell{\cite{HumenbergerJK17}* \\ ISSAC'17} & Polynomial$^\ddagger$ & Polynomial \emph{Equalities} & $\yes$ & $\no$ & $\no$ & $\yes$ & $\yes^\ddagger$ & $\yes^{\ddagger, b}$ & $\yes^{\ddagger, b}$ \\
				\hline
				
				\makecell{\cite{adje2015property}$^*$ \\ SAS'15} & Polynomial & Polynomial & $\no$ & $\no$ & $\no$ & $\yes$ & $\no$ & $\no$ & $\no$\\ \hline
			\end{tabular}
	}

	\begin{flushleft}
		\begin{footnotesize}
			\begin{multicols}{2}
			$^\blacklozenge$ Semi-complete, assuming compactness (see Remark~\ref{rem:compact} and Lemma~\ref{lem:complete})\\
			$^\dagger$ Generates a system of quadratic inequalities, but then applies quantifier elimination, leading to exponential runtime. \\
			$^\ddagger$ Treats branching conditions as non-determinism.\\
			$*$ Does not support nested loops.\\ \columnbreak
			$^a$ Semi-complete\\
			$^b$ Uses Gr\"{o}bner basis computations (super-exponential in worst-case).\\
			$^c$ Considers general transition systems instead of programs.\\
			$^d$ Handles non-linearity using linearization heuristics.\\
			$^e$ Can be extended to handle recursion (see~\cite{DBLP:conf/pldi/KincaidBBR17}).\\
			\end{multicols}
		\end{footnotesize}
	\end{flushleft}
	\caption{Summary of approaches for invariant generation.}
	\label{tab:comp}
\end{table*}

\paragraph{Summary} A summary of the results of the literature w.r.t.~types of assignments, type of generated invariants, programming language features that can be handled (i.e.~non-determinism, probability and recursion), soundness, completeness, and whether the approach can handle weak/strong invariant generation is presented in Table~\ref{tab:comp}. For approaches that are applicable to weak/strong invariant generation, the respective runtimes are also reported. Most previous methods are indeed incomparable with our approach, because they handle different problems, e.g.~different types of programs. We present the first applicable sound and semi-complete approach for polynomial invariant generation. Our complexity (subexponential) is not only better than the previous doubly-exponential complexity for polynomial invariants~\cite{kapur2004automatically}, it even beats the exponential complexity of complete methods for linear invariants~\cite{sriram}.

\paragraph{Recurrence Analysis} While approaches based on recurrence analysis can derive exact invariants, they are applicable to a restricted class of programs where closed-form solutions exist. Our approach does not require closed-form solutions.

\paragraph{Abstract Interpretation} This is the oldest and most classical approach to invariant generation~\cite{cousot1978automatic,cousot1977abstract} and has also been used for generating quadratic invariants~\cite{adje2010coupling}. However, unlike our approach, it cannot provide completeness, except in very special cases~\cite{giacobazzi1997completeness}. There are efficient tools and algorithms for invariant generation using abstract interpretation~\cite{vechev1,vechev2}, but they focus on generating linear invariants.

\paragraph{Constraint Solving} Our approach falls in this category. First, we handle polynomial invariants, thus extending approaches based on linear arithmetics, such as~\cite{DBLP:conf/sas/KatoenMMM10,sriram,DBLP:conf/atva/OliveiraBP16,DBLP:conf/popl/ChatterjeeNZ17}.
Second, we generate invariants consisting of polynomial \emph{inequalities}, whereas several previous approaches synthesize polynomial \emph{equalities}~\cite{DBLP:conf/popl/SankaranarayananSM04,rodriguez2004automatic}.
Third, our approach is semi-complete, thus it is more accurate than approaches with relaxations (e.g.~\cite{DBLP:conf/vmcai/Cousot05,DBLP:journals/fcsc/LinWYZ14}).
Fourth, compared to previous complete approaches that solve formulas in the first-order theory of reals (e.g.~\cite{chencav15,DBLP:journals/fcsc/YangZZX10,kapur2004automatically}) to generate invariants, our approach has lower complexity, i.e.~our approach is subexponential, whereas they take exponential or doubly-exponential time. Another notable work is~\cite{qlearn} that synthesizes barrier certificates for the verification of reinforcement learning methods. Compared to~\cite{qlearn}, our approach is not restricted to barrier certificates and can handle non-convex invariants, whereas~\cite{qlearn} relies on~\cite{mosek} and requires convexity.

\paragraph{Approaches in Dynamical Systems} Similar techniques have also been applied in the context of continuous and hybrid dynamical systems~\cite{dyn1,dyn2,dyn3}. However, they ensure neither completeness nor subexponential complexity. 

\paragraph{Comparison with~\cite{DBLP:conf/atva/FengZJZX17}} Finally, we compare our approach with the most related work, i.e.~\cite{DBLP:conf/atva/FengZJZX17}. A main difference is that our approach can find a representative set of all solutions, but~\cite{DBLP:conf/atva/FengZJZX17} might miss some solutions, i.e.~it only guarantees to find at least one solution as long as the problem is feasible.
In terms of techniques,~\cite{DBLP:conf/atva/FengZJZX17} uses Stengle's positivstellensatz, while we use Putinar's positivstellensatz and the algorithm of Grigor'ev~\cite{grigoriev1988solving}.
Moreover,~\cite{DBLP:conf/atva/FengZJZX17} considers the class of probabilistic programs without non-determinism and only focuses on single probabilistic while loops, while we consider programs in general form, with non-determinism and recursion, but without probability.

	\section{Illustrative Example}

Before going into technical details, we first illustrate the main ideas and insights behind our approach using a very simple example. Consider the following program:

\lstset{language=prog}
\lstset{tabsize=4}
\begin{lstlisting}[aboveskip=4pt,belowskip=2pt,mathescape,basicstyle=\small,frame=tb]
$\text{Precondition:}$ $100 - y^2 \geq 0$
if $x^2 - 100 \geq 0$ then
	$\text{Invariant:}$ $c_1 \cdot y^2 + c_2 \cdot y + c_3 \geq 0$
	$x$ := $y$
else
	$\text{Invariant:}$ $c_4 \cdot x^2 + c_5 \cdot x + c_6 \geq 0$
	skip
fi
$\text{Postcondition:}$ $c_7 \cdot x + c_8 \geq 0$
\end{lstlisting}

There are two program variables, namely $x$ and $y$. A precondition $100 - y^2 \geq 0$ is assumed to hold at the beginning of the program, and the goal is to synthesize a postcondition and an invariant for each of the branches of the \textbf{if} statement. Moreover, a template is given for each of the desired expressions, e.g.~inside the \textbf{then} branch, we are interested in synthesizing an invariant of the form $c_1 \cdot y^2 + c_2 \cdot y + c_3 \geq 0$, where the $c_i$'s are unknown coefficients, i.e.~the goal is to find values for the $c_i$'s so that this expression becomes an invariant. To do this, it suffices to synthesize values for the $c_i$'s such that the assertion at each point of the program can be deduced from those at its predecessors. More concretely:
\begin{compactenum}[(i)]
	\item $100-y^2 \geq 0 ~\wedge~ x^2-100 \geq 0 \Rightarrow c_1 \cdot y^2 + c_2 \cdot y + c_3 \geq 0$, i.e.~the invariant should hold when we transition inside the \textbf{then} branch.
	\item $100-y^2 \geq 0 ~\wedge~ 100-x^2 > 0 \Rightarrow c_4 \cdot x^2 + c_5 \cdot x + c_6 \geq 0$, i.e.~the invariant should hold when we transition inside the \textbf{else} branch.
	\item $c_1 \cdot y^2 + c_2 \cdot y + c_3 \geq 0 \Rightarrow c_7 \cdot y + c_8 \geq 0$, i.e.~the postcondition should hold when we exit the \textbf{then} branch. Note that the assignment $x := y$ is applied to the RHS.
	\item $c_4 \cdot x^2 + c_5 \cdot x + c_6 \geq 0 \Rightarrow c_7 \cdot x + c_8 \geq 0$, i.e.~the postcondition should hold when we exit the \textbf{else} branch.
\end{compactenum}
One ad-hoc way to satisfy the constraints above is to force the RHS polynomial expression to be a nonnegative combination of the LHS polynomials, e.g.~in (i), we can set $c_1=-1, c_2=0, c_3=100$, essentially making the RHS polynomial equal to the first LHS polynomial. Similarly, in (ii), we can set $c_4=-1, c_5=0, c_6=100.$ However, this cannot work for (iii). To handle this constraint, note that, without loss of generality, we can add any tautology to our assumptions. For example, we know that $(a \cdot y - b)^2 \geq 0$ holds for all real numbers $a$ and $b$, so we prove

 $(a \cdot y - b)^2 \geq 0 ~\wedge~ c_1 \cdot y^2 + c_2 \cdot y + c_3 \geq 0 \Rightarrow c_7 \cdot y + c_8 \geq 0.$
  To solve the latter, we can simply let
  
   $c_7 \cdot y + c_8 = (a \cdot y - b)^2 + d \cdot (c_1 \cdot y^2 + c_2 \cdot y + c_3),$
   
    \noindent where $d$ is a nonnegative real number. Let us expand the RHS to get $c_7 \cdot y + c_8 = a^2 \cdot y^2 - 2 \cdot a \cdot b \cdot y + b^2 + c_1 \cdot d \cdot y^2 + c_2 \cdot d \cdot y + c_3 \cdot d.$ Note that this is an equality between two polynomials over the variable $y$. These polynomials are equal iff \emph{they have the same coefficient for each power of $y$}, therefore this equality is equivalent to the following system:
\begin{compactitem}
	\item $0 = a^2 + c_1 \cdot d$, i.e. the coefficients of $y^2$ should be equal;
	\item $c_7 = - 2 \cdot a \cdot b + c_2 \cdot d$, i.e.~the coefficients of $y$ are equal;
	\item $c_8 = b^2 + c_3 \cdot d$, i.e.~the constant factors should be the same.
\end{compactitem}
We can now use a quadratic programming solver, together with the values we already have for $c_1, c_2, c_3$ from the previous steps, to obtain one possible solution, e.g.~$c_7=-1, c_8=10, a=\frac{1}{2\sqrt{5}}, b=\sqrt{5}, d=\frac{1}{20}.$ We can solve (iv) similarly. Putting everything together, we have:
	\lstset{language=prog}
	\lstset{tabsize=4}
	\begin{lstlisting}[frame=tb,aboveskip=4pt,belowskip=4pt,mathescape,basicstyle=\small]
$\text{Precondition:}$ $100 - y^2 \geq 0$
if $x^2 - 100 \geq 0$ then
	$\text{Invariant:}$ $ -y^2 + 100 \geq 0$
	$x$ := $y$
else
	$\text{Invariant:}$ $ -x^2 + 100 \geq 0$
	skip
fi
$\text{Postcondition:}$ $ 10 - x \geq 0$
	\end{lstlisting}

To obtain this, we had to find values for $c_i$'s and proofs that conditions (i)--(iv) above hold when we plug in these values. The proofs for (i) and (ii) are easy, because the RHS polynomial is already assumed to be nonnegative in the LHS. For (iii) and (iv) we had to become more creative and add suitable tautologies to the LHS. For example, we proved (iii) by showing that $10 - y = \left(\frac{1}{2 \sqrt{5}} \cdot y - \sqrt{5}\right)^2 + \frac{1}{20} \cdot  \left(-y^2+100\right)$, hence the assertion $-y^2+100\geq 0 \Rightarrow 10-y\geq 0$ holds, because the RHS is a nonnegative combination of the LHS and an always-nonnegative polynomial $\left(\frac{1}{2 \sqrt{5}} \cdot y - \sqrt{5}\right)^2.$

Our approach in this paper generalizes the simple ideas above. Given a program, we cannot be sure about the right template to use at each point, so we instead use the most general template, i.e.~our template polynomials contain all possible monomials up to a certain degree. Then, we write the constraints that ensure these templates become a valid inductive invariant (such as (i)--(iv) above). Afterwards, we have to synthesize suitable values for the unknown coefficients ($c_i$'s) and prove that all the required constraints hold. In general, when we want to prove a constraint of the form $g_1\geq 0, g_2 \geq 0, \ldots, g_m \geq 0 \Rightarrow g \geq 0$, where $g$ and $g_i$'s are polynomials, we use a technique similar to what we did for constraint (iii) above and write $g$ as a combination of $g_i$'s and sum-of-square polynomials, i.e.
$
g = h_0 + \sum_{i=1}^m h_i \cdot g_i,
$
where each $h_i$ is a sum of squares and hence always nonnegative. Therefore, wherever $g_i$'s are nonnegative, it trivially follows that $g$ must also be nonnegative. Hence, our approach is sound. Moreover, a classical theorem in real algebraic geometry, called Putinar's Positivstellensatz (Theorem~\ref{thm:putinar}), helps us prove that our approach preserves completeness under certain conditions, i.e.~that any positive $g$ can be written as a combination of $g_i$'s in the form above. Using this idea, we can translate our constraints to quadratic programming in essentially the same manner we handled constraint (iii) above, i.e.~by equating the coefficients of corresponding terms on the two sides of the polynomial equality. In the following sections, we formalize and build on these simple ideas.
	
\section{Polynomial Programs and Invariants}

\subsection{Syntax and Semantics}

We consider non-deterministic recursive programs with polynomial assignments and guards. Our syntax is shown in Figure~\ref{fig:syntax}. The $\star$ denotes non-deterministic branching. See Appendix~\ref{app:syntax} for more details. We fix two disjoint finite sets: the set $\pvars$ of program variables and the set $\fnames$ of functions.
 
 \begin{figure}
 	\begin{centering}
 		\resizebox{9cm}{!}{
 			$\begin{array}{rrl}
 			
 			\obj{prog} & ::= &  \obj{func} ~~\mid~~ \obj{func}~\obj{prog}\\
 			
 			\obj{func} & ::= & \obj{fname}~\kw{$($}~\obj{varlist}~\kw{$)$}~\kw{$\{$}~\obj{stmtlist}~\kw{$\}$} \\ 
 			
 			\obj{varlist} & ::= & \obj{var} ~~\mid~~ \obj{var}~\kw{$,$}~\obj{varlist}\\
 			
 			\obj{stmtlist} & ::= & \obj{stmt}  ~~\mid~~ \obj{stmt}~\kw{$;$}~\obj{stmtlist}\\
 			
 			\obj{stmt} & ::= & \kw{skip} ~~\mid~~ \obj{var}~\kw{$:=$}~\obj{expr}\\
 			&& \mid \kw{if}~\obj{bexpr}~\kw{then}~\obj{stmtlist}~\kw{else}~\obj{stmtlist}~\kw{fi}\\
 			&& \mid \kw{if}~\kw{$\star$}~\kw{then}~\obj{stmtlist}~\kw{else}~\obj{stmtlist}~\kw{fi}\\
 			&& \mid \kw{while}~\obj{bexpr}~\kw{do}~\obj{stmtlist}~\kw{od}\\
 			&& \mid \obj{var}:=\obj{fname}~\kw{$($}~\obj{varlist}~\kw{$)$}\\
 			&& \mid \kw{return}~\obj{expr}	
 			\end{array}$
 		}
 	\end{centering}
 	\caption{Our Syntax. See Appendix~\ref{app:syntax} for more details.}
 	\label{fig:syntax}
 \end{figure}

\paragraph{Program Counters (Labels)} We assign a unique \emph{program counter} to each statement of the program and the endpoint of every function. We also refer to program counters as \emph{labels}. We use $\labels$ to denote the set of labels. We denote the first label in a function $f$ by $\lin^f$ and the label of its endpoint by $\lout^f$.

\paragraph{Types of Labels} We partition the set $\labels$ of labels as follows:
\begin{compactitem}
	\item $\labels_a$: Labels of \textbf{a}ssignment, skip or return statements,
	\item $\labels_b$: Labels of \textbf{b}ranching (if) and while-loop statements,
	\item $\labels_c$: Labels of function \textbf{c}all statements,
	\item $\labels_d$: Labels of non-\textbf{d}eterministic branching statements,
	\item $\labels_e$: Labels of the \textbf{e}ndpoints of functions.
\end{compactitem}

\begin{example} \label{ex:running}
	 Consider the simple program in Figure~\ref{fig:running}. The numbers on the left are the labels and their subscripts denote their types. We will use this program as our running example. It contains a single function $\func{sum}$ that takes a parameter $n$ and then non-deterministically sums up some of the numbers between $1$ and $n$ and returns the summation. Our goal is to prove that the return value of $\func{sum}$ is always less than $0.5\cdot n^2 + 0.5\cdot n + 1$.

\begin{figure}
	\begin{center}
	\begin{minipage}{0.5\linewidth}
	\lstset{language=prog}
	\lstset{tabsize=4}
	\begin{lstlisting}[aboveskip=0pt,belowskip=0pt,mathescape,basicstyle=\small]
	   $\func{sum}$($n$) {
	1$_a$:	$i$ := $1$;
	2$_a$:	$s$ := $0$;
	3$_b$:	while $i \leq n$ do
	4$_d$:		if $\star$ then
	5$_a$:			$s$ := $s+i$
			else
	6$_a$:			skip
			fi;
	7$_a$:		$i$ := $i+1$
		od;
	8$_a$:	return s
	9$_e$: }
	\end{lstlisting}
	\end{minipage}
	\end{center}
\caption{A non-deterministic summation program}
\label{fig:running}
\end{figure}
\end{example}

\paragraph{New Variables} For each function $f \in \fnames$, 
whose header is of the form $f(v_1, \ldots, v_n)$, we define $n+1$ new variables $\ret^f, \bar{v}_1, \ldots, \bar{v}_n$. Informally, $\ret^f$ is the return value of the function $f$ and each variable $\bar{v}_i$ holds the value passed to the function $f$ from its caller for parameter $v_i$ without allowing $f$ to change it. We define $\pvars_*^f := \{ \ret^f, v_1, \ldots, v_n, \bar{v}_1, \ldots, \bar{v}_n \}$ and let $\pvars^f$ contain $\pvars_*^f$, and any other variable that appears in the body of $f$. Similarly, $\labels^f$ denotes the set of labels in $f$.

\paragraph{CFGs} We use standard control flow graphs as in~\cite{cav17,allen1970control}. A \emph{Control Flow Graph (CFG)} is a triple $(\fnames, \labels, \rightarrow)$ where:
\begin{compactitem} 
	
	\item $\fnames$ is the set of functions;
	
	\item
	the labels $\labels$ form the set of vertices, and
	
	\item
	$\rightarrow$ is a relation whose members are triples $(\loc, \alpha, \loc')$ in which the source label $\loc$ and the target label $\loc'$ are in the same $\labels^f$, the source label is not the end of function label, i.e.~$\loc \neq \lout^f$, and $\alpha$ is one of the following: (i)~an update function $\alpha: \vals{f} \rightarrow \vals{f}$ if $\loc \in \labels_a$, or (ii)~a propositional polynomial predicate over $\pvars^f$ if $\loc \in \labels_b$, or (iii)~$\perp$ if $\loc \in \labels_c$, or (iv)~$\star$ if $\loc \in \labels_d$.
\end{compactitem}
Intuitively, we say that a CFG $(\fnames, \labels, \rightarrow)$ is the CFG of program $P$ if (i)~for each label $\loc \in \labels$, the successors of $\loc$ in $\rightarrow$ are the labels that are in the same function as $\loc$ and can possibly be executed right after $\loc$, and (ii)~the $\alpha$'s correspond to the behavior of the program, e.g.~if $(\loc, \alpha, \loc') \in \rightarrow$, $\loc$ is an if statement and $\loc'$ is the first statement in its \kw{else} part, then $\alpha$ should be the negation of the if condition. See~\cite{cav17} for more details. Note that a return statement in a function $f$ changes the value of the variable $\ret^f$ and is succeeded by the endpoint label $\lout^f$. Figure~\ref{fig:cfg} provides the CFG of Example~\ref{ex:running}. Our semantics are defined based on a CFG in the standard manner. See Appendix~\ref{app:semantics} for details.

\begin{figure}
	\begin{center}
		\includegraphics[keepaspectratio,align=t,height=2.5cm]{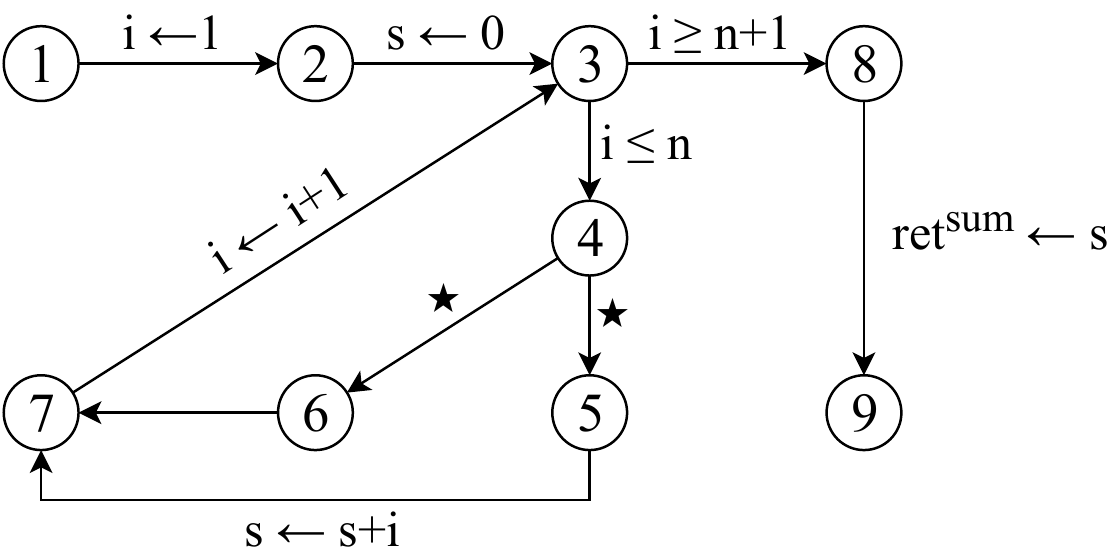}
	\end{center}
	\caption{CFG of the Program in Example~\ref{ex:running} (Figure~\ref{fig:running})}
	\label{fig:cfg}
\end{figure}

\subsection{Invariants} \label{sec:inv}

\paragraph{Pre-conditions} A \emph{pre-condition} is a function $\precond$ mapping each label $\loc \in \locs^f$ of the program to a conjunctive propositional formula $\precond(\loc) := \bigwedge_{i=0}^m \left( \mathfrak{e}_i \geq 0 \right)$, where each $\mathfrak{e}_i$ is an arithmetic expression over the set $\pvars^f$ of variables\footnote{Classically, pre-conditions are only defined for the first labels of functions, but we allow pre-conditions for every label. This setting is strictly more general, given that one can let $\precond(\loc) = \mathbf{true}$ for every other label.}\footnote{The value of every uninitialized (non-parameter) variable $v$ is always $0$ when the program reaches $\lin^f$. Hence, w.l.o.g.~we assume that $\precond(\lin^f)$ contains the assertions $v \geq 0$ and $-v \geq 0$. Similarly, we assume that for every parameter $v$ of $f$, we have the assertions $v-\bar{v}\geq 0$ and $\bar{v}-v\geq 0$ in $\precond(\lin^f)$.}. 
Intuitively, a pre-condition specifies a set of requirements for the runs of the program, i.e.~a run of the program is valid if it always respects the pre-condition, and a run that does not satisfy the pre-condition is considered to be invalid or impossible and is ignored in our analysis.

\paragraph{Post-conditions} A \emph{post-condition} is a function $\postcond$ that maps each program function $f$ of the form $f(v_1, \ldots, v_n)$ to a conjunctive propositional formula $\postcond(f) := \bigwedge_{i=0}^m \left( \mathfrak{e}_i > 0 \right)$ over $\{ \ret^f, \bar{v}_1, \ldots, \bar{v}_n \}$. 
Informally, a post-condition characterizes the return value $\ret^f$ of each function $f$ based on the values of parameters passed to $f$ when it was called.

\begin{remark}[Strict and Non-strict Inequalities]
	In the definitions above the inequalities in post-conditions are strict, whereas pre-conditions contain non-strict inequalities. There is a technical reason behind this choice, having to do with Theorem~\ref{thm:putinar}. Basically, Putinar's positivstellensatz characterizes \emph{strictly positive} polynomials over a \emph{closed} semi-algebraic set. Therefore, this subtle difference in the definitions of pre and post-conditions is necessary for our completeness result (Lemma~\ref{lem:complete}). However, our soundness does not depend on it. 
\end{remark}

\paragraph{Model of Computation} We consider programs in which variables can have arbitrary real values. However, some of our results only hold if the variable values are bounded. In such cases we explicitly mention that the result holds on ``bounded reals''. The formal interpretation of this point is that there exists a constant value $c \in \mathbb{R}^+$ such that for every label $\loc \in \locs^f$ and every variable $v \in \pvars^f,$ the pre-condition $\precond(\loc)$ contains the inequalities $-c \leq v \leq c.$ In other words, in the bounded reals model of computation, a variable overflows if its value becomes more than $c$ (resp.~underflows if its value becomes less than $-c$), and any run containing an overflow or underflow is considered invalid. As a direct consequence, in every valid run, when we are in a function $f$ and the valuation of variables is $\nu$, we have $\norm{\nu}_2 \leq c \sqrt{\vert \pvars^f \vert}.$ Hence, when discussing bounded reals, we assume every pre-condition contains the inequality $\norm{\pvars^f}_2^2 \leq c^2 {\vert \pvars^f \vert}$, too.\footnote{More concretely, if $\pvars^f = \{v_1, v_2, \ldots, v_n\}$, then the pre-condition contains the inequality $v_1^2 + v_2^2 + \ldots + v_n^2 \leq c^2 \cdot n.$} Note that this inequality is entailed by the bounds on values of individual variables. We will later use it to satisfy the requirements of our positivstellensatz (Theorem~\ref{thm:putinar}).

\paragraph{Invariants} An \emph{invariant} is a function $\invariant$ mapping each label $\loc \in \locs^f$ of the program to a conjunctive propositional formula $\inv(\loc) := \bigwedge_{i=0}^m \left( \mathfrak{e}_i > 0 \right)$ over $\pvars^f$, such that whenever a valid run reaches $\loc$, $\inv(\loc)$ is satisfied. 

\paragraph{Positivity Witnesses} Let $\mathfrak{e}$ be an arithmetic expression on program variables and $\phi = \bigwedge_{i=0}^m (\mathfrak{e}_i \bowtie_i 0)$ for $\bowtie_i~\in \{ >, \geq\}$, such that for every valuation $\nu$, we have $\nu \satisfies \phi \Rightarrow \mathfrak{e}(\nu) > 0$. We say that a constant $\epsilon > 0$ is a \emph{positivity witness} for $\mathfrak{e}$ w.r.t.~$\phi$ if for every valuation $\nu$, we have $\nu \satisfies \phi \Rightarrow \mathfrak{e}(\nu) > \epsilon$. In the sequel, we limit our focus to inequalities that have positivity witnesses. Intuitively, this means that we consider invariants of the form $\bigwedge_{j=1}^m (\mathfrak{e}_j > 0)$ where the values of $\mathfrak{e}_j$'s in the runs of the program cannot get arbitrarily close to $0$\footnote{Note that this is a very minor restriction, in the sense that if $\mathfrak{e} > 0$ is an invariant, then so is $\mathfrak{e}+\epsilon>0$. We are unable to find invariants $\mathfrak{e}>0$ where $\mathfrak{e}$ can get arbitrarily close to $0$ over all valid runs of the program. However, in such cases, we can synthesize $\mathfrak{e}+\epsilon>0$ for \emph{any} positive $\epsilon$, as long as $\mathfrak{e}+\epsilon>0$ is also part of an inductive invariant.}. 

\paragraph{Inductive Assertion Maps} An \emph{inductive assertion map} for a non-recursive program is a function $\ind$ mapping each label $\loc \in \locs^f$ of the program to a conjunctive propositional formula $\ind(\loc) := \bigwedge_{i=0}^m \left( \mathfrak{e}_i > 0 \right)$ over $\pvars^f$, such that the following two conditions hold:
\begin{compactitem}
	\item \emph{Initiation.}  $\precond(\lin^f) \Rightarrow \ind(\lin^f)$
	\item \emph{Consecution.} In every valid transition of the program from $\loc_0$ to $\loc_1$, if $\ind(\loc_0)$ holds at $\loc_0$, then $\ind(\loc_1)$ must hold at $\loc_1$. 
	Intuitively, this condition means that the inductive assertion map cannot be falsified by running a valid step of the execution of the program.
\end{compactitem}

It is well-known that every inductive assertion map is an invariant. So, inductive assertion maps are often called \emph{inductive invariants}, too. See Appendix~\ref{app:invariant} for a short proof and more formal definitions.

\begin{example} \label{ex:prepostinv}
	Consider the program in Figure~\ref{fig:running}. Assuming that we have the pre-condition $n \geq 0$ at label $1$, it is easy to show that for any $\epsilon>0$, $\ind(\loc) := (n+\epsilon > 0 \wedge i+\epsilon > 0 \wedge s+\epsilon > 0)$ for all $\loc \in \{1, \ldots, 9\}$ is an inductive assertion map, i.e.~it holds at the beginning of the program and no valid execution step falsifies it. Hence, it is also an invariant. Moreover, using this invariant, one can prove the post-condition $\ret^\func{sum} + \epsilon > 0.$
\end{example}

We extend the notion of inductive assertion maps to recursive programs using abstract paths:

\paragraph{Abstract Paths} Informally, an \emph{abstract path} is an execution path in which all function calls are abstracted, i.e.~removed and replaced with a simple transition in the parent function that respects the pre and post-condition of the called function. The intuition behind an abstract path is to use the pre/post-condition of a function as an over-approximation of its behavior, and hence avoid running the function itself. See Appendix~\ref{app:semantics} for a formal definition. 

\begin{example}
	Assume the pre and post-conditions of Example~\ref{ex:prepostinv} and consider a program $P$ whose main function $\mainFunction$ calls the function $\func{sum}$ of Figure~\ref{fig:running}. Suppose  $\func{sum}$ is called at label $\loc$, i.e.~the statement at $\loc$ is $y := \func{sum}(x)$, and $(\loc, \perp, \loc') \in \rightarrow.$ In a normal run of $P$, when the program reaches $\loc$, 
	control moves to $\func{sum}.$ In contrast, in an abstract path starting at $\loc$,
	 control directly moves to $\loc'$, provided that no variable other than $y$ gets its value changed and that the pre-condition and post-condition are satisfied. For example, the following sequences are abstract paths:
	 
	$
	\begin{matrix}
	\langle \langle \left( \mainFunction, \loc, x=3, y=0 \right) \rangle, \langle \left( \mainFunction, \loc', x=3, y=\epsilon \right) \rangle \rangle\\
	\langle \langle \left( \mainFunction, \loc, x=3, y=1 \right) \rangle, \langle \left( \mainFunction, \loc', x=3, y=99.9 \right) \rangle \rangle
	\end{matrix}
	$

\smallskip
The latter configuration cannot happen in any valid run, but it does not violate the conditions of an abstract path. This is because the post-condition in this example is very weak and hence abstract paths grossly overestimate valid paths. As we will see, our algorithms synthesize stronger post-conditions as part of the invariant generation process. Finally, the following are \emph{not} abstract paths:

	$
	\begin{matrix}
	\langle \langle \left( \mainFunction, \loc, x=-1, y=1 \right) \rangle, \langle \left( \mainFunction, \loc', x=-1, y=10 \right) \rangle \rangle \\ \footnotesize{\text{Reason: It violates }\precond(\lin^\func{sum})[n \leftarrow x, \bar{n} \leftarrow x]} \vspace{2mm}\\ 

	\langle \langle \left( \mainFunction, \loc, x=1, y=1 \right) \rangle, \langle \left( \mainFunction, \loc', x=1, y=-1 \right) \rangle \rangle \\ \footnotesize{\text{Reason: It violates }\postcond(\func{sum})[\bar{n} \leftarrow x, \ret^\func{sum} \leftarrow y]} \vspace{2mm}\\
	
	\langle \langle \left( \mainFunction, \loc, x=3, y=3 \right) \rangle, \langle \left( \mainFunction, \loc', x=2, y=4 \right) \rangle \rangle \\ \footnotesize{\text{Reason: It changes the value of } x}
	\end{matrix}
	$
\end{example}

\paragraph{Recursive Inductive Invariants} Given a recursive program $P$ and a pre-condition $\precond$, a \emph{recursive inductive invariant} is a pair $(\postcond, \ind)$ where $\postcond$ is a post-condition and $\ind$ is a function that maps every label $\loc \in \locs^f$ of the program to a conjunctive propositional formula $\ind(\loc) := \bigwedge_{i=0}^m \left( \mathfrak{e}_i > 0 \right)$, such that the following requirements are met:
\begin{compactitem}
	\item \emph{Initiation.} For every function $f$, we have $ \precond(\lin^f) \Rightarrow \ind(\lin^f).$
	\item \emph{Consecution.} For every valid unit-length \emph{abstract} path that transitions from $\loc_0$ to $\loc_1$, if $\ind(\loc_0)$ holds at $\loc_0$, then $\ind(\loc_1)$ must hold at $\loc_1$.
	
	\item \emph{Post-condition Consecution.} For every valid unit-length \emph{abstract} path that starts at $\loc_0 \in \locs^f$ and ends at the endpoint label $\lout^f$, if $\ind(\loc_0)$ holds at $\loc_0$, then the post-condition $\postcond(f)$ must hold at $\lout^f.$
\end{compactitem}
Following an argument similar to the case of inductive invariants, if $(\postcond, \ind)$ is a recursive inductive invariant, then $\ind$ is an invariant. See Appendix~\ref{app:invariant} for details.

We define our synthesis problem in terms of (recursive) inductive invariants, because the classical method for finding or verifying invariants is to consider inductive invariants that strengthen them~\cite{sriram,DBLP:books/daglib/0080029}.

\paragraph{The Invariant Synthesis Problem} \label{def:prob}
Given a program $P$, together with a pre-condition $\precond$, the invariant synthesis problem asks for (recursive) inductive invariants of a given form and size (e.g.~linear or polynomial of a given degree). The problem can be divided into two variants:
\begin{compactitem}
	\item The \emph{Strong Invariant Synthesis Problem} asks for a characterization or a representative set of all possible invariants.
	\item The \emph{Weak Invariant Synthesis Problem} provides an objective function over the invariants (e.g.~a function over the coefficients of polynomial invariants) and asks for an invariant that maximizes the objective function.
\end{compactitem} 

\paragraph{Motivation} There are several motives for defining both strong and weak invariant synthesis:
\begin{compactitem}
	\item Strong invariant synthesis can be used in compositional reasoning, e.g.~having separate representations for \emph{all} invariants of programs $P_1, P_2, \ldots, P_n,$ an invariant for their sequential composition $P_1; P_2; \ldots; P_n$ can be derived by considering each part separately.
	\item Strong invariant synthesis is computationally expensive, so in practice, weak invariant synthesis can be used to obtain invariants that are desirable (according to a given objective function). For example, in Section~\ref{sec:exp}, we use it to prove desired assertions (partial invariants) at a few points of the program by synthesizing an inductive invariant that includes them.
	\item Another use-case of weak invariant synthesis is to find bounds for a given expression $R$ at some point of the program. The objective function can be set to find the tightest possible bound. Such bounds are useful in many contexts, e.g.~if $R$ is a ranking function, then its upperbound is also a bound on the runtime of the program.
\end{compactitem} 

\paragraph{Polynomial Invariants} In the sequel, we consider the synthesis problems for \emph{polynomial} invariants and pre and post-conditions, i.e.~we assume that all arithmetic expressions used in the atomic assertions are polynomials.

	\section{Invariants for Non-recursive Programs} \label{sec:algo-nonrecursive}

 We first provide a sound and semi-complete reduction from inductive invariants to solutions of a system of quadratic equalities. Our main tool is a theorem in real semi-algebraic geometry called Putinar's positivstellensatz~\cite{putinar1993positive}.
We show that the Strong Invariant Synthesis problem can be solved in subexponential time. We also show that the Weak Invariant Synthesis problem can be reduced to QCLP.

\subsection{Mathematical Tools and Lemmas} \label{app:SOS} \label{sec:math}

 The following theorem is the main tool in our reduction:

\begin{theorem}[Putinar's Positivstellensatz~\cite{putinar1993positive}] \label{thm:putinar} Let $V$ be a finite set of variables and $g, g_1, \ldots, g_m \in \mathbb{R}[V]$ polynomials over $V$ with real coefficients. We define $\Pi := \{ x \in \mathbb{R}^V ~\mid~\forall i~~ g_i(x) \geq 0  \}$ as the set of points in which every $g_i$ is non-negative. If (i)~there exists some $g_k$ s.t. the set $\{ x \in \mathbb{R}^V ~\mid~ g_k(x) \geq 0  \}$ is compact, and (ii)~$g(x)>0$ for all $x \in \Pi$, then
	\begin{equation} \label{eq:putinar}
		g = h_0 + \sum_{i=1}^m h_i \cdot g_i
	\end{equation}
where each polynomial $h_i$ is the sum of squares of some polynomials in $\mathbb{R}[V]$, i.e.~$h_i = \sum_{j=0}^{n} f_{i,j}^2$ for some $f_{i,j}$'s in $\mathbb{R}[V]$.
\end{theorem}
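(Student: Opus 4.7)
The plan is to recast the statement in the language of quadratic modules and reduce it to the Kadison--Dubois (or Jacobi) representation theorem. Define the \emph{quadratic module} $M \subseteq \mathbb{R}[V]$ generated by $g_1,\dots,g_m$ as
\[
    M \;=\; \Bigl\{\, h_0 + \sum_{i=1}^m h_i \cdot g_i \;:\; h_i \text{ is a sum of squares in } \mathbb{R}[V] \,\Bigr\}.
\]
The conclusion \eqref{eq:putinar} is precisely the statement $g \in M$, so the whole proof amounts to showing that the polynomial $g$, which is strictly positive on the basic closed semi-algebraic set $\Pi$, lies in $M$.

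First I would establish that $M$ is \emph{Archimedean}, i.e.\ that for every $p \in \mathbb{R}[V]$ there exists an integer $N$ with $N - p \in M$ (equivalently, that $N - \sum_{v \in V} v^2 \in M$ for some $N$). This is where hypothesis (i) enters. Compactness of $\{g_k \geq 0\}$ gives a real $R$ with $R - \sum_v v^2 > 0$ on that set; one then has to exhibit an explicit representation of $R - \sum_v v^2$ as $\sigma_0 + \sigma_k g_k$ with $\sigma_0,\sigma_k$ sums of squares. This step (the actual content of Putinar's paper, going beyond Schm\"udgen) is the subtle one: it usually proceeds by bounding each variable $v$ via an identity like $(v \pm t)^2 \in M$ for suitable $t$, and then iterating and adding up. I expect this to be the main obstacle, because turning a qualitative compactness statement into an algebraic certificate is not mechanical.

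Once $M$ is Archimedean, I would invoke the Kadison--Dubois representation theorem for Archimedean preordered rings: if $M$ is an Archimedean quadratic module in $\mathbb{R}[V]$ and $g \in \mathbb{R}[V]$ satisfies $\varphi(g) > 0$ for every $\mathbb{R}$-algebra homomorphism $\varphi : \mathbb{R}[V] \to \mathbb{R}$ with $\varphi(M) \subseteq \mathbb{R}_{\geq 0}$, then $g \in M$. Since homomorphisms $\mathbb{R}[V] \to \mathbb{R}$ are exactly evaluations at points $x \in \mathbb{R}^V$, and $\varphi_x(M) \subseteq \mathbb{R}_{\geq 0}$ is equivalent to $g_i(x) \geq 0$ for all $i$ (i.e.\ $x \in \Pi$), hypothesis (ii) supplies the required positivity and the conclusion $g \in M$ follows.

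For the Kadison--Dubois step itself I would either cite it as a black box or sketch the standard argument: assume towards contradiction that $g \notin M$, use Zorn's lemma to enlarge $M$ to a maximal quadratic module $M^*$ avoiding $-g$, check that $M^* \cup (-M^*)$ is all of $\mathbb{R}[V]$ and that $M^* \cap (-M^*)$ is a prime ideal (using that $M$ is Archimedean), and observe that the residue field is then an Archimedean ordered field, hence embeds into $\mathbb{R}$. This produces a homomorphism $\varphi: \mathbb{R}[V] \to \mathbb{R}$ with $\varphi(M) \subseteq \mathbb{R}_{\geq 0}$ and $\varphi(g) \leq 0$, contradicting (ii). The Archimedean property is essential at two places here: to force primality of $M^* \cap (-M^*)$, and to ensure the ordered quotient field embeds into $\mathbb{R}$ rather than a larger real closed field.
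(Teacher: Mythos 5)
This statement is Putinar's Positivstellensatz, which the paper imports from~\cite{putinar1993positive} as a known result and never proves, so there is no in-paper proof to compare against; your proposal has to stand on its own. What you describe is the standard modern proof (Jacobi/Prestel--Delzell, or Marshall's treatment): identify the conclusion with membership of $g$ in the quadratic module $M$ generated by $g_1,\dots,g_m$, show that $M$ is Archimedean, and then conclude via the Kadison--Dubois/Jacobi representation theorem together with the observation that $\mathbb{R}$-algebra homomorphisms $\mathbb{R}[V]\to\mathbb{R}$ are exactly point evaluations, so that hypothesis (ii) supplies the needed positivity on all $M$-nonnegative characters. That architecture is correct and is how the theorem is proved in the literature.

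Two caveats. First, the Archimedeanity step, which you rightly flag as the crux, is left with a genuine gap: the variable-by-variable bounding via identities like $(v\pm t)^2\in M$ only works when explicit bound constraints already appear among the $g_i$ (as in the paper's ``bounded reals'' setting). For a general $g_k$ whose nonnegativity set is merely assumed compact, the standard way to certify $N-\sum_{v\in V} v^2\in\Sigma\mathbb{R}[V]^2+\Sigma\mathbb{R}[V]^2\cdot g_k$ is to invoke Schm\"udgen's Positivstellensatz for the single generator $g_k$ (for one generator the preordering coincides with the quadratic module), which in turn rests on the Krivine--Stengle Positivstellensatz plus W\"ormann's trick; compactness alone does not hand you the algebraic certificate. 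Second, there is a sign slip in your Kadison--Dubois sketch: to produce a character $\varphi$ with $\varphi(M)\subseteq\mathbb{R}_{\geq 0}$ and $\varphi(g)\leq 0$, one extends $M$ by Zorn's lemma to a maximal proper quadratic module $M^*$ \emph{avoiding $g$}; maximality together with $M^*\cup(-M^*)=\mathbb{R}[V]$ then forces $-g\in M^*$ and hence $\varphi(g)\leq 0$. Taking $M^*$ maximal avoiding $-g$, as written, does not yield this. With those two repairs the argument is the standard, correct proof.
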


\begin{corollary} [Proof in Appendix~\ref{app:col:put}]\label{col:put}
	Let $V, g, g_1, \ldots, g_m$ and $\Pi$ be as above. Then $g(x) > 0$ for all $x \in \Pi$ \emph{if and only if}:
	\begin{equation} \label{eq:colput}
	g = \epsilon + h_0 + \sum_{i=1}^m h_i \cdot g_i
	\end{equation}
	where $\epsilon >0$ is a real number and each polynomial $h_i$ is the sum of squares of some polynomials in $\mathbb{R}[V]$.
\end{corollary}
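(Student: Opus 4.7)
My plan is to derive the corollary in two directions, using Putinar's positivstellensatz (Theorem~\ref{thm:putinar}) in the forward direction and a direct evaluation argument in the backward direction.

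For the easy direction ($\Leftarrow$), I would simply evaluate the right-hand side of \eqref{eq:colput} at an arbitrary point $x \in \Pi$. Each $g_i(x) \geq 0$ by definition of $\Pi$, each $h_i(x) \geq 0$ since $h_i$ is a sum of squares of real polynomials, and $h_0(x) \geq 0$ for the same reason. Therefore $g(x) \geq \epsilon > 0$, which gives $g(x) > 0$ as desired. This direction uses only nonnegativity of sums of squares and does not require compactness.

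For the forward direction ($\Rightarrow$), the key idea is to shift $g$ down by a small positive constant and then apply Putinar's positivstellensatz to the shifted polynomial. Observe first that $\Pi$ is compact: by hypothesis there is some $g_k$ whose nonnegativity set $S_k := \{x \in \mathbb{R}^V \mid g_k(x) \geq 0\}$ is compact, and $\Pi$ is a closed (being an intersection of closed sets) subset of $S_k$, hence compact. Since $g$ is continuous and $\Pi$ is compact, $g$ attains its infimum on $\Pi$; call this minimum $m := \min_{x\in\Pi} g(x)$. Because $g(x) > 0$ on $\Pi$, we have $m > 0$. Now pick any real $\epsilon$ with $0 < \epsilon < m$ (if $\Pi$ is empty we can take any $\epsilon > 0$, since then $g - \epsilon > 0$ vacuously on $\Pi$). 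Then the polynomial $\tilde g := g - \epsilon$ satisfies $\tilde g(x) > 0$ for every $x \in \Pi$, and the compactness hypothesis on $g_k$ is unchanged. Applying Theorem~\ref{thm:putinar} to $\tilde g$ yields a representation $\tilde g = h_0 + \sum_{i=1}^m h_i \cdot g_i$ in which each $h_i$ is a sum of squares in $\mathbb{R}[V]$. Rearranging gives exactly the desired identity $g = \epsilon + h_0 + \sum_{i=1}^m h_i \cdot g_i$.

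The main (and essentially only) obstacle is the forward direction, and within it the crux is establishing that $\min_{x\in\Pi} g(x)$ is strictly positive so that a valid shift $\epsilon$ exists; this in turn requires noting that $\Pi$ is compact, which is the small but essential observation that $\Pi \subseteq S_k$. Once the shift is in place, the rest is a direct invocation of Putinar's theorem and trivial algebra.
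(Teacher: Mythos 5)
Your proposal is correct and follows essentially the same route as the paper: the backward direction by pointwise evaluation, and the forward direction by showing $\Pi$ is compact, taking $\epsilon$ strictly below the positive minimum of $g$ on $\Pi$, and applying Theorem~\ref{thm:putinar} to $g-\epsilon$. Your explicit justification that $\Pi$ is a closed subset of the compact set $S_k$, and your handling of the empty-$\Pi$ case, are small refinements the paper leaves implicit.
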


Hence, Putinar's positivstellensatz provides a characterization of all polynomials $g$ that are positive over the closed set $\Pi$. Intuitively, given a set of atomic non-negativity assumptions $g_i(x) \geq 0$, in order to find all  polynomials $g$ that are positive under these assumptions, we only need to look into polynomials of form~(\ref{eq:colput}). Moreover, the real number $\epsilon$ in $(\ref{eq:colput})$ serves as a positivity witness for $g$.

Our algorithm also relies on the following lemma:
\begin{lemma}[Proof in Appendix~\ref{app:maths2}] \label{lm:sq}
	Given a polynomial $h \in \mathbb{R}[V]$ as input, the problem of deciding whether $h$ is a sum of squares, i.e.~whether $h$ can be written as $\sum_{i} f_i^2$ for some polynomials $f_i \in \mathbb{R}[V],$ can be reduced in polynomial time to solving a system of quadratic equalities. 
\end{lemma}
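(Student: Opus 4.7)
The plan is to write down the most general candidate SOS decomposition of $h$ with unknown real coefficients, expand it symbolically, and match the coefficients against $h$ monomial by monomial, so that the resulting constraints are quadratic in the unknowns.

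First I would handle degrees. If $\deg(h)$ is odd then $h$ cannot be a sum of squares (its leading form would have to be SOS, but every sum of squares has even degree), so in that case the reduction outputs the trivially infeasible quadratic equality $c^2 + 1 = 0$ in a fresh unknown $c$. Otherwise, set $d = \deg(h)/2$ and enumerate the monomials $m_1, \ldots, m_N \in \mathbb{R}[V]$ of degree at most $d$, where $N = \binom{|V|+d}{d}$. The structural fact I would appeal to is the Gram-matrix characterisation of sums of squares: $h$ is SOS iff there exists a symmetric positive semidefinite $N \times N$ matrix $Q$ with $h = \mathbf{m}^\top Q\, \mathbf{m}$, where $\mathbf{m} = (m_1, \ldots, m_N)^\top$. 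Given such a $Q$, any factorisation $Q = L^\top L$ (with $L \in \mathbb{R}^{N \times N}$, padded with zero rows when $\operatorname{rank}(Q) < N$) produces polynomials $f_i = \sum_{j=1}^N L_{i,j} m_j$ with $h = \sum_{i=1}^N f_i^2$. Conversely any such decomposition is itself an SOS certificate. Hence $h$ is SOS if and only if there exist real coefficients $c_{i,j}$ with $h = \sum_{i=1}^N \bigl(\sum_{j=1}^N c_{i,j} m_j\bigr)^2$.

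Next I would introduce $N^2$ real unknowns $c_{i,j}$, define $f_i := \sum_{j=1}^N c_{i,j}\, m_j$, and expand
\[
\sum_{i=1}^N f_i^{\,2} \;=\; \sum_{\mu}\Bigl(\sum_{(j,k)\,:\,m_j m_k = \mu}\;\sum_{i=1}^N c_{i,j}\,c_{i,k}\Bigr)\,\mu,
\]
where $\mu$ ranges over all monomials of degree at most $2d$. Each parenthesised coefficient $P_\mu(\mathbf{c})$ is a polynomial of total degree exactly two in the unknowns. Writing $h = \sum_\mu h_\mu\, \mu$ in the monomial basis and matching coefficients yields the quadratic system $\{\, P_\mu(\mathbf{c}) = h_\mu \,\}_\mu$ in $\mathbb{R}[\mathbf{c}]$, which by the previous paragraph is feasible over $\mathbb{R}$ iff $h$ is a sum of squares.

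For the running-time bound I would count: the number of unknowns is $N^2$, the number of equalities is at most $\binom{|V|+2d}{2d}$, and enumerating the triples $(i,j,k)$ to assemble the $P_\mu$'s costs $O(N^3)$, all polynomial in the dense size of $h$. The main point I would want to justify most carefully is the ``$N$ squares suffice'' step: it is precisely the content of the Gram-matrix characterisation together with the rank bound $\operatorname{rank}(Q) \le N$, which allows padding an SOS certificate up to exactly $N$ summands without losing expressiveness. Everything else is a mechanical symbolic expansion followed by coefficient matching.
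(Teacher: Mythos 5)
Your proposal is correct and follows essentially the same route as the paper: both rest on the Gram-matrix characterization $h = \mathbf{m}^\top Q\,\mathbf{m}$ over the monomials of degree at most $\deg(h)/2$, replace the PSD condition by an explicit factorization of $Q$, and equate coefficients to obtain a quadratic system. The only (minor) divergence is that the paper factors $Q$ via a Cholesky decomposition $Q = LL^\top$ with a lower-triangular $L$ and nonnegative diagonal, which adds the sign constraints $l_{i,i}\ge 0$, whereas your unconstrained $Q = C^\top C$ with a full matrix yields a system of equalities only (at the cost of roughly doubling the number of unknowns) and so matches the lemma's literal wording slightly more closely.
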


\subsection{Overview of the Approach}

In this section, we provide an overview of our algorithms. The next sections go through all the details. Our algorithms for Strong and Weak Invariant Synthesis are very similar. They each consist of four main steps and differ only in the last step. The steps are as follows:
\begin{compactenum}
	\item[Step 1)] First, the algorithm creates a template for the inductive invariant at each label. More specifically, it creates polynomial templates of the desired size and degree, but with \emph{unknown coefficients}. The goal is to synthesize values for these unknown coefficients so that the template becomes a valid inductive invariant.
	\item[Step 2)] The algorithm generates a set of constraints that should be satisfied by the template so as to ensure that it becomes an inductive invariant. These constraints encode the initiation and consecution requirements as in the definition of inductive invariants. Moreover, they have a very specific form: each constraint consists of polynomials $g_1, \ldots, g_m$ and $g$ and encodes the requirement that for every valuation $\nu$, if we have $g_1(\nu) \geq 0, g_2(\nu) \geq 0, \ldots, g_m(\nu) \geq 0$, then we must also have $g(\nu) > 0$.
	\item[Step 3)] Exploiting the structure of the constraints generated in the previous step, the algorithm applies Putinar's positivstellensatz to translate the constraints into quadratic equalities over the unknown coefficients.
	\item[Step 4)] The algorithm uses an external solver for handling the system of quadratic equalities generated in the previous step. In case of Strong Invariant Synthesis, the external solver would use the algorithm of~\cite{grigoriev1988solving} to provide a representative set of all invariants. In contrast, for Weak Invariant Synthesis, the external solver is an optimization suite for quadratic programming (QCLP).
\end{compactenum}

\subsection{Strong Invariant Synthesis}

We now provide a formal description of the input to our algorithm for Strong Invariant Synthesis and then present details of every step. 

\paragraph{The $\strongInvAlgo$ Algorithm} We present an algorithm $\strongInvAlgo$ that gets the following items as its input:
\begin{compactitem}
	\item A non-recursive program $P$ which is generated by the grammar in Figure~\ref{fig:syntax},
	\item A polynomial pre-condition $\precond$,
	\item Positive integers $d,$ $n$ and $\Upsilon$, where $d$ is the degree of polynomials in the desired inductive invariants, $n$ is the desired size of the invariant generated at each label, i.e.~number of atomic assertions, and $\Upsilon$ is a technical parameter to ensure semi-completeness, which will be discussed later;
\end{compactitem} 
and produces a representative set of all inductive invariants $\ind$ of the program $P$, such that for all $\loc \in \locs$, the set $\ind(\loc)$ consists of $n$ atomic assertions of degree at most $d$. Our algorithm consists of the following four steps:

\smallskip\noindent\textbf{Step 1) Setting up templates.} Let $\pvars^f = \{ v_1, v_2, \ldots, v_t\}$ and define $\monomials_d^f = \{m_1, m_2, \ldots, m_r\}$ as the set of all monomials of degree at most $d$ over $\pvars^f$, i.e.~$\monomials_d^f :=  \{ \prod_{i=1}^t v_i^{\alpha_i} ~\mid~ \forall i~~\alpha_i\in \mathbb{N}_0 ~\wedge~ \sum_{i=1}^t \alpha_i \leq d \}$. At each label $
\loc \in \locs^f$ of the program $P$, the algorithm generates a template $\eta(\loc) := \bigwedge_{i=1}^n \varphi_{\loc, i}$ where each $\varphi_{\loc, i}$ is of the form $\varphi_{\loc, i} := \left( \sum_{j=1}^r s_{\loc, i, j} \cdot m_j > 0 \right)$. Here, the $s_{\loc, i, j}$'s are new unknown variables. For brevity, we call them $s$-variables. Intuitively, our goal is to synthesize values for $s$-variables such that $\eta$ becomes an inductive invariant.

\begin{example} Consider the summation program in Figure~\ref{fig:running}. We have $\pvars^\func{sum} = \{n, \bar{n}, i, s, \ret^\func{sum} \}$. For brevity we define $r := \ret^\func{sum}$. Suppose that we want to synthesize a single quadratic assertion as the invariant at each label. In Step~1, the algorithm creates the following template for each label $\loc \in \{1, 2, \ldots, 9\}$: 
	
	\begin{equation*} 
	\scalebox{0.85}{$
	\begin{split}
	\eta(\loc) & :=~ s_{\loc, 1, 1} + s_{\loc, 1, 2} \cdot n + s_{\loc, 1, 3} \cdot \bar{n} + s_{\loc, 1, 4} \cdot i + s_{\loc, 1, 5} \cdot s  + s_{\loc, 1, 6} \cdot r +  \\
	& s_{\loc, 1, 7} \cdot n^2 + s_{\loc, 1, 8} \cdot n \cdot \bar{n} + s_{\loc, 1, 9} \cdot n \cdot i + s_{\loc, 1, 10} \cdot n \cdot s + s_{\loc, 1, 11} \cdot n \cdot r +\\
	& s_{\loc, 1, 12} \cdot \bar{n}^2 + s_{\loc, 1, 13} \cdot \bar{n} \cdot i + s_{\loc, 1, 14} \cdot \bar{n} \cdot s + s_{\loc, 1, 15} \cdot \bar{n} \cdot r+ s_{\loc, 1, 16} \cdot i^2 +\\
	& s_{\loc, 1, 17} \cdot i \cdot s + s_{\loc, 1, 18} \cdot i \cdot r+ s_{\loc, 1, 19} \cdot s^2 + s_{\loc, 1, 20} \cdot s \cdot r + s_{\loc, 1, 21} \cdot r^2 > 0.\\
 	\end{split}$}
	\end{equation*}
\end{example}

\smallskip\noindent\textbf{Step 2) Setting up constraint pairs.} For each transition $e = (\loc, \alpha, \loc')$ of the CFG of $P$, the algorithm constructs a set $\Lambda_e$ of \emph{constraint pairs} of the form $\lambda = (\Gamma, g)$ where $\Gamma = \bigwedge_{i=1}^m \left(g_i \geq 0\right)$ and $g, g_1, \ldots, g_m$ are polynomials with unknown coefficients (based on the $s$-variables).   Intuitively, a condition pair $(\Gamma, g)$ encodes the following condition: \begin{equation*}
\scalebox{0.85}{$
\forall \nu \in \mathbb{R}^{f} ~~\nu \satisfies \Gamma \Rightarrow g(\nu) > 0~~ \equiv ~~\forall \nu \in \mathbb{R}^{\pvars^f}  \left( \forall g_i \in \Gamma ~~ g_i(\nu)\geq 0 \right) \Rightarrow g(\nu) > 0.$}
\end{equation*}
The construction is as follows (note that all computations are done symbolically):
\begin{compactitem}
	\item If $\loc \in \locs_a$, for every polynomial $g$ for which $g>0$ appears in $\eta(\loc')$, the algorithm adds the condition pair $(\precond(\loc) \wedge \eta(\loc) \wedge (\precond(\loc') \circ \alpha), g \circ \alpha)$ to $\Lambda_e$. Note that $\alpha$ is an update function that assigns a polynomial to every variable and hence the constraint pair can be computed symbolically.
	\item If $\loc \in \locs_b$, then $\alpha$ is a propositional predicate. The algorithm  writes $\alpha$ in disjunctive normal form as $\alpha = \alpha_1 \vee \alpha_2 \vee \ldots \vee \alpha_a$. Each $\alpha_i$ is a conjunction of atomic assertions. For every $\alpha_i$ and every $g$ such that $g>0$ appears in $\eta(\loc')$, it adds the condition pair $(\precond(\loc) \wedge \eta(\loc) \wedge \precond(\loc') \wedge \alpha_i, g)$ to $\Lambda_e$. 
	\item If $\loc \in \locs_d$, for every $g$ for which $g>0$ appears in $\eta(\loc')$, it adds the condition pair $(\precond(\loc) \wedge \eta(\loc) \wedge \precond(\loc'), g)$ to $\Lambda_e$.
	
\end{compactitem}
 Finally, the algorithm constructs the following set $\Lambda_\text{in}$:
\begin{compactitem}
	\item For every polynomial $g$ for which $g>0$ appears in $\eta(\lin^f)$, the algorithm constructs the constraint pair $(\precond(\lin^f), g)$ and adds it to $\Lambda_\text{in}$.
\end{compactitem}

\begin{example} \label{ex:step2}
	In the summation program of Figure~\ref{fig:running}, suppose that $\precond(1) := (n \geq 0) \wedge (i \geq 0) \wedge (-i\geq 0) \wedge (s \geq 0) \wedge (-s \geq 0) \wedge (\ret^\func{sum} \geq 0) \wedge (-\ret^\func{sum} \geq 0) \wedge (n - \bar{n} \geq 0) \wedge (\bar{n}-n\geq 0)$ and $\precond(\loc) := (1 \geq 0) \equiv \mathbf{true}$ for every $\loc \neq 1$. Note that $(n \geq 0)$ is the only non-trivial assertion and all the other assertions are true by definition, given that $1$ is the first statement in $\func{sum}$. We provide some examples of constraint pairs generated in Step 2 of the algorithm:
	\begin{compactitem}
		\item $1 \in \locs_a$ and $e_1 = (1, [i \leftarrow 1], 2) \in \rightarrow$ (see the CFG in Figure~\ref{fig:cfg}). Hence, we have the following constraint pair:
		$$
			(\precond(1) \wedge \eta(1) \wedge \precond(2)[i \leftarrow 1], \eta(2)[i \leftarrow 1])
		$$
		which is symbolically computed as:
		\begin{equation*}
		\left( \begin{matrix}
		\makecell{
		(n \geq 0) \wedge (i \geq 0) \wedge (-i \geq 0) \wedge \\
		(s \geq 0) \wedge (-s \geq 0) \wedge (r \geq 0) \wedge \\
		(-r \geq 0) \wedge (n-\bar{n} \geq 0) \wedge (\bar{n}-n \geq 0) \wedge\\
		(s_{1,1,1} + s_{1, 1, 2} \cdot n + \ldots + s_{1,1,21} \cdot r^2 \geq 0) } & , \end{matrix} \right.
	\end{equation*}
	
	\begin{equation*}
	\left. \begin{matrix}
	  \makecell{s_{2,1,1} + s_{2,1,2} \cdot n + \ldots + s_{2,1,4} +\ldots +\\ s_{2,1,9} \cdot n + \ldots + s_{2,1,13} \cdot \bar{n} + s_{2, 1, 16} +\\ s_{2, 1, 17} \cdot s + s_{2, 1, 18} \cdot r + \ldots + s_{2, 1, 21} \cdot r^2}
		\end{matrix} \right)
		\end{equation*}
		and added to $\Lambda_{e_1}$. Note that $\precond(2)$ is $\mathbf{true}$ and so ignored.
		
		\item $3 \in \locs_b$ and $e_2 = (3, (n-i \geq 0) ,4) \in \rightarrow$, so the constraint pair $(\precond(3) \wedge \eta(3) \wedge \precond(4) \wedge (n-i \geq 0), \eta(4)) \equiv (\eta(3) \wedge (n-i \geq 0), \eta(4))$ is symbolically computed and added to $\Lambda_{e_2}.$
		
		\item $1 = \lin^\func{sum}$, so the constraint pair $(\precond(1), \eta(1))$ is symbolically computed and added to $\Lambda_{\text{in}}.$
		
	\end{compactitem}
	
\end{example}

\smallskip\noindent\textbf{Step 3) Translating constraint pairs to quadratic equalities.} Let $\Lambda := \bigcup_{e \in \rightarrow} \Lambda_e \cup \Lambda_\text{in}$ be the set of all constraint pairs from the previous step. For each $\lambda = \left( \bigwedge_{i=1}^m \left(g_i \geq 0\right), g \right) \in \Lambda$, the algorithm takes the following actions:
\begin{compactenum}[(i)]
	\item Let $V = \{ v_1, \ldots, v_{t'} \}$ be the set of all program variables that appear in $g$ or the $g_i$'s. The algorithm computes the set $\monomials_\Upsilon = \{ m'_1, m'_2, \ldots, m'_{r'} \}$ of all monomials of degree at most $\Upsilon$ over $V.$ Note that $\Upsilon$ is a technical parameter that was supplied as part of the input.
	\item  It symbolically computes an equation of the form~(\ref{eq:colput}):
	$$\begin{matrix}g = \epsilon + h_0 + \sum_{i=1}^m h_i \cdot g_i& & (\dagger)\end{matrix}$$
	where $\epsilon$ is a new unknown and positive real variable and each polynomial $h_i$ is of the form $\sum_{j=1}^r t_{i,j} \cdot m'_j$. Here, the $t_{i,j}$'s are also new unknown variables. Intuitively, we aim to synthesize values for both $t$-variables and $s$-variables in order to ensure the polynomial equality~$(\dagger)$. Note that both sides of $(\dagger)$ are polynomials in $\mathbb{R}[V]$ whose coefficients are quadratic expressions over the newly-introduced $s$-, $t$- and $\epsilon$-variables. 
	\item The algorithm equates the coefficients of corresponding monomials in the left and right hand sides of $(\dagger)$, leading to a set of quadratic equalities over the new variables.
	\item The algorithm computes a set of quadratic equalities which are equivalent to the assertion that the $h_i$'s can be written as sums of squares (Lemma~\ref{lm:sq}). 
\end{compactenum}
The algorithm conjunctively compiles all the generated quadratic equalities into a single system. Note that this system's size is polynomially dependent on the number of lines in the program, assuming that $d$ and $\Upsilon$ are constants.

\begin{remark} \label{rem:ups}
	Based on above, the technical parameter $\Upsilon$ is the maximum degree of the sum-of-squares polynomials $h_i$ in $(\dagger).$ More specifically, in Step~3, we are applying a special case of Putinar's positivstellensatz, in which the sum-of-square polynomials can have a degree of at most $\Upsilon.$
\end{remark}

\begin{example} \label{ex:step3}
	Consider the first constraint pair generated in Example~\ref{ex:step2}. The algorithm writes $(\dagger)$, i.e.~$g = \epsilon + h_0 + \sum_{i=1}^{10} h_i \cdot g_i$ where $g = s_{2,1,1} + \ldots + s_{2, 1, 21} \cdot r^2$ (the polynomial in the second component of the constraint pair), $g_1 = n$, $g_2 = i$, $g_3 = -i$, \ldots, $g_{10} = s_{1,1,1}+ \ldots + s_{1,1,21}$ (the polynomials in the first component of the constraint pair) and each $h_i$ is a newly generated polynomial containing all possible monomials of degree at most $\Upsilon$, e.g.~if $\Upsilon=2$, we have $h_i = t_{i,1} + t_{i, 2} \cdot n + \ldots + t_{i, 21} \cdot r^2$, where each $t_{i,j}$ is a new unknown variable. It then equates the coefficients of corresponding monomials on the two sides of $(\dagger)$. For example, consider the monomial $r^2$. Its coefficient in the LHS of $(\dagger)$ is $s_{2,1,21}$. In the RHS of $(\dagger)$, there are a variety of ways to obtain $r^2$, hence its coefficient is the sum of the following:
	\begin{compactitem}
		\item $t_{0, 21}$, i.e.~the coefficient of $r^2$ in $h_0$,
		\item $t_{6, 6}$, i.e.~the coefficient of $r^2$ in $h_6 \cdot g_6 = h_6 \cdot r$,
		\item $-t_{7, 6}$, i.e.~the coefficient of $r^2$ in $h_7 \cdot g_7 = h_7 \cdot (-r)$,
		\item $t_{10,1} \cdot s_{1,1,21} + t_{10, 6} \cdot s_{1,1,6} + t_{10, 21} \cdot s_{1,1,1}$, i.e.~the coefficient of $r^2$ in $h_{10} \cdot g_{10}$.
	\end{compactitem}
	Hence, the algorithm generates the quadratic equality $s_{2,1,21} = t_{0,21} + t_{6,6} - t_{7, 6} + t_{10,1} \cdot s_{1,1,21} + t_{10, 6} \cdot s_{1,1,6} + t_{10, 21} \cdot s_{1,1,1}$ over the $s-$ and $t-$variables. The algorithm computes similar equalities for every other monomial.
\end{example}

\smallskip\noindent\textbf{Step 4) Finding representative solutions.} The previous step has created a system of quadratic equalities over $s$-variables and other new variables. In this step, the algorithm finds a representative set $\Sigma$ of solutions to this system by calling an external solver. Then, for each solution $\sigma \in \Sigma$, it plugs the values synthesized for the $s$-variables into the template $\eta$ to obtain an inductive invariant $\eta_\sigma := \eta[s_{\loc,i,j} \leftarrow \sigma(s_{\loc,i,j})]$. The algorithm outputs $I = \{\eta_\sigma~\mid~\sigma \in \Sigma \}$.

\begin{remark}[Representative Solutions]
In real algebraic geometry, a standard notion for a representative set of solutions to a polynomial system of equalities is to include one solution from each connected component of the set of solutions~\cite{basu2007algorithms}. The classical algorithm for this problem is called \emph{cylindrical algebraic decomposition} and has a doubly-exponential runtime~\cite{basu2007algorithms,sturmfels2002solving}. However, if the coefficients are limited to rational numbers instead of real numbers, then a subexponential algorithm is provided in~\cite{grigoriev1988solving}\footnote{No tight runtime analysis is available for this algorithm, but~\cite{grigoriev1988solving} proves that its runtime is subexponential.}. Hence, Step 4 of $\strongInvAlgo$ has subexponential runtime in theory.
\end{remark}

\begin{lemma}[Soundness]\label{lem:sound}
	Every output of $\strongInvAlgo$ is an inductive invariant. More generally, for every solution $\sigma \in \Sigma$ obtained in Step 4, the function $\eta_\sigma := \eta[s_{\loc,i,j} \leftarrow \sigma(s_{\loc,i,j})]$ is an inductive invariant.
\end{lemma}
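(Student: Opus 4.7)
The plan is to show that for any $\sigma \in \Sigma$, the function $\eta_\sigma$ satisfies the Initiation and Consecution requirements of an inductive assertion map. The core idea is that each constraint pair $(\Gamma, g)$ produced in Step~2 is precisely one of these two requirements (specialized to a single atomic conjunct), and Step~3 forces $\sigma$ to witness the forward direction of Corollary~\ref{col:put} for it. Thus the plan has a mathematical core (one application of Corollary~\ref{col:put} per constraint pair) and a bookkeeping core (checking that Step~2 exhausts all Initiation/Consecution obligations).

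First, I would fix $\sigma$ and the corresponding substituted template $\eta_\sigma$, and show the following \emph{per-pair} claim: for every constraint pair $\lambda = (\Gamma, g) \in \Lambda$ with $\Gamma = \bigwedge_{i=1}^m (g_i \geq 0)$, and for every valuation $\nu$, if $\nu \models \Gamma$ then $g(\nu) > 0$. To prove this, note that by Step~3(ii)--(iv), $\sigma$ satisfies (a)~the quadratic equalities obtained by equating coefficients in $(\dagger)$ and (b)~the quadratic equalities produced by Lemma~\ref{lm:sq} asserting that each $h_i$ is a sum of squares. From (a), the polynomial identity $g = \epsilon + h_0 + \sum_{i=1}^m h_i \cdot g_i$ holds after substituting $\sigma$; from (b), each $h_i(\nu) \geq 0$ for every $\nu$; and $\epsilon > 0$ was enforced as an unknown positive real in Step~3(ii). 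Plugging in any $\nu \models \Gamma$ gives $g_i(\nu) \geq 0$ for all $i$, so the right-hand side is at least $\epsilon > 0$, yielding $g(\nu) > 0$.

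Next, I would lift this to Initiation and Consecution. For Initiation, I use that $\Lambda_{\text{in}}$ contains, for every conjunct $g > 0$ of $\eta(\lin^f)$, the pair $(\precond(\lin^f), g)$. Applying the per-pair claim to each such pair and conjoining the conclusions yields $\precond(\lin^f) \Rightarrow \eta_\sigma(\lin^f)$. For Consecution, fix an edge $e = (\loc, \alpha, \loc')$ and do a case split on the type of $\loc$, mirroring Step~2. In the assignment/return/skip case ($\loc \in \locs_a$), the pair $(\precond(\loc) \wedge \eta(\loc) \wedge (\precond(\loc') \circ \alpha),\, g \circ \alpha)$ encodes that whenever the source assertions hold and the post-update pre-condition at $\loc'$ holds, the atomic conjunct $g$ of $\eta(\loc')$ holds after applying $\alpha$; this is exactly the consecution obligation for the transition. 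In the deterministic branching case ($\loc \in \locs_b$), rewriting the guard $\alpha$ in DNF and producing one constraint pair per disjunct is equivalent (as implication) to producing a single constraint pair with $\alpha$ in its hypothesis. In the non-deterministic case ($\loc \in \locs_d$), no guard is involved and the pair directly encodes consecution. Together, the per-pair claim gives the consecution of each atomic conjunct of $\eta_\sigma(\loc')$, hence of the whole conjunction $\eta_\sigma(\loc')$. Combined with Initiation, this shows that $\eta_\sigma$ is an inductive assertion map, and the general statement in the lemma follows.

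The only step that requires genuine mathematical content is the per-pair claim, and that is essentially immediate from the \emph{easy direction} of Corollary~\ref{col:put}: a representation of the form $g = \epsilon + h_0 + \sum h_i g_i$ with SOS $h_i$ and $\epsilon > 0$ is a certificate of strict positivity on $\Pi$. Thus the hard part is not mathematical but notational: carefully matching the symbolic objects constructed in Steps~2--3 to the hypotheses of the corollary, in particular verifying that (i)~Step~3 indeed produces, via coefficient-matching, the polynomial identity $(\dagger)$ and not merely an approximation of it, (ii)~Lemma~\ref{lm:sq} is applied with the correct set of variables $V$ so that each $h_i$ is a \emph{polynomial} sum of squares over the monomial set $\monomials_\Upsilon$, and (iii)~the DNF expansion in the branching case does not lose any consecution obligation. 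Note that soundness does not invoke the compactness hypothesis of Theorem~\ref{thm:putinar}; that assumption is only needed for the completeness direction.
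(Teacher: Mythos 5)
Your proposal is correct and follows essentially the same route as the paper's own (much terser) proof: the valuation $\sigma$ satisfies the Step~3 equalities, hence each constraint pair admits a representation of the form $(\dagger)$, which by the easy direction of Corollary~\ref{col:put} certifies the implication encoded by that pair, and by the construction of Step~2 these implications are exactly the initiation and consecution obligations. Your added observations (the DNF case split, the role of Lemma~\ref{lm:sq}, and the fact that compactness is not needed for soundness) are all consistent with the paper, which makes the last point explicitly in Remark~\ref{rem:compact}.
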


\begin{proof}
	The valuation $\sigma$ satisfies the system of quadratic equalities obtained in Step 3. Hence, for every constraint pair $(\Gamma, g) \in \Lambda$, $g[s_{\loc,i,j} \leftarrow \sigma(s_{\loc,i,j})]$ can be written in the form $(\dagger)$. Hence, we have $\sigma \satisfies (\Gamma, g)$. By definition of Step 2, this is equivalent to $\eta_\sigma$ having the initiation and consecution properties and hence being an inductive invariant. \qed
\end{proof}

We now prove our completeness result. Our approach is semi-complete for bounded reals in the sense of~\cite{cav2016}. Concretely, this means that if we assume the bounded reals model of computation (see Section~\ref{sec:inv}), then any valid inductive invariant can be found by our approach so long as the technical parameter $\Upsilon$ is large enough. Recall that $\Upsilon$ is a bound on the degree of the sum-of-square polynomials (see Remark~\ref{rem:ups}).

\begin{lemma}[Semi-completeness with Compactness] \label{lem:complete} If the pre-condition $\precond$ satisfies the compactness condition of Theorem~\ref{thm:putinar}, i.e.~if in every label $\loc$, $\precond(\loc)$ contains an atomic proposition of the form $g\geq 0$ such that the set $\{ \nu \in \mathbb{R}^{\pvars^f}~\mid~ g(\nu) \geq 0 \}$ is compact, then for every inductive invariant $\ind$ that has the form of the template $\eta$, there exists a natural number $\Upsilon_\ind$, such that for every technical parameter $\Upsilon \geq \Upsilon_\ind$, the invariant $\ind$		 corresponds to a solution of the system of quadratic equalities obtained in Step~3 of $\strongInvAlgo$.
\end{lemma}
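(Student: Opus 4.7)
The plan is to construct, from the given inductive invariant $\ind$, an explicit assignment to every variable of the quadratic system produced in Step~3, in such a way that the $s$-variable component recovers $\ind$. Since $\ind$ has the shape of the template $\eta$, it canonically fixes the values $\sigma(s_{\loc,i,j})$ to the coefficient of the $j$-th monomial in the $i$-th atomic assertion of $\ind(\loc)$; the real work is to extend $\sigma$ to the auxiliary $t$- and $\epsilon$-variables and choose $\Upsilon_\ind$.

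First I would read the initiation and consecution properties of $\ind$ through the constraint-pair machinery of Step~2 backwards. Under the substitution $\sigma$, every constraint pair $(\Gamma,g) \in \Lambda$ collapses to a purely numerical statement $\forall \nu.\; \nu \models \Gamma \Rightarrow g(\nu) > 0$, and this statement is exactly the initiation or consecution condition on the corresponding CFG edge, so it holds by assumption that $\ind$ is an inductive invariant. Because every atomic assertion of $\eta$ carries a positivity witness by the convention of Section~\ref{sec:inv}, the conclusion actually holds uniformly away from zero on the closed semi-algebraic set $\Pi := \{\nu : g_i(\nu) \geq 0 \text{ for all } g_i \in \Gamma\}$. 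Next I would verify Putinar's compactness premise for every pair: by construction $\precond(\loc)$ is always a conjunct of $\Gamma$ (in the $\loc \in \locs_a$ case together with the update-image of $\precond(\loc')$, which is still a conjunction of polynomial inequalities over $\pvars^f$), and by hypothesis $\precond(\loc)$ carries an atom whose non-negativity set is compact.

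With both hypotheses of Theorem~\ref{thm:putinar} verified, Corollary~\ref{col:put} produces, for each $\lambda = (\Gamma,g) \in \Lambda$, a representation $g = \epsilon^{(\lambda)} + h_0^{(\lambda)} + \sum_i h_i^{(\lambda)} \cdot g_i$ with $\epsilon^{(\lambda)} > 0$ and each $h_i^{(\lambda)}$ a sum of squares in $\mathbb{R}[V]$. I would then take $\Upsilon_\ind$ to be the maximum total degree appearing among the finitely many polynomials $h_i^{(\lambda)}$. For any $\Upsilon \geq \Upsilon_\ind$ the basis $\monomials_\Upsilon$ used in Step~3 is expressive enough to represent every $h_i^{(\lambda)}$: reading off its coefficients yields values for the $t$-variables that turn $(\dagger)$ into a genuine polynomial identity and therefore satisfies all equalities generated in item~(iii) of Step~3. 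The constructive direction of Lemma~\ref{lm:sq} then supplies the additional numerical values needed to witness in item~(iv) that each $h_i^{(\lambda)}$ is indeed a sum of squares. For $\Upsilon$ strictly larger than $\Upsilon_\ind$ any newly available $t$-variables are simply set to zero.

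The point I expect to be the main obstacle is the interaction between the strict inequalities $\mathfrak{e} > 0$ appearing in $\ind$ and the non-strict hypothesis $g_i \geq 0$ required by Putinar's theorem on the closed set $\Pi$. This is precisely where positivity witnesses are indispensable: without them $g$ could approach zero on the boundary of $\Pi$, so $g(\nu) > 0$ on the \emph{strict} feasible set would not upgrade to strict positivity on $\Pi$, and Corollary~\ref{col:put} would not apply. Once this subtlety is handled, the rest is routine bookkeeping in matching coefficients of corresponding monomials, exactly mirroring (in reverse) the construction used to prove soundness in Lemma~\ref{lem:sound}.
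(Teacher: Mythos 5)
Your proposal follows essentially the same route as the paper's own (much terser) proof: define $\sigma$ from the coefficients of $\ind$, observe that initiation and consecution make $\sigma$ satisfy every constraint pair, invoke the compactness hypothesis to apply Corollary~\ref{col:put} and obtain the representation $(\dagger)$, and take $\Upsilon_\ind$ large enough to accommodate the finitely many sum-of-squares witnesses. Your write-up simply fills in details the paper leaves implicit (the explicit choice of $\Upsilon_\ind$ as the maximum degree of the $h_i^{(\lambda)}$, the extension of $\sigma$ to the $t$-variables via Lemma~\ref{lm:sq}, and the role of positivity witnesses in bridging the strict conclusion with the closed set $\Pi$), so it is correct and matches the intended argument.
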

\begin{proof}
	Let $\ind$ be an inductive invariant in the form of the template $\eta$. We denote the value of $s_{\loc,i,j}$ in $\ind$ by $\sigma(s_{\loc,i,j})$. Given that $\ind$ satisfies initiation and consecution, the valuation $\sigma$ satisfies every constraint pair $(\Gamma, g)$ generated in Step 2.
	Each such $\Gamma$ contains an assertion $g_i \geq 0$ s.t.~$\{ x \in \vals{\pvars^f} ~\mid~ g_i(x) \geq 0 \}$ is compact. Hence, by Corollary~\ref{col:put}, $g$ can be written in the form~$(
	\dagger)$\footnote{Theorem~\ref{thm:putinar} requires compactness and so does Corollary~\ref{col:put}.} and for large enough $\Upsilon$, there exists a solution to the system  that maps each $s_{\loc,i, j}$ to $\sigma(s_{\loc,i,j}).$ \qed
\end{proof}

\begin{remark}[Bounded Reals, Compactness and Real-world programs] \label{rem:compact}
	Note that in the bounded reals model of computation, every pre-condition enforces that the value of every variable is between $-c$ and $c$ and also contains the polynomial inequality $\norm{\pvars^f}_2^2 \leq c^2  \cdot {\vert \pvars^f \vert}$ (see Section~\ref{sec:inv}). The set of valuations that satisfy the latter polynomial are points in $\vals{f}$ whose distance from the origin is at most a fixed amount $c \sqrt{\vert \pvars^f \vert}.$ Hence, this set is closed and bounded and therefore compact, and satisfies the requirement of Putinar's positivstellensatz. So, our approach is semi-complete for bounded reals.
	It is worth mentioning that almost all real-world programs have bounded variables, e.g.~programs that use floating-point variables can at most store a finite number of  values in each variable, hence their variables are always bounded. Also, note that while the completeness result is dependent on bounded variables, our soundness result holds for general unbounded real variables.
\end{remark}

\begin{remark}[Non-strict inequalities] \label{rem:no-str}
	Although we considered invariants consisting of inequalities with positivity witnesses, i.e.~invariants of the form $\bigwedge (g(x) > 0)$, our algorithm can easily be extended to generate invariants with non-strict inequalities, i.e. invariants of the form $\bigwedge (g(x) \geq 0)$. To do so, it suffices to replace Equation~$(\dagger)$ in Step~3 of the algorithm with Equation~(\ref{eq:putinar}), i.e.~remove the $\epsilon$-variables (positivity witnesses). This results in a sound, but not complete, method for generating non-strict polynomial invariants. Alternatively, we can use Stengle's positivstellensatz~\cite{stengle1974nullstellensatz} instead of Theorem~\ref{thm:putinar}. Stengle is able to characterize non-negative polynomials as well. Hence, using it will ensure semi-completeness even for non-strict invariants. The downside is that, in comparison with Putinar, it leads to a much higher runtime in practice.
\end{remark}

\begin{remark}[Complexity]
	It is straightforward to verify that Steps 1--3 of $\strongInvAlgo$ have polynomial runtime. Hence, our algorithm provides a polynomial reduction from the Strong Invariant Synthesis problem to the problem of finding representative solutions of a system of quadratic equalities. As mentioned earlier, this problem is solvable in subexponential time~\cite{grigoriev1988solving}. Hence, the runtime of our approach is subexponential, too. Note that we consider $d$ and $\Upsilon$ to be fixed constants.
\end{remark}

\begin{theorem}[Strong Invariant Synthesis] \label{thm:sis}
	Given a non-recursive program $P$ and a pre-condition $\precond$ that satisfies the compactness condition, the $\strongInvAlgo$ algorithm solves the Strong Invariant Synthesis problem in subexponential time. This solution is sound and semi-complete.
\end{theorem}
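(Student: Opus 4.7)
The plan is to obtain the theorem as an immediate corollary of three results already established in this section: Lemma~\ref{lem:sound} handles soundness, Lemma~\ref{lem:complete} handles semi-completeness, and the step-by-step complexity bounds, culminating in Grigor'ev's subexponential algorithm~\cite{grigoriev1988solving} invoked in Step~4, handle the runtime claim. I would structure the proof as three short blocks and then combine them.

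For soundness, I would simply invoke Lemma~\ref{lem:sound}: every $\sigma \in \Sigma$ returned by the external solver yields $\eta_\sigma$ that is an inductive invariant, and the output of $\strongInvAlgo$ is exactly $\{\eta_\sigma \mid \sigma \in \Sigma\}$. For semi-completeness, I would verify that the compactness hypothesis in the theorem propagates into every constraint pair built in Step~2: each such pair has a first component $\Gamma$ that inherits the compact assertion from the relevant pre-condition (and, when $\loc \in \locs_a$, from the substituted pre-condition $\precond(\loc')\circ \alpha$ as well). Thus Corollary~\ref{col:put} applies to every pair, and Lemma~\ref{lem:complete} guarantees that, for any target inductive invariant $\ind$ fitting the template $\eta$, there is a threshold $\Upsilon_\ind$ such that whenever $\Upsilon \geq \Upsilon_\ind$ the coefficients of $\ind$ correspond to a feasible point of the quadratic system produced in Step~3. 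Projecting the representative solution set $\Sigma$ onto the $s$-variables then recovers $\ind$ up to its connected component.

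For the complexity bound, I would argue step by step. Steps 1--3 are polynomial in the size of $P$ once $d$, $n$ and $\Upsilon$ are treated as constants: the template construction and symbolic substitutions in Steps~1--2 are immediate, while Step~3 produces, for each constraint pair, polynomially many quadratic equalities via coefficient matching, together with the sum-of-squares reduction of Lemma~\ref{lm:sq} (which itself generates only polynomially many further equalities). Step~4 then runs the algorithm of~\cite{grigoriev1988solving} on this polynomially-sized system in subexponential time, yielding the claimed overall runtime.

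The main obstacle I anticipate is reconciling the notion of \emph{representative set} at the two levels: the algorithm computes representatives of the connected components of the extended variety over the $(s,t,\epsilon)$-variables, whereas the Strong Invariant Synthesis problem asks for representatives at the level of invariants themselves. I would address this by arguing that the projection of a connected set is connected, and that Lemma~\ref{lem:complete} lifts every admissible $s$-valuation to at least one feasible $(s,t,\epsilon)$-point, so every connected component of admissible $s$-values is hit by the output. Combining soundness, the lifted semi-completeness, and the runtime analysis then yields the theorem.
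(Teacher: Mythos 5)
Your proposal is correct and follows essentially the same route as the paper, which presents no separate proof for Theorem~\ref{thm:sis} but obtains it directly by combining Lemma~\ref{lem:sound} (soundness), Lemma~\ref{lem:complete} (semi-completeness under compactness), and the complexity remark that Steps 1--3 are polynomial while Step 4 invokes the subexponential algorithm of~\cite{grigoriev1988solving}. Your additional care about lifting representative solutions from the $(s,t,\epsilon)$-space back to the invariant level via projection of connected components is a reasonable elaboration of a point the paper leaves implicit.
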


\begin{remark}[Inefficiency] \label{rem:inef}
	Despite its subexponential runtime, the algorithm of~\cite{grigoriev1988solving} has a poor performance in practice~\cite{comparison}. Hence, Theorem~\ref{thm:sis} can only be considered as a theoretical contribution and is not applicable to real-world programs.
\end{remark}

\subsection{Weak Invariant Synthesis and Practical Method}

Due to the practical inefficiency mentioned in Remark~\ref{rem:inef}, in this section we focus on using a very similar approach to reduce the Weak Invariant Synthesis problem to QCLP. Given that there are many industrial solvers capable of handling real-world instances of QCLP, this reduction will provide a practical sound and semi-complete method for polynomial invariant generation. 
We now provide an algorithm for the Weak Invariant Synthesis problem. This is very similar to $\strongInvAlgo$, so we only describe the differences. 

\paragraph{The $\weakInvAlgo$ Algorithm}  Our algorithm $\weakInvAlgo$ takes the same set of inputs as $\strongInvAlgo$, as well as an objective function $\textsf{obj}$ over the resulting inductive invariants. We assume that $\textsf{obj}$ is a linear or quadratic polynomial over the $s$-variables in the template. Intuitively, $\textsf{obj}$ serves as a measure of desirability of a synthesized invariant and the goal is to find the most desirable invariant. 

The first three steps of the algorithm are the same as $\strongInvAlgo$. The only difference is in Step 4, where $\weakInvAlgo$ needs to find only one solution for the computed system of quadratic equalities, i.e.~the solution that maximizes $\textsf{obj}$. Hence, Step 4 is changed as follows:

\smallskip\noindent\textbf{Step 4) Finding the optimal solution.} Step 3 has generated a system of quadratic equalities. In this step, the algorithm uses a QCLP-solver to find a solution $\sigma$ of this system that maximizes the objective function $\textsf{obj}$. It then outputs the inductive invariant $\eta_\sigma := \eta[s_{\loc,i,j} \leftarrow \sigma(s_{\loc,i,j})].$

\begin{example}
	In Example~\ref{ex:running}, we mentioned that our goal is to prove that the return value of $\func{sum}$ is less than $0.5 \cdot n^2 + 0.5 \cdot n + 1$, i.e.~we want to obtain
	 $0.5 \cdot \bar{n}^2 + 0.5 \cdot \bar{n} + 1 -r > 0 ~~~~~~(*)$ 
	 
	\noindent at the endpoint label $9$ of $\func{sum}$. To do so, our algorithm calls a QCLP-solver over the system of quadratic equalities obtained in Example~\ref{ex:step3}, with the objective of minimizing the Euclidean distance between the coefficients synthesized for $\eta(9)$ and those of $(*)$. The QCLP-solver obtains a solution $\sigma$ (i.e.~a valuation to the new unknown $s-$, $t-$ and $\epsilon-$variables), such that $\eta(9)[s_{9, i, j} \leftarrow \sigma(s_{9, i, j})] = 0.5 \cdot \bar{n}^2 + 0.5 \cdot \bar{n} + 1 - r > 0$, hence proving the desired invariant. The complete solution is provided in Appendix~\ref{app:exp-run}.
\end{example}

\begin{remark}[Form of the Objective Function]
	At first sight, the objective functions considered above might seem bizarre, given that they are functions of the unknown $s$-variables, i.e.~the coefficients of the invariant which should be synthesized by the algorithm. In our view, this is a useful formulation. In many cases, the goal of a verification process is to prove that a certain desired invariant $\inv(\loc)$ holds at a specific point $\loc$ of the program. This goal can be specified as an objective function over the $s$-variables. However, it does not simplify the invariant generation problem, because although $\inv(\loc)$ is given, in order to prove that it is an invariant, we have to find an inductive invariant for every other point of the program, too.
\end{remark}

\begin{theorem}[Weak Invariant Synthesis]
	Given a non-recursive program $P$, a pre-condition $\precond$ that satisfies the compactness condition and a linear/quadratic objective function $\textsf{obj}$, the $\weakInvAlgo$ algorithm reduces the Weak Invariant Synthesis problem to QCLP in \emph{polynomial time}. This reduction is sound and semi-complete.
\end{theorem}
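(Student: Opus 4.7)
The plan is to lift the three pillars of Theorem~\ref{thm:sis} — polynomial-time reduction, soundness, and semi-completeness — from the strong to the weak setting, exploiting the fact that $\weakInvAlgo$ and $\strongInvAlgo$ share Steps 1--3 verbatim and differ only in the final call to an external solver.

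First, for the polynomial-time bound, I would trace each step: Step~1 produces a template whose size is polynomial in $|\locs|$, $n$, and the number of monomials up to degree $d$ (polynomial when $d$ is fixed); Step~2 generates a polynomial number of constraint pairs of polynomial size; Step~3 instantiates $(\dagger)$ with fresh $t$-variables for $h_i$-polynomials up to degree $\Upsilon$ (polynomially many when $\Upsilon$ is fixed), equates monomial coefficients, and encodes the sum-of-squares requirement for each $h_i$ via Lemma~\ref{lm:sq}; the resulting system is of polynomial size. Step~4 merely bundles this quadratic system together with the given linear or quadratic objective $\textsf{obj}$ into a QCLP instance and hands it off. Nothing beyond the reduction itself is actually computed, so the whole procedure takes polynomial time.

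Soundness reuses Lemma~\ref{lem:sound} essentially verbatim. Any valuation $\sigma$ returned by the QCLP solver is, by definition of QCLP feasibility, a solution of the quadratic system from Step~3 (its optimality with respect to $\textsf{obj}$ plays no role here). Hence for every constraint pair $(\Gamma, g) \in \Lambda$, the instantiated polynomial $g[s_{\loc,i,j} \leftarrow \sigma(s_{\loc,i,j})]$ admits a decomposition of the form $(\dagger)$, and since $\epsilon > 0$, $h_0$ and each $h_i$ are sums of squares, and the $g_i$ are non-negative on $\Gamma$, the right-hand side is strictly positive wherever $\Gamma$ holds. So $\eta_\sigma$ satisfies initiation and consecution, i.e., is an inductive invariant.

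Semi-completeness is the most delicate step and is where I expect the main difficulty. What needs to be shown is that every template-form inductive invariant — in particular every $\textsf{obj}$-optimal one — is representable as a feasible point of the QCLP for sufficiently large $\Upsilon$. I would invoke Lemma~\ref{lem:complete}: given an inductive invariant $\ind^\star$ of the template form with coefficient valuation $\sigma^\star$, under compactness there exists $\Upsilon_{\ind^\star}$ such that for all $\Upsilon \geq \Upsilon_{\ind^\star}$, $\sigma^\star$ extends (by choosing appropriate $t$- and $\epsilon$-variables arising from Corollary~\ref{col:put}) to a feasible solution of the quadratic system. Combined with soundness, this shows that the image of the QCLP's feasible region under $\sigma \mapsto \eta_\sigma$ is \emph{exactly} the set of template-form inductive invariants, so the QCLP optimum coincides with the optimum of $\textsf{obj}$ over this set. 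The obstacle is the familiar quantifier ordering: $\Upsilon$ depends on the target invariant, so as in the strong case, semi-completeness is stated with respect to a sufficiently large technical parameter $\Upsilon$ rather than uniformly.
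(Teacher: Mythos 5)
Your proposal is correct and follows essentially the same route as the paper, which establishes this theorem by combining the polynomial-time bound on Steps 1--3, soundness via Lemma~\ref{lem:sound}, and semi-completeness via Lemma~\ref{lem:complete}, noting that $\weakInvAlgo$ differs from $\strongInvAlgo$ only in handing the quadratic system plus the objective $\textsf{obj}$ to a QCLP solver. Your closing caveat about the quantifier ordering on $\Upsilon$ matches exactly the sense in which the paper's guarantee is ``semi''-complete.
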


	\section{Invariants for Recursive Programs} \label{sec:algo-recursive}

We extend our algorithms to handle recursion. Recall that the only differences between recursive and non-recursive inductive invariants are (i)~presence of function-call statements in recursive programs, (ii)~presence of post-conditions, and (iii)~the post-condition consecution requirement. We expect an invariant generation algorithm for recursive programs to also synthesize a post-condition for every function.  
	
\begin{figure}
	\begin{center}
	\begin{minipage}{0.5\linewidth}
		\lstset{tabsize=4}
		\lstset{language=prog}
		\begin{lstlisting}[mathescape,basicstyle=\small,aboveskip=0pt,belowskip=0pt]
  $\func{rsum}$($n$) {
1:	if $n \leq 0$ then
2:		return $n$
  	else
3:		$m$ := $n - 1$;
4:		$s$ := $\func{rsum}$($m$);
5:		if $\star$ then
6:			$s$ := $s+n$
7:      else skip fi;
8:		return $s$
9:  fi }
		\end{lstlisting}
	\end{minipage}
	\end{center}
\caption{A recursive non-deterministic summation program}
\label{fig:rec-example}
\end{figure}
	
\begin{example}	
	Consider the program in Figure~\ref{fig:rec-example}, which is a recursive variant of the non-deterministic summation program of Figure~\ref{fig:running}. We use this program to illustrate our approach for handling recursion.	
\end{example}

\paragraph{The $\recursiveStrongInvAlgo$ and $\recursiveWeakInvAlgo$ Algorithms} Our algorithm for Strong (resp. Weak) Invariant Synthesis over a recursive program $P$ is called $\recursiveStrongInvAlgo$ (resp. $\recursiveWeakInvAlgo$). It takes the same inputs as in the non-recursive case, except that the input program $P$ can now be recursive. It performs the same steps as in its non-recursive counterpart, except that the following additional actions are taken in Steps 1 and 2:

\smallskip\noindent\textbf{Step 1.a) Setting up a template for the post-condition.}
Let $\hat{\monomials}_d^f = \{ \hat{m}_1, \hat{m}_2, \ldots, \hat{m}_{\hat{r}} \}$ be the set of all monomials of degree at most $d$ over $\{ \ret^f, \bar{v}_1, \ldots, \bar{v}_n \}$.
The algorithm generates an additional template $\mu(f) := \bigwedge_{i=1}^n \varphi_{f, i}$ where each $\varphi_{f, i}$ is of the form $\varphi_{f, i} := \left( \sum_{j=1}^{\hat{r}} s_{f, i, j} \cdot \hat{m}_j > 0 \right)$ where the $s_{f, i, j}$'s are additional new $s$-variables. Intuitively, our goal is to synthesize the right value for $s$-variables such that $(\mu, \eta)$ becomes a recursive inductive invariant. As a consequence, $\mu$ will be a post-condition and $\eta$ a valid invariant.

\begin{example}
	Consider the program in Figure~\ref{fig:rec-example} and assume that each desired invariant/post-condition consists of a single quadratic inequality. The algorithm generates a template $\mu(\func{rsum})$ for the post-condition of \func{rsum}. By definition, such a post-condition can only depend on $\bar{n}$, i.e.~the value passed for the parameter $n$ when \func{rsum} is called, and the return value 
	$r := \ret^\func{rsum}$. Hence, the algorithm generates the following template:
	$
	\mu(\func{rsum}) := s_{\func{rsum}, 1, 1} + s_{\func{rsum}, 1, 2} \cdot \bar{n} + s_{\func{rsum}, 1, 3} \cdot r + s_{\func{rsum}, 1, 4} \cdot \bar{n}^2 + s_{\func{rsum}, 1, 5} \cdot \bar{n} \cdot r + s_{\func{rsum}, 1, 6} \cdot r^2 > 0
	$
\end{example}

\noindent\textbf{Step 2.a) Setting up constraint pairs at function-call statements.} For every transition $e = (\loc, \perp, \loc')$ where $\loc$ is a function-call statement of the form $v_0 := f'(v_1, \ldots, v_n)$ calling a function with header $f'(v'_1, \ldots, v'_n)$, and every polynomial $g$ for which $g>0$ appears in $\eta(\loc')$, the algorithm defines a new program variable $v_0^*$ and adds the following constraint pair to $\Lambda_e$:
$$
\begin{small}
\begin{pmatrix}
 \makecell{\precond(\loc) \wedge \eta(\loc) \wedge \precond(\lin^{f'}) [v'_i \leftarrow v_i, \bar{v}'_i \leftarrow v_i] \wedge\\ \mu(f')[\ret^{f'} \leftarrow v_0^*, \bar{v}'_i \leftarrow v_i] \wedge \precond(\loc')[v_0 \leftarrow v_0^*]} ~,~  g[v_0 \leftarrow v_0^*] 
\end{pmatrix}\end{small}
,
$$

\noindent in which $\phi[x\leftarrow y]$ is the result of replacing every occurrence of $x$ in $\phi$ with a $y$. Intuitively, $v_0^*$ models the value of $v_0$ after the function call (equivalently the return value of $f'$)\footnote{Note that $v_0$ is the only variable in $f$ whose value might change after the call to $f'$. Hence, we need to distinguish between the initial value of $v_0$ and its value after the execution of $f'$, which is denoted by $v_0^*$.}. The constraint pair above encodes the consecution requirement at function-call labels, i.e.~it simply requires every valid \emph{abstract} path that satisfies the invariant at $\loc$ to satisfy it at $\loc'$, too. Note that a valid abstract path must satisfy the post-condition and all the pre-conditions.

\begin{example}
	Consider the transition $e = (4, \perp, 5)$ in Figure~\ref{fig:rec-example}. The algorithm computes the following constraint and adds it to $\Lambda_e$:
	$$
	\begin{small}
	\begin{pmatrix}
	\makecell{\precond(4) \wedge \eta(4) \wedge \precond(1) [n \leftarrow m, \bar{n} \leftarrow m] \wedge\\ \mu(\func{rsum})[\ret^{\func{rsum}} \leftarrow s^*, \bar{n} \leftarrow m] \wedge \precond(5)[s \leftarrow s^*]} ~,~ \eta(5)[s \leftarrow s^*] 
	\end{pmatrix}
	\end{small}
	.
	$$
We now explain this constraint in detail. The purpose of this constraint is to enforce the consecution property in the transition $e$ from label $4$ to label $5$. Recall that the consecution property requires that for every \emph{valid} unit-length \emph{abstract} path $\pi = \langle (\func{rsum}, 4, \nu_4), (\func{rsum}, 5, \nu_5) \rangle$, we have $\nu_4 \satisfies \eta(4) \Rightarrow \nu_5 \satisfies \eta(5).$ Since the variable $s$ is updated in line $4$, we use $s$ to denote its value before execution of the recursive call and $s^*$ to model its value after the function call. Hence, $\nu_5 \satisfies \eta(5)$ can be simply rewritten as $\eta(5)[s \leftarrow s^*]$ (the second component of the above constraint). On the other hand, the first component of the constraint should encode the properties that (a)~$\nu_4 \satisfies \eta(4)$ and (b)~$\pi$ is a valid abstract path. The property (a) is ensured by including $\eta(4)$ in the first component of the constraint. Similarly, (b) is encoded as follows:
\begin{compactitem}
	\item $\precond(4)$ encodes the requirement $\nu_4 \satisfies \precond(4).$
	
	\item $\precond(1)[n \leftarrow m, \bar{n} \leftarrow m]$ encodes the requirement that the function $\func{rsum}$ can be called using the parameter $m$, i.e.~that $m$ satisfies the pre-condition of $1 = \lin^\func{rsum}.$
	
	\item $\mu(\func{rsum})[\ret^{\func{rsum}} \leftarrow s^*, \bar{n} \leftarrow m]$ checks that the call to $\func{rsum}$ is abstracted correctly, i.e.~that the value $s^*$ returned by $\func{rsum}$ respects the post-condition $\mu(\func{rsum})$. 
	
	\item $\precond(5)[s \leftarrow s^*]$ encodes the requirement that the program should be able to continue its execution from point $5$ with the new value of $s$, or equivalently $\nu_5 \satisfies \precond(5).$
\end{compactitem}
\end{example}

\noindent\textbf{Step 2.b) Setting up constraint pairs for post-condition consecution.} For each transition $e = (\loc, \alpha, \loc')$ where $\loc$ is a \textbf{return} statement and $\loc' = \lout^f$ for some program function $f$, the algorithm generates the following constraint pairs:
\begin{compactitem}
	\item 
	For every polynomial $g$ such that $g>0$ appears in $\mu(f)$, the algorithm adds the condition pair $(\precond(\loc) \wedge \eta(\loc) \wedge (\precond(\loc') \circ \alpha), g \circ \alpha)$ to $\Lambda_e$.
\end{compactitem}
These constraints encode post-condition consecution.

\begin{example}
	Consider transition $e = (2, \ret^\func{rsum} \leftarrow n, 9)$ in the program of Figure~\ref{fig:rec-example}. The algorithm generates the following constraint and adds it to $\Lambda_e:$
	$$
	\begin{small}
	\begin{array}{l}
	\left (
	\precond(2) \wedge \eta(2) \wedge \precond(9)[\ret^\func{rsum} \leftarrow n] ~,~  \mu(\func{rsum})[\ret^\func{rsum} \leftarrow n] \right) =
	\end{array}
	\end{small}
	$$
	$$
	\begin{small}
	\begin{array}{l}
	(
	\precond(2) \wedge \eta(2) \wedge \precond(9)[\ret^\func{rsum} \leftarrow n]~,~ s_{\func{rsum}, 1, 1} + s_{\func{rsum}, 1, 2} \cdot \bar{n} \\ + s_{\func{rsum}, 1, 3} \cdot n + s_{\func{rsum}, 1, 4} \cdot \bar{n}^2 + s_{\func{rsum}, 1, 5} \cdot \bar{n} \cdot n + s_{\func{rsum}, 1, 6} \cdot n^2 > 0 ).
	\end{array}
	\end{small}
	$$
	This enforces the post-condition consecution requirement, i.e.~that in every valid execution step going from line $2$ to line $9$, the post-condition $\mu(\func{rsum})$ holds. 

\end{example}

The soundness, completeness and complexity arguments carry over from the non-recursive case. 

\begin{theorem}[Recursive Strong Invariant Synthesis] \label{thm:recstrong}
	Given a recursive program $P$ and a pre-condition $\precond$ that satisfies the compactness condition, the $\recursiveStrongInvAlgo$ algorithm solves the Strong Invariant Synthesis problem in subexponential time. This solution is sound and semi-complete.
\end{theorem}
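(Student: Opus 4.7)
The plan is to lift Lemma~\ref{lem:sound} and Lemma~\ref{lem:complete} to the recursive setting, reusing the Putinar-based reduction and the subexponential solver of~\cite{grigoriev1988solving} from Theorem~\ref{thm:sis}. Since Steps~1, 2, 3, and 4 are structurally unchanged, the only new content is that the additional template $\mu(f)$ from Step~1.a and the additional constraint pairs from Steps~2.a and~2.b must be shown to correctly encode the extra conditions in the definition of a recursive inductive invariant, namely consecution along abstract paths across function-call transitions and post-condition consecution. Once this semantic correspondence is established, the soundness, semi-completeness, and complexity arguments transfer almost verbatim.

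For soundness, I would take any solution $\sigma$ returned by Step~4 and form the pair $(\postcond_\sigma, \ind_\sigma) := (\mu_\sigma, \eta_\sigma)$ obtained by substituting the synthesized values of the $s$-variables into $\mu$ and $\eta$. Initiation and consecution on non-call transitions come for free from the proof of Lemma~\ref{lem:sound}. For consecution at a function-call transition $e = (\loc, \perp, \loc')$ with $\loc : v_0 := f'(v_1,\ldots,v_n)$, I would unfold the definition of a unit-length abstract path through $e$: the post-state differs from the pre-state only in $v_0$, whose new value must satisfy the post-condition of $f'$ together with the pre-conditions at $\lin^{f'}$ and $\loc'$. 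This is exactly the antecedent $\Gamma$ of the constraint pair generated in Step~2.a once the auxiliary variable $v_0^*$ is identified with the returned value; hence, satisfying $(\dagger)$ for this pair gives $\ind_\sigma(\loc) \Rightarrow \ind_\sigma(\loc')$ along every valid abstract path. Post-condition consecution at return-statement transitions is handled by the Step~2.b constraints exactly as ordinary consecution in the non-recursive case, since a return is syntactically an assignment to $\ret^f$ followed by a jump to $\lout^f$.

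For semi-completeness, I would assume a recursive inductive invariant $(\postcond, \ind)$ in the form of $(\mu, \eta)$, read off the corresponding valuation $\sigma$ of the $s$-variables, and verify that $\sigma$ satisfies the semantic entailment $\forall \nu~ \nu \satisfies \Gamma \Rightarrow g(\nu) > 0$ for every constraint pair $(\Gamma, g) \in \Lambda$. The new pairs from Steps~2.a and~2.b are by construction restatements of the abstract-path consecution and post-condition consecution axioms, both of which $(\postcond, \ind)$ satisfies by hypothesis. The compactness hypothesis propagates to every $\Gamma$, because each $\Gamma$ still conjoins a pre-condition carrying an atomic assertion $g_k \geq 0$ with compact non-negativity set. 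Corollary~\ref{col:put} then yields, for some $\Upsilon_\ind$ taken as the maximum degree of the sum-of-squares witnesses across the finitely many constraint pairs, a completion of $\sigma$ by $t$- and $\epsilon$-variables that solves the Step~3 system. Complexity is inherited: Steps~1, 1.a, 2, 2.a, 2.b and~3 all run in polynomial time because the number and size of the new constraint pairs is polynomial in the program for fixed $d$ and $\Upsilon$, and Step~4 remains subexponential by~\cite{grigoriev1988solving}.

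The main obstacle I anticipate is the function-call case: cleanly formalizing that the Step~2.a constraint pair is semantically equivalent to the abstract-path consecution requirement. One must keep the caller's and callee's frames distinct, track which variables live in which frame, and verify that the collection of substitutions $[v'_i \leftarrow v_i]$, $[\bar{v}'_i \leftarrow v_i]$, $[\ret^{f'} \leftarrow v_0^*]$ and $[v_0 \leftarrow v_0^*]$ jointly encodes the correct relationship between the pre- and post-states of the abstracted call as defined in Appendix~\ref{app:semantics}, in particular that only $v_0$ is updated and that the post-condition and both pre-conditions are enforced. Once this bookkeeping is checked, the remainder of the argument is a direct replay of the proof of Theorem~\ref{thm:sis}.
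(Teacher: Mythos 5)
Your proposal is correct and follows exactly the route the paper takes: the paper's own justification for Theorem~\ref{thm:recstrong} is the one-line observation that the soundness, completeness, and complexity arguments carry over from the non-recursive case, and your write-up simply fills in the details of that transfer (the semantic correspondence between the Step~2.a/2.b constraint pairs and abstract-path/post-condition consecution, plus the preservation of compactness and polynomial constraint size). No gaps; yours is a more explicit rendering of the same argument.
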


\begin{theorem}[Recursive Weak Invariant Synthesis] \label{thm:recweak}
	Given a recursive program $P$, a pre-condition $\precond$ that satisfies the compactness condition and a linear/quadratic objective function $\textsf{obj}$, the $\recursiveWeakInvAlgo$ algorithm reduces the Weak Invariant Synthesis Problem to QCLP/QCQP in \emph{polynomial time}. This reduction is sound and semi-complete.
\end{theorem}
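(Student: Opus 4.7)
The plan is to lift the non-recursive arguments (Theorem on Strong Invariant Synthesis and its weak analogue) to the recursive setting, by checking that the three extra pieces introduced by $\recursiveWeakInvAlgo$ (the post-condition template from Step 1.a, the function-call constraint pairs from Step 2.a, and the post-condition consecution constraint pairs from Step 2.b) are correctly handled by Putinar's positivstellensatz. Concretely, I would prove the three claims---polynomial-time reduction, soundness, and semi-completeness---separately.

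For the reduction and its complexity, I would first observe that Step 1.a produces one template $\mu(f)$ per function, whose size is polynomial in $d$ and $|\pvars^f|$ when $d$ is fixed. Steps 2.a and 2.b each generate one constraint pair per transition of the CFG (possibly multiplied by $n$, the number of atomic assertions per template), and each constraint pair is obtained by a symbolic substitution of the form $\phi[x\leftarrow y]$, which is polynomial-time. Step 3 is unchanged: for every constraint pair $(\Gamma,g)$, it writes an equation of the form $(\dagger)$ and equates the coefficients of corresponding monomials, yielding a system of quadratic equalities of size polynomial in the program, for fixed $d$ and $\Upsilon$. Finally, Step 4 is a single call to a QCLP solver with objective $\textsf{obj}$, so the whole pipeline is a polynomial-time reduction to QCLP.

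For soundness, let $\sigma$ be a solution returned by the QCLP solver in Step~4, and let $(\mu_\sigma,\eta_\sigma)$ be obtained by substituting $\sigma$ into the templates. Mirroring Lemma~\ref{lem:sound}, I need to show $(\mu_\sigma,\eta_\sigma)$ is a recursive inductive invariant. For each constraint pair $(\Gamma,g)\in \Lambda$, the quadratic equalities guarantee that $g=\epsilon+h_0+\sum_i h_i\cdot g_i$ with $\epsilon>0$ and each $h_i$ a sum of squares, so for every valuation $\nu\satisfies\Gamma$ we have $g(\nu)>0$. Applied to the $\Lambda_\text{in}$ pairs this gives initiation; applied to the ``ordinary'' pairs built in Step~2 it gives consecution across the usual CFG edges; applied to the Step~2.a pairs it gives consecution across function-call edges, precisely because $v_0^*$ is a fresh variable standing for the abstracted return value constrained by $\mu_\sigma(f')$; and applied to the Step~2.b pairs it gives post-condition consecution. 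This is exactly what the definition of recursive inductive invariants requires.

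For semi-completeness, I would follow the blueprint of Lemma~\ref{lem:complete}. Let $(\postcond,\ind)$ be any recursive inductive invariant of the prescribed template shape, and let $\sigma$ be the valuation of the $s$-variables it induces. By the three properties of a recursive inductive invariant, $\sigma$ satisfies every constraint pair $(\Gamma,g)$ generated in Steps~2, 2.a, and 2.b. The compactness hypothesis on $\precond$ ensures that each such $\Gamma$ contains a compact-sublevel-set assertion (in the function-call case this assertion is inherited from $\precond(\lin^{f'})$ after the substitution $v'_i\leftarrow v_i$, and the extra variable $v_0^*$ is confined by $\precond(\loc')[v_0\leftarrow v_0^*]$, which inherits compactness), so Corollary~\ref{col:put} applies and yields an $\epsilon>0$ and sum-of-squares $h_i$'s satisfying $(\dagger)$. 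For $\Upsilon$ larger than the maximum degree of these $h_i$'s across all constraint pairs, $\sigma$ extends to a feasible point of the QCLP, and hence the optimizer finds a solution at least as good as $\sigma$ under $\textsf{obj}$. The main obstacle here is the function-call case: I must argue that introducing the fresh variable $v_0^*$ does not destroy compactness. This is exactly what the bounded-reals model from Remark~\ref{rem:compact} provides, since $\precond(\loc')$ forces $|v_0^*|\le c$, and together with the analogous bound on the other variables in $\Gamma$ the corresponding semi-algebraic set remains compact, so Putinar's positivstellensatz still applies.
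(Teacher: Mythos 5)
Your proposal is correct and follows essentially the same route as the paper, which for this theorem says only that ``the soundness, completeness and complexity arguments carry over from the non-recursive case''; you have simply filled in the details of how Steps 1.a, 2.a and 2.b slot into the arguments of Lemma~\ref{lem:sound} and Lemma~\ref{lem:complete}. Your explicit treatment of the fresh variable $v_0^*$ and why the bounded-reals pre-conditions keep the relevant semi-algebraic set compact is a point the paper leaves implicit, and is the right thing to check.
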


	\newcommand{\fail}{\texttt{Failed}}
\newcommand{\timeout}{\texttt{Timed Out}}
\newcommand{\notapp}{\texttt{Not Applicable}}

\section{Experimental Results} \label{sec:exp}

\paragraph{Implementation} We implemented our algorithms for weak invariant generation in Java and used LOQO~\cite{loqo} for solving the QCLPs. All results were obtained on an Intel Core i5-7200U machine with 6 GB of RAM, running Ubuntu 18.04.

\paragraph{Previous Methods} We compare our approach against five previous methods, including three widely-used tools, namely ICRA~\cite{DBLP:journals/pacmpl/KincaidCBR18}, SeaHorn~\cite{seahorn}, and UAutomizer~\cite{uautomizer}, a state-of-the-art method using hypergeometric sequences~\cite{HumenbergerJK17}, and our own implementation of the previous method that provides completeness guarantees for polynomial invariants~\cite{kapur2004automatically}. 

\paragraph{Technical Parameters}
In our experiments, we set $n$ to be the maximal number of desired inequalities given at the same label, i.e.~we used the smallest possible number of conjuncts needed to represent the desired assertions.  Similarly, we let $d$ be the highest degree among desired inequalities in the input and $\Upsilon = d.$ Alternatively, our algorithm can be run iteratively, increasing the values of $d$ and $\Upsilon$ in a diagonal fashion, until the desired invariant is found. Moreover, we did not bound our variables using pre-conditions. 

\paragraph{Solver Errors} Ensuring stability of the QCLP solver is an orthogonal problem. However, to gain confidence that the soundness of our approach is not compromised by potentially cascading numerical errors in our solver, we checked each output using infinite-precision arithmetic, by plugging it back into Equation ($\dagger$) in Step~3 to make sure that (i)~every synthesized strict inequality has a positivity witness of $10^{-9}$ or larger, and (ii)~every instance of $(\dagger)$ corresponding to an equality $g = 0$ holds within an error margin of $10^{-9}.$

\paragraph{Non-recursive Results} We used the benchmarks in~\cite{benchmarks}, which contain programs, pre-conditions, and the desired post-conditions and assertions (invariants at a few labels) that are needed for their verification. The problem is to find an inductive invariant that proves the given post-conditions and assertions. We ignored benchmarks that contained non-polynomial assignments or pre-conditions. The results are summarized in Table~\ref{tab:res-non}. Our algorithm is not complete for \emph{non-strict} invariants (Remark~\ref{rem:no-str}), but it successfully generated all the desired invariants for these benchmarks.

\begin{table*}
	\begin{footnotesize}
		\setlength\extrarowheight{0pt}
	\begin{tabular}{|c|c|c|c|c|c||c|c|c|c|c|}
		\hline
		\textbf{Benchmark}	&$\mathbf{n}$	&$\mathbf{d}$	&$\vert \pvars \vert$ & $\vert \mathbf{S} \vert$	& \textbf{Ours} & \textbf{ICRA} & \textbf{SeaHorn} & \textbf{\cite{HumenbergerJK17}} & \textbf{UAutomizer} & \textbf{\cite{kapur2004automatically} using Z3}  \\ \hline \hline
		cohendiv	&3	&2	&6	& 17391 &15.2 & 0.7 & 0.1 & \notapp & 3.3 & \timeout\\ \hline
		divbin	&3	&2	&5	&  18351 &5.4 & \fail & \timeout & 0.2 & \fail & \timeout\\ \hline
		hard	&3	&2	&6	& 24975 &28.0 & \fail & \fail & 0.4 & \fail & \timeout\\ \hline
		mannadiv	&3	&2	&5	& 16245 &18.2 & \fail & 0.1 & 0.1 & \timeout & \timeout\\ \hline
		wensely	&2	&2	&7	& 18874  & 20.1 & \fail & \fail & 0.1 & \fail & \timeout\\ \hline
		sqrt	&2	&2	&4	& 4072 &5.8 & 0.8 & \fail & 0.1 & \timeout& \timeout\\ \hline
		dijkstra	&2	&2	&5	& 10156 &12.8 & \fail & \fail & \notapp & \fail& \timeout\\ \hline
		z3sqrt	&2	&2	&6	& 9404 & 12.9 & 0.5 & 0.1 & \notapp & \fail & \timeout\\ \hline
		freire1	&2	&2	&3	& 2432 & 26.5 & 0.6 & \fail & 0.1 & \fail & \timeout\\ \hline
		freire2	&2	&3	&4	& 9708 & 10.7 & 1.1 & \fail & 0.1 & \fail & \timeout\\ \hline
		euclidex1	&2	&2	&11	& 45756 &97.5 & \fail & \fail & \notapp & \timeout & \timeout\\ \hline
		euclidex2	&2	&2	&8	& 22468 &39.3 & \fail & \fail & 0.4 & \timeout& \timeout \\ \hline
		euclidex3	&2	&2	&13	& 72762 &203.1 & \fail & \fail & \notapp & \timeout& \timeout \\ \hline
		lcm1	&2	&2	&6	& 13361 & 17.9 & 0.8 & 0.1 & \notapp & 3.7& \timeout \\ \hline
		lcm2	&2	&2	&6	& 12517 & 18.7 & 0.8 & 0.1 & 0.1 & 3.2& \timeout\\ \hline
		prodbin	&2	&2	&5	& 10096 & 12.1 & \fail & \fail & \notapp & \timeout& \timeout \\ \hline
		prod4br	&2	&2	&6	& 21064 & 43.2 & \fail & \fail & \notapp & \timeout& \timeout \\ \hline
		cohencu	&2	&3	&5	& 16664 & 11.8 & 0.6 & \fail & 0.1 & \timeout& \timeout \\ \hline
		petter	&1	&2	&3	& 1080 & 20.4 & 0.5 & 0.1 & 0.1 & 2.7& \timeout\\ \hline
	\end{tabular}
\end{footnotesize}
	\caption{Experimental results over the benchmarks of~\cite{benchmarks}. $\vert \pvars \vert$ is number of program variables and $\vert \mathbf{S} \vert$ is size of the quadratic system, i.e.~number of constraints in Step~3. Runtimes are reported in seconds. We set a time-limit of 1 hour. }
	\label{tab:res-non}
\end{table*}

\paragraph{Recursive Results} Recursive results are shown in Table~\ref{tab:res-rec}. Our recursive benchmarks can be divided in two categories:
\begin{compactitem}

\item \emph{Reinforcement Learning.} We ran our approach on three programs from~\cite{qlearn} which are used for safety verification of reinforcement learning applications in cyber-physical systems such as Segway transporters. In these examples, the desired partial invariants are linear. However, the programs themselves contain polynomial assignments and conditions of degree $4$. Thus, approaches for linear invariant generation, such as~\cite{sankaranarayanan2004constraint22}, are not applicable. 

\item \emph{Classical Examples.} We considered Figure~\ref{fig:rec-example}, and its extensions to sums of squares and cubes,  to show that our algorithm is able to synthesize invariants of higher degrees. We also considered a program that recursively computes the largest power of $2$ that is no more than a given bound $x$, showing that our algorithm can handle recursive invariants with more than one assertion at each label. Finally, we generated invariants for an implementation of the Merge Sort algorithm that counts number of inversions in a sequence~\cite{cormen2009introduction}. See Appendix~\ref{app:rec-exp} for details. 
\end{compactitem}

\begin{table*}
	\begin{footnotesize}
		\setlength\extrarowheight{0pt}
	\begin{tabular}{|C{1.7cm}|C{1.5cm}|c|c|c|c|c||c|c|c|c|c|}
		\cline{2-12}
		\multicolumn{1}{c|}{}& \textbf{Benchmark}	&$\mathbf{n}$	&$\mathbf{d}$	&$\vert \pvars \vert$ & $\vert \mathbf{S} \vert$	& \textbf{Ours} & \textbf{ICRA} & \textbf{SeaHorn} & \textbf{\cite{HumenbergerJK17}} & \textbf{UAutomizer} & \textbf{\cite{kapur2004automatically} using Z3}\\ \hhline{~===========}\cline{1-1}
		\multirow{3}{*}{\makecell{Reinforcement\\Learning\\ \cite{qlearn}}}& inverted-pendulum & 1 & 3 & 7 & 9951 & 496.1 & \fail & \fail & \notapp & \fail & \timeout \\ \cline{2-12}
		& strict-inverted-pendulum & 4 & 2 & 7 & 14390 & 587.8 & 11.5 & \fail & \notapp & \fail & \timeout \\ \cline{2-12}
		& oscillator & 1 & 2 & 7 & 3552 & 39.7 & \fail & \fail & \notapp & \fail & \timeout  \\\hline \hline 
		\multirow{5}{*}{\makecell{Classical\\Examples\\ (Appendix~\ref{app:rec-exp})}}& recursive-sum & 1 & 2 & 3 & 1700 & 10.9 & 0.6 & \fail & \notapp & \timeout & \timeout \\ \cline{2-12}
		& recursive-square-sum & 1 & 3 & 3 & 1121 & 17.4 & \fail & \fail & \notapp & \fail & \timeout \\ \cline{2-12}
		& recursive-cube-sum & 1 & 4 & 3 & 15840 &221.2 & \fail & \fail & \notapp & \fail & \timeout \\ \cline{2-12}
		& pw2 & 2 & 1 & 3 & 430 & 5.4 & 0.7 & 0.1 & \notapp & \fail & \timeout \\ \cline{2-12}
		& merge-sort & 1 & 2 & 13 & 33002 & 78.1 & \fail & \fail & \notapp & \fail & \timeout \\ \hline
	\end{tabular}
\end{footnotesize}
	\caption{Experimental results over recursive programs. The results are reported in the same manner as in Table~\ref{tab:res-non}.}
	\label{tab:res-rec}
\end{table*}

\paragraph{Runtimes}
	Our runtimes over these benchmarks are typically under a minute, while the maximum runtime is close to 10 minutes. This shows that our approach is applicable in practice and does not suffer from the same impracticalities as~\cite{grigoriev1988solving}, which would take years on problems of this size~\cite{comparison}. 
	
\paragraph{Comparison with Complete Approaches} Almost none of the previous complete approaches are applicable to our benchmarks due to the existence of non-linear assignments and also because the desired invariants are polynomial \emph{inequalities} (See Table~\ref{tab:comp}). The only previous complete approach that handles polynomial programs and polynomial inequalities in invariants is~\cite{kapur2004automatically}. However, it relies on quantifier elimination and is extremely inefficient. We confirmed this point experimentally. We generated the constraints of~\cite{kapur2004automatically} for our benchmarks
and used state-of-the-art quantifier elimination / SMT solver tools (Mathematica~\cite{Mathematica}, QEPCAD~\cite{brwonq} and Z3~\cite{z3}) to solve them. In all cases, the solver either did not terminate, even when we increased the timeout to \emph{12 hours}, or returned with failure. This was the case even for our simple running example (Figure~\ref{fig:running}).

\paragraph{Comparison with Incomplete Approaches}
	As is evident in Tables~\ref{tab:res-non} and \ref{tab:res-rec}, our approach is  slower than previous sound methods that do not provide any completeness guarantee. However, it is able to handle a strictly more general set of benchmarks. Specifically, there are several benchmarks, especially among the recursive programs, where our approach was the only one that could successfully prove the desired assertions. Hence, there is currently a trade-off between accuracy (completeness guarantees) and efficiency. 
	While the semi-completeness guarantee is a key novelty of our approach, 
	we expect that advancements in quadratic programming, which is an active research topic in optimization, will narrow the runtime gap. 
	
	\paragraph{Generality and Types of Invariants} As shown in Tables~\ref{tab:res-non} and~\ref{tab:res-rec}, our approach is able to synthesize polynomial invariants of various degrees for a variety of benchmarks. None of the previous tools can handle all the benchmarks in Tables~\ref{tab:res-non} and~\ref{tab:res-rec}, and there are several instances where our approach is the only successful method. Moreover, we can also successfully synthesize invariants containing polynomial equalities. See Appendix~\ref{app:equality} for a detailed demonstration. This being said, the power of our approach becomes much more apparent when we consider recursive programs (Table~\ref{tab:res-rec}). On our recursive benchmarks, every other method fails in almost all cases. Additionally, our approach is also able to synthesize invariants for two or more functions that recursively call each other. See Appendix~\ref{app:twofunc} for a detailed example of this. Finally, in Appendix~\ref{app:frac}, we show a classical program that approximates an irrational number using its continued fraction representation. This example requires invariants of degree $5$, which are beyond the reach of previous methods. We manually tried all the methods in Table~\ref{tab:comp} and every one of them was either not applicable to this example or failed to synthesize the required invariants. In contrast, our approach could easily handle this program.
	\section{Conclusion}
We presented a subexponential sound and semi-complete method to generate polynomial invariants for programs with polynomial updates. On the practical side, we demonstrated how to generate such invariants using QCLP.
Previous methods were either extremely inefficient or lacked completeness guarantees. An interesting, but non-trivial, direction of future work is to exploit special structural properties of CFGs, such as sparsity and low treewidth~\cite{thorup1998all}, to speed up the solution of our QCLP instances. Such techniques have previously been applied for solving linear programs~\cite{yen2015sparse} and systems of linear equations~\cite{fomin2018fully}, but not QCLPs. 

\newpage
\paragraph{Acknowledgments} The research was partially supported by Austrian Science Fund (FWF) Grant
No.~NFN S11407-N23 (RiSE/SHiNE), Vienna
Science and Technology Fund (WWTF) Project ICT15-003, National Natural Science foundation of China (NSFC) Grant No.~61802254, Facebook PhD Fellowship Program, and DOC Fellowship No.~24956 of the Austrian Academy of Sciences (\"OAW).
	
	\newpage
	\bibliography{refs}


\begin{thebibliography}{82}


\ifx \showCODEN    \undefined \def \showCODEN     #1{\unskip}     \fi
\ifx \showDOI      \undefined \def \showDOI       #1{#1}\fi
\ifx \showISBNx    \undefined \def \showISBNx     #1{\unskip}     \fi
\ifx \showISBNxiii \undefined \def \showISBNxiii  #1{\unskip}     \fi
\ifx \showISSN     \undefined \def \showISSN      #1{\unskip}     \fi
\ifx \showLCCN     \undefined \def \showLCCN      #1{\unskip}     \fi
\ifx \shownote     \undefined \def \shownote      #1{#1}          \fi
\ifx \showarticletitle \undefined \def \showarticletitle #1{#1}   \fi
\ifx \showURL      \undefined \def \showURL       {\relax}        \fi
\providecommand\bibfield[2]{#2}
\providecommand\bibinfo[2]{#2}
\providecommand\natexlab[1]{#1}
\providecommand\showeprint[2][]{arXiv:#2}

\bibitem[\protect\citeauthoryear{Adj{\'e}, Garoche, and Magron}{Adj{\'e}
  et~al\mbox{.}}{2015}]%
        {adje2015property}
\bibfield{author}{\bibinfo{person}{Assal{\'e} Adj{\'e}},
  \bibinfo{person}{Pierre-Lo{\"\i}c Garoche}, {and} \bibinfo{person}{Victor
  Magron}.} \bibinfo{year}{2015}\natexlab{}.
\newblock \showarticletitle{Property-based polynomial invariant generation
  using sums-of-squares optimization}. In \bibinfo{booktitle}{\emph{SAS}}.
  \bibinfo{pages}{235--251}.
\newblock


\bibitem[\protect\citeauthoryear{Adj{\'e}, Gaubert, and Goubault}{Adj{\'e}
  et~al\mbox{.}}{2010}]%
        {adje2010coupling}
\bibfield{author}{\bibinfo{person}{Assal{\'e} Adj{\'e}},
  \bibinfo{person}{St{\'e}phane Gaubert}, {and} \bibinfo{person}{Eric
  Goubault}.} \bibinfo{year}{2010}\natexlab{}.
\newblock \showarticletitle{Coupling policy iteration with semi-definite
  relaxation to compute accurate numerical invariants in static analysis}. In
  \bibinfo{booktitle}{\emph{ESOP}}. \bibinfo{pages}{23--42}.
\newblock


\bibitem[\protect\citeauthoryear{Albarghouthi, Li, Gurfinkel, and
  Chechik}{Albarghouthi et~al\mbox{.}}{2012}]%
        {albarghouthi2012ufo}
\bibfield{author}{\bibinfo{person}{Aws Albarghouthi}, \bibinfo{person}{Yi Li},
  \bibinfo{person}{Arie Gurfinkel}, {and} \bibinfo{person}{Marsha Chechik}.}
  \bibinfo{year}{2012}\natexlab{}.
\newblock \showarticletitle{Ufo: A framework for abstraction-and
  interpolation-based software verification}. In
  \bibinfo{booktitle}{\emph{CAV}}. Springer, \bibinfo{pages}{672--678}.
\newblock


\bibitem[\protect\citeauthoryear{Allen}{Allen}{1970}]%
        {allen1970control}
\bibfield{author}{\bibinfo{person}{Frances~E Allen}.}
  \bibinfo{year}{1970}\natexlab{}.
\newblock \showarticletitle{Control flow analysis}. In
  \bibinfo{booktitle}{\emph{ACM Sigplan Notices}}, Vol.~\bibinfo{volume}{5}.
  ACM, \bibinfo{pages}{1--19}.
\newblock


\bibitem[\protect\citeauthoryear{Alur, Dang, and Ivan{\v{c}}i{\'c}}{Alur
  et~al\mbox{.}}{2006}]%
        {alur2006predicate}
\bibfield{author}{\bibinfo{person}{Rajeev Alur}, \bibinfo{person}{Thao Dang},
  {and} \bibinfo{person}{Franjo Ivan{\v{c}}i{\'c}}.}
  \bibinfo{year}{2006}\natexlab{}.
\newblock \showarticletitle{Predicate abstraction for reachability analysis of
  hybrid systems}.
\newblock \bibinfo{journal}{\emph{ACM transactions on embedded computing
  systems (TECS)}} \bibinfo{volume}{5}, \bibinfo{number}{1}
  (\bibinfo{year}{2006}), \bibinfo{pages}{152--199}.
\newblock


\bibitem[\protect\citeauthoryear{Andersen and Andersen}{Andersen and
  Andersen}{2018}]%
        {mosek}
\bibfield{author}{\bibinfo{person}{Erling~D. Andersen} {and}
  \bibinfo{person}{Knud~D. Andersen}.} \bibinfo{year}{2018}\natexlab{}.
\newblock \bibinfo{title}{{MOSEK} Optimization Suite}.
\newblock   (\bibinfo{year}{2018}).
\newblock
\urldef\tempurl%
\url{https://www.mosek.com/}
\showURL{%
\tempurl}


\bibitem[\protect\citeauthoryear{Bagnara, Rodr{\'{\i}}guez{-}Carbonell, and
  Zaffanella}{Bagnara et~al\mbox{.}}{2005}]%
        {DBLP:conf/sas/BagnaraRZ05}
\bibfield{author}{\bibinfo{person}{Roberto Bagnara}, \bibinfo{person}{Enric
  Rodr{\'{\i}}guez{-}Carbonell}, {and} \bibinfo{person}{Enea Zaffanella}.}
  \bibinfo{year}{2005}\natexlab{}.
\newblock \showarticletitle{Generation of Basic Semi-algebraic Invariants Using
  Convex Polyhedra}. In \bibinfo{booktitle}{\emph{SAS}}.
  \bibinfo{pages}{19--34}.
\newblock


\bibitem[\protect\citeauthoryear{Basu, Pollack, and Coste-Roy}{Basu
  et~al\mbox{.}}{2007}]%
        {basu2007algorithms}
\bibfield{author}{\bibinfo{person}{Saugata Basu}, \bibinfo{person}{Richard
  Pollack}, {and} \bibinfo{person}{Marie-Fran{\c{c}}oise Coste-Roy}.}
  \bibinfo{year}{2007}\natexlab{}.
\newblock \bibinfo{booktitle}{\emph{Algorithms in real algebraic geometry}}.
\newblock \bibinfo{publisher}{Springer}.
\newblock


\bibitem[\protect\citeauthoryear{Ben~Sassi, Sankaranarayanan, Chen, and
  {\'A}brah{\'a}m}{Ben~Sassi et~al\mbox{.}}{2015}]%
        {dyn2}
\bibfield{author}{\bibinfo{person}{Mohamed~Amin Ben~Sassi},
  \bibinfo{person}{Sriram Sankaranarayanan}, \bibinfo{person}{Xin Chen}, {and}
  \bibinfo{person}{Erika {\'A}brah{\'a}m}.} \bibinfo{year}{2015}\natexlab{}.
\newblock \showarticletitle{Linear relaxations of polynomial positivity for
  polynomial lyapunov function synthesis}.
\newblock \bibinfo{journal}{\emph{IMA Journal of Mathematical Control and
  Information}} \bibinfo{volume}{33}, \bibinfo{number}{3}
  (\bibinfo{year}{2015}), \bibinfo{pages}{723--756}.
\newblock


\bibitem[\protect\citeauthoryear{Bradley, Manna, and Sipma}{Bradley
  et~al\mbox{.}}{2005}]%
        {bradley2005linear}
\bibfield{author}{\bibinfo{person}{Aaron~R Bradley}, \bibinfo{person}{Zohar
  Manna}, {and} \bibinfo{person}{Henny~B Sipma}.}
  \bibinfo{year}{2005}\natexlab{}.
\newblock \showarticletitle{Linear ranking with reachability}. In
  \bibinfo{booktitle}{\emph{CAV}}. Springer, \bibinfo{pages}{491--504}.
\newblock


\bibitem[\protect\citeauthoryear{Brown}{Brown}{[n. d.]}]%
        {brwonq}
\bibfield{author}{\bibinfo{person}{Christopher~W Brown}.} \bibinfo{year}{[n.
  d.]}\natexlab{}.
\newblock \bibinfo{title}{{QEPCAD} - Quantifier Elimination by Partial
  Cylindrical Algebraic Decomposition}.
\newblock   (\bibinfo{year}{[n. d.]}).
\newblock
\urldef\tempurl%
\url{https://www.usna.edu/CS/qepcadweb/B/QEPCAD.html}
\showURL{%
\tempurl}


\bibitem[\protect\citeauthoryear{Chakarov and Sankaranarayanan}{Chakarov and
  Sankaranarayanan}{2013}]%
        {chakarov2013probabilistic}
\bibfield{author}{\bibinfo{person}{Aleksandar Chakarov} {and}
  \bibinfo{person}{Sriram Sankaranarayanan}.} \bibinfo{year}{2013}\natexlab{}.
\newblock \showarticletitle{Probabilistic program analysis with martingales}.
  In \bibinfo{booktitle}{\emph{CAV}}. Springer, \bibinfo{pages}{511--526}.
\newblock


\bibitem[\protect\citeauthoryear{Chakarov and Sankaranarayanan}{Chakarov and
  Sankaranarayanan}{2014}]%
        {DBLP:conf/sas/ChakarovS14}
\bibfield{author}{\bibinfo{person}{Aleksandar Chakarov} {and}
  \bibinfo{person}{Sriram Sankaranarayanan}.} \bibinfo{year}{2014}\natexlab{}.
\newblock \showarticletitle{Expectation Invariants for Probabilistic Program
  Loops as Fixed Points}. In \bibinfo{booktitle}{\emph{SAS}}.
  \bibinfo{pages}{85--100}.
\newblock


\bibitem[\protect\citeauthoryear{Chatterjee, Fu, and Goharshady}{Chatterjee
  et~al\mbox{.}}{2016}]%
        {cav2016}
\bibfield{author}{\bibinfo{person}{Krishnendu Chatterjee},
  \bibinfo{person}{Hongfei Fu}, {and} \bibinfo{person}{Amir~Kafshdar
  Goharshady}.} \bibinfo{year}{2016}\natexlab{}.
\newblock \showarticletitle{Termination Analysis of Probabilistic Programs
  Through Positivstellensatz's}. In \bibinfo{booktitle}{\emph{CAV}}.
  \bibinfo{pages}{3--22}.
\newblock


\bibitem[\protect\citeauthoryear{Chatterjee, Fu, and Goharshady}{Chatterjee
  et~al\mbox{.}}{2017a}]%
        {cav17}
\bibfield{author}{\bibinfo{person}{Krishnendu Chatterjee},
  \bibinfo{person}{Hongfei Fu}, {and} \bibinfo{person}{Amir~Kafshdar
  Goharshady}.} \bibinfo{year}{2017}\natexlab{a}.
\newblock \showarticletitle{Non-polynomial Worst-Case Analysis of Recursive
  Programs}. In \bibinfo{booktitle}{\emph{CAV}}. \bibinfo{pages}{41--63}.
\newblock


\bibitem[\protect\citeauthoryear{Chatterjee, Novotn{\'{y}}, and
  Zikelic}{Chatterjee et~al\mbox{.}}{2017b}]%
        {DBLP:conf/popl/ChatterjeeNZ17}
\bibfield{author}{\bibinfo{person}{Krishnendu Chatterjee},
  \bibinfo{person}{Petr Novotn{\'{y}}}, {and} \bibinfo{person}{Dorde Zikelic}.}
  \bibinfo{year}{2017}\natexlab{b}.
\newblock \showarticletitle{Stochastic invariants for probabilistic
  termination}. In \bibinfo{booktitle}{\emph{POPL}}. \bibinfo{pages}{145--160}.
\newblock


\bibitem[\protect\citeauthoryear{Chen, Xia, Yang, Zhan, and Zhou}{Chen
  et~al\mbox{.}}{2007}]%
        {chen2007discoveringictac}
\bibfield{author}{\bibinfo{person}{Yinghua Chen}, \bibinfo{person}{Bican Xia},
  \bibinfo{person}{Lu Yang}, \bibinfo{person}{Naijun Zhan}, {and}
  \bibinfo{person}{Chaochen Zhou}.} \bibinfo{year}{2007}\natexlab{}.
\newblock \showarticletitle{Discovering non-linear ranking functions by solving
  semi-algebraic systems}. In \bibinfo{booktitle}{\emph{ICTAC}}. Springer,
  \bibinfo{pages}{34--49}.
\newblock


\bibitem[\protect\citeauthoryear{Chen, Hong, Wang, and Zhang}{Chen
  et~al\mbox{.}}{2015}]%
        {chencav15}
\bibfield{author}{\bibinfo{person}{Yu-Fang Chen}, \bibinfo{person}{Chih-Duo
  Hong}, \bibinfo{person}{Bow-Yaw Wang}, {and} \bibinfo{person}{Lijun Zhang}.}
  \bibinfo{year}{2015}\natexlab{}.
\newblock \showarticletitle{Counterexample-Guided Polynomial Loop Invariant
  Generation by Lagrange Interpolation}. In \bibinfo{booktitle}{\emph{CAV}}.
  \bibinfo{pages}{658--674}.
\newblock


\bibitem[\protect\citeauthoryear{Col{\'{o}}n, Sankaranarayanan, and
  Sipma}{Col{\'{o}}n et~al\mbox{.}}{2003}]%
        {sriram}
\bibfield{author}{\bibinfo{person}{Michael Col{\'{o}}n},
  \bibinfo{person}{Sriram Sankaranarayanan}, {and} \bibinfo{person}{Henny
  Sipma}.} \bibinfo{year}{2003}\natexlab{}.
\newblock \showarticletitle{Linear Invariant Generation Using Non-linear
  Constraint Solving}. In \bibinfo{booktitle}{\emph{CAV}}.
  \bibinfo{pages}{420--432}.
\newblock


\bibitem[\protect\citeauthoryear{Col{\'o}n and Sipma}{Col{\'o}n and
  Sipma}{2001}]%
        {coloon2001synthesis}
\bibfield{author}{\bibinfo{person}{Michael~A Col{\'o}n} {and}
  \bibinfo{person}{Henny~B Sipma}.} \bibinfo{year}{2001}\natexlab{}.
\newblock \showarticletitle{Synthesis of linear ranking functions}. In
  \bibinfo{booktitle}{\emph{TACAS}}. Springer, \bibinfo{pages}{67--81}.
\newblock


\bibitem[\protect\citeauthoryear{Cormen, Leiserson, Rivest, and Stein}{Cormen
  et~al\mbox{.}}{2009}]%
        {cormen2009introduction}
\bibfield{author}{\bibinfo{person}{Thomas~H Cormen}, \bibinfo{person}{Charles~E
  Leiserson}, \bibinfo{person}{Ronald~L Rivest}, {and}
  \bibinfo{person}{Clifford Stein}.} \bibinfo{year}{2009}\natexlab{}.
\newblock \bibinfo{booktitle}{\emph{Introduction to algorithms}}.
\newblock \bibinfo{publisher}{MIT press}.
\newblock


\bibitem[\protect\citeauthoryear{Cousot}{Cousot}{2005}]%
        {DBLP:conf/vmcai/Cousot05}
\bibfield{author}{\bibinfo{person}{Patrick Cousot}.}
  \bibinfo{year}{2005}\natexlab{}.
\newblock \showarticletitle{Proving Program Invariance and Termination by
  Parametric Abstraction, Lagrangian Relaxation and Semidefinite Programming}.
  In \bibinfo{booktitle}{\emph{VMCAI}}. \bibinfo{pages}{1--24}.
\newblock


\bibitem[\protect\citeauthoryear{Cousot and Cousot}{Cousot and Cousot}{1977}]%
        {cousot1977abstract}
\bibfield{author}{\bibinfo{person}{Patrick Cousot} {and}
  \bibinfo{person}{Radhia Cousot}.} \bibinfo{year}{1977}\natexlab{}.
\newblock \showarticletitle{Abstract interpretation: a unified lattice model
  for static analysis of programs by construction or approximation of
  fixpoints}. In \bibinfo{booktitle}{\emph{POPL}}. ACM,
  \bibinfo{pages}{238--252}.
\newblock


\bibitem[\protect\citeauthoryear{Cousot, Cousot, Feret, Mauborgne, Min{\'{e}},
  Monniaux, and Rival}{Cousot et~al\mbox{.}}{2005}]%
        {DBLP:conf/esop/CousotCFMMMR05}
\bibfield{author}{\bibinfo{person}{Patrick Cousot}, \bibinfo{person}{Radhia
  Cousot}, \bibinfo{person}{J{\'{e}}r{\^{o}}me Feret}, \bibinfo{person}{Laurent
  Mauborgne}, \bibinfo{person}{Antoine Min{\'{e}}}, \bibinfo{person}{David
  Monniaux}, {and} \bibinfo{person}{Xavier Rival}.}
  \bibinfo{year}{2005}\natexlab{}.
\newblock \showarticletitle{The ASTRE{\'{E}} Analyzer}. In
  \bibinfo{booktitle}{\emph{ESOP}}. \bibinfo{pages}{21--30}.
\newblock


\bibitem[\protect\citeauthoryear{Cousot and Halbwachs}{Cousot and
  Halbwachs}{1978}]%
        {cousot1978automatic}
\bibfield{author}{\bibinfo{person}{Patrick Cousot} {and}
  \bibinfo{person}{Nicolas Halbwachs}.} \bibinfo{year}{1978}\natexlab{}.
\newblock \showarticletitle{Automatic discovery of linear restraints among
  variables of a program}. In \bibinfo{booktitle}{\emph{POPL}}. ACM,
  \bibinfo{pages}{84--96}.
\newblock


\bibitem[\protect\citeauthoryear{Csallner, Tillmann, and Smaragdakis}{Csallner
  et~al\mbox{.}}{2008}]%
        {DBLP:conf/icse/CsallnerTS08}
\bibfield{author}{\bibinfo{person}{Christoph Csallner},
  \bibinfo{person}{Nikolai Tillmann}, {and} \bibinfo{person}{Yannis
  Smaragdakis}.} \bibinfo{year}{2008}\natexlab{}.
\newblock \showarticletitle{{DySy}: dynamic symbolic execution for invariant
  inference}. In \bibinfo{booktitle}{\emph{ICSE}}. \bibinfo{pages}{281--290}.
\newblock


\bibitem[\protect\citeauthoryear{De~Moura and Bj{\o}rner}{De~Moura and
  Bj{\o}rner}{2008}]%
        {z3}
\bibfield{author}{\bibinfo{person}{Leonardo De~Moura} {and}
  \bibinfo{person}{Nikolaj Bj{\o}rner}.} \bibinfo{year}{2008}\natexlab{}.
\newblock \showarticletitle{Z3: An efficient SMT solver}. In
  \bibinfo{booktitle}{\emph{TACAS}}. \bibinfo{pages}{337--340}.
\newblock


\bibitem[\protect\citeauthoryear{de~Oliveira, Bensalem, and
  Prevosto}{de~Oliveira et~al\mbox{.}}{2016}]%
        {DBLP:conf/atva/OliveiraBP16}
\bibfield{author}{\bibinfo{person}{Steven de Oliveira}, \bibinfo{person}{Saddek
  Bensalem}, {and} \bibinfo{person}{Virgile Prevosto}.}
  \bibinfo{year}{2016}\natexlab{}.
\newblock \showarticletitle{Polynomial Invariants by Linear Algebra}. In
  \bibinfo{booktitle}{\emph{ATVA}}. \bibinfo{pages}{479--494}.
\newblock


\bibitem[\protect\citeauthoryear{Dillig, Dillig, Li, and McMillan}{Dillig
  et~al\mbox{.}}{2013a}]%
        {dill}
\bibfield{author}{\bibinfo{person}{Isil Dillig}, \bibinfo{person}{Thomas
  Dillig}, \bibinfo{person}{Boyang Li}, {and} \bibinfo{person}{Ken McMillan}.}
  \bibinfo{year}{2013}\natexlab{a}.
\newblock \showarticletitle{Inductive invariant generation via abductive
  inference}. In \bibinfo{booktitle}{\emph{OOPSLA}}.
\newblock


\bibitem[\protect\citeauthoryear{Dillig, Dillig, Li, and McMillan}{Dillig
  et~al\mbox{.}}{2013b}]%
        {DBLP:conf/oopsla/DilligDLM13}
\bibfield{author}{\bibinfo{person}{Isil Dillig}, \bibinfo{person}{Thomas
  Dillig}, \bibinfo{person}{Boyang Li}, {and} \bibinfo{person}{Kenneth~L.
  McMillan}.} \bibinfo{year}{2013}\natexlab{b}.
\newblock \showarticletitle{Inductive invariant generation via abductive
  inference}. In \bibinfo{booktitle}{\emph{OOPSLA}}. \bibinfo{pages}{443--456}.
\newblock


\bibitem[\protect\citeauthoryear{Farzan and Kincaid}{Farzan and
  Kincaid}{2015}]%
        {DBLP:conf/fmcad/FarzanK15}
\bibfield{author}{\bibinfo{person}{Azadeh Farzan} {and}
  \bibinfo{person}{Zachary Kincaid}.} \bibinfo{year}{2015}\natexlab{}.
\newblock \showarticletitle{Compositional Recurrence Analysis}. In
  \bibinfo{booktitle}{\emph{FMCAD}}. \bibinfo{pages}{57--64}.
\newblock


\bibitem[\protect\citeauthoryear{Feng, Zhang, Jansen, Zhan, and Xia}{Feng
  et~al\mbox{.}}{2017}]%
        {DBLP:conf/atva/FengZJZX17}
\bibfield{author}{\bibinfo{person}{Yijun Feng}, \bibinfo{person}{Lijun Zhang},
  \bibinfo{person}{David~N. Jansen}, \bibinfo{person}{Naijun Zhan}, {and}
  \bibinfo{person}{Bican Xia}.} \bibinfo{year}{2017}\natexlab{}.
\newblock \showarticletitle{Finding Polynomial Loop Invariants for
  Probabilistic Programs}. In \bibinfo{booktitle}{\emph{ATVA}}.
  \bibinfo{pages}{400--416}.
\newblock


\bibitem[\protect\citeauthoryear{Floyd}{Floyd}{1993}]%
        {floyd1993assigning}
\bibfield{author}{\bibinfo{person}{Robert~W Floyd}.}
  \bibinfo{year}{1993}\natexlab{}.
\newblock \showarticletitle{Assigning meanings to programs}.
\newblock In \bibinfo{booktitle}{\emph{Program Verification}}.
  \bibinfo{publisher}{Springer}, \bibinfo{pages}{65--81}.
\newblock


\bibitem[\protect\citeauthoryear{Fomin, Lokshtanov, Saurabh, Pilipczuk, and
  Wrochna}{Fomin et~al\mbox{.}}{2018}]%
        {fomin2018fully}
\bibfield{author}{\bibinfo{person}{Fedor~V Fomin}, \bibinfo{person}{Daniel
  Lokshtanov}, \bibinfo{person}{Saket Saurabh}, \bibinfo{person}{Micha{\l}
  Pilipczuk}, {and} \bibinfo{person}{Marcin Wrochna}.}
  \bibinfo{year}{2018}\natexlab{}.
\newblock \showarticletitle{Fully polynomial-time parameterized computations
  for graphs and matrices of low treewidth}.
\newblock \bibinfo{journal}{\emph{ACM Transactions on Algorithms (TALG)}}
  \bibinfo{volume}{14}, \bibinfo{number}{3} (\bibinfo{year}{2018}),
  \bibinfo{pages}{1--45}.
\newblock


\bibitem[\protect\citeauthoryear{Garg, Neider, Madhusudan, and Roth}{Garg
  et~al\mbox{.}}{2016}]%
        {DBLP:conf/popl/0001NMR16}
\bibfield{author}{\bibinfo{person}{Pranav Garg}, \bibinfo{person}{Daniel
  Neider}, \bibinfo{person}{P. Madhusudan}, {and} \bibinfo{person}{Dan Roth}.}
  \bibinfo{year}{2016}\natexlab{}.
\newblock \showarticletitle{Learning invariants using decision trees and
  implication counterexamples}. In \bibinfo{booktitle}{\emph{POPL}}.
  \bibinfo{pages}{499--512}.
\newblock


\bibitem[\protect\citeauthoryear{Giacobazzi and Ranzato}{Giacobazzi and
  Ranzato}{1997}]%
        {giacobazzi1997completeness}
\bibfield{author}{\bibinfo{person}{Roberto Giacobazzi} {and}
  \bibinfo{person}{Francesco Ranzato}.} \bibinfo{year}{1997}\natexlab{}.
\newblock \showarticletitle{Completeness in abstract interpretation: A domain
  perspective}. In \bibinfo{booktitle}{\emph{AMAST}}.
  \bibinfo{pages}{231--245}.
\newblock


\bibitem[\protect\citeauthoryear{Golub and Van~Loan}{Golub and
  Van~Loan}{1996}]%
        {gene1996matrix}
\bibfield{author}{\bibinfo{person}{Gene~H Golub} {and}
  \bibinfo{person}{Charles~F Van~Loan}.} \bibinfo{year}{1996}\natexlab{}.
\newblock \bibinfo{booktitle}{\emph{Matrix computations}}.
\newblock \bibinfo{publisher}{Johns Hopkins Universtiy Press}.
\newblock


\bibitem[\protect\citeauthoryear{Grigor'ev and Vorobjov}{Grigor'ev and
  Vorobjov}{1988}]%
        {grigoriev1988solving}
\bibfield{author}{\bibinfo{person}{Dima Grigor'ev} {and}
  \bibinfo{person}{Nicolai Vorobjov}.} \bibinfo{year}{1988}\natexlab{}.
\newblock \showarticletitle{Solving systems of polynomial inequalities in
  subexponential time}.
\newblock \bibinfo{journal}{\emph{Journal of Symbolic Computation}}
  \bibinfo{volume}{5}, \bibinfo{number}{1/2} (\bibinfo{year}{1988}),
  \bibinfo{pages}{37--64}.
\newblock


\bibitem[\protect\citeauthoryear{Gulwani, Srivastava, and Venkatesan}{Gulwani
  et~al\mbox{.}}{2009}]%
        {DBLP:conf/vmcai/GulwaniSV09}
\bibfield{author}{\bibinfo{person}{Sumit Gulwani}, \bibinfo{person}{Saurabh
  Srivastava}, {and} \bibinfo{person}{Ramarathnam Venkatesan}.}
  \bibinfo{year}{2009}\natexlab{}.
\newblock \showarticletitle{Constraint-Based Invariant Inference over Predicate
  Abstraction}. In \bibinfo{booktitle}{\emph{VMCAI}}.
  \bibinfo{pages}{120--135}.
\newblock


\bibitem[\protect\citeauthoryear{Gurfinkel, Kahsai, Komuravelli, and
  Navas}{Gurfinkel et~al\mbox{.}}{2015}]%
        {seahorn}
\bibfield{author}{\bibinfo{person}{Arie Gurfinkel}, \bibinfo{person}{Temesghen
  Kahsai}, \bibinfo{person}{Anvesh Komuravelli}, {and}
  \bibinfo{person}{Jorge~A. Navas}.} \bibinfo{year}{2015}\natexlab{}.
\newblock \showarticletitle{The SeaHorn Verification Framework}. In
  \bibinfo{booktitle}{\emph{CAV}}. \bibinfo{pages}{343--361}.
\newblock


\bibitem[\protect\citeauthoryear{Halbwachs, Proy, and Roumanoff}{Halbwachs
  et~al\mbox{.}}{1997}]%
        {halbwachs1997verification}
\bibfield{author}{\bibinfo{person}{Nicolas Halbwachs},
  \bibinfo{person}{Yann-Erick Proy}, {and} \bibinfo{person}{Patrick
  Roumanoff}.} \bibinfo{year}{1997}\natexlab{}.
\newblock \showarticletitle{Verification of real-time systems using linear
  relation analysis}.
\newblock \bibinfo{journal}{\emph{Formal Methods in System Design}}
  \bibinfo{volume}{11}, \bibinfo{number}{2} (\bibinfo{year}{1997}),
  \bibinfo{pages}{157--185}.
\newblock


\bibitem[\protect\citeauthoryear{Heizmann, Christ, Dietsch, Ermis, Hoenicke,
  Lindenmann, Nutz, Schilling, and Podelski}{Heizmann et~al\mbox{.}}{2013}]%
        {uautomizer}
\bibfield{author}{\bibinfo{person}{Matthias Heizmann},
  \bibinfo{person}{J{\"{u}}rgen Christ}, \bibinfo{person}{Daniel Dietsch},
  \bibinfo{person}{Evren Ermis}, \bibinfo{person}{Jochen Hoenicke},
  \bibinfo{person}{Markus Lindenmann}, \bibinfo{person}{Alexander Nutz},
  \bibinfo{person}{Christian Schilling}, {and} \bibinfo{person}{Andreas
  Podelski}.} \bibinfo{year}{2013}\natexlab{}.
\newblock \showarticletitle{Ultimate Automizer with SMTInterpol}. In
  \bibinfo{booktitle}{\emph{TACAS}}. \bibinfo{pages}{641--643}.
\newblock


\bibitem[\protect\citeauthoryear{Henzinger and Ho}{Henzinger and Ho}{1994}]%
        {henzinger1994model}
\bibfield{author}{\bibinfo{person}{Thomas Henzinger} {and}
  \bibinfo{person}{Pei-Hsin Ho}.} \bibinfo{year}{1994}\natexlab{}.
\newblock \showarticletitle{Model checking strategies for linear hybrid
  systems}.
\newblock  (\bibinfo{year}{1994}).
\newblock


\bibitem[\protect\citeauthoryear{Higham}{Higham}{2009}]%
        {cholesky}
\bibfield{author}{\bibinfo{person}{Nicholas~J Higham}.}
  \bibinfo{year}{2009}\natexlab{}.
\newblock \showarticletitle{Cholesky factorization}.
\newblock \bibinfo{journal}{\emph{Wiley Interdisciplinary Reviews:
  Computational Statistics}} \bibinfo{volume}{1}, \bibinfo{number}{2}
  (\bibinfo{year}{2009}).
\newblock


\bibitem[\protect\citeauthoryear{Hong}{Hong}{1991}]%
        {comparison}
\bibfield{author}{\bibinfo{person}{Hoon Hong}.}
  \bibinfo{year}{1991}\natexlab{}.
\newblock \bibinfo{title}{Comparison of several decision algorithms for the
  existential theory of the reals}.
\newblock   (\bibinfo{year}{1991}).
\newblock


\bibitem[\protect\citeauthoryear{Horn and Johnson}{Horn and Johnson}{1990}]%
        {horn1990matrix}
\bibfield{author}{\bibinfo{person}{Roger~A Horn} {and}
  \bibinfo{person}{Charles~R Johnson}.} \bibinfo{year}{1990}\natexlab{}.
\newblock \bibinfo{booktitle}{\emph{Matrix analysis}}.
\newblock \bibinfo{publisher}{Cambridge university press}.
\newblock


\bibitem[\protect\citeauthoryear{Hrushovski, Ouaknine, Pouly, and
  Worrell}{Hrushovski et~al\mbox{.}}{2018}]%
        {joel}
\bibfield{author}{\bibinfo{person}{Ehud Hrushovski},
  \bibinfo{person}{Jo{\"{e}}l Ouaknine}, \bibinfo{person}{Amaury Pouly}, {and}
  \bibinfo{person}{James Worrell}.} \bibinfo{year}{2018}\natexlab{}.
\newblock \showarticletitle{Polynomial Invariants for Affine Programs}. In
  \bibinfo{booktitle}{\emph{LICS}}. \bibinfo{pages}{530--539}.
\newblock


\bibitem[\protect\citeauthoryear{Huang, Fu, Chatterjee, and Goharshady}{Huang
  et~al\mbox{.}}{2019}]%
        {huang2019modular}
\bibfield{author}{\bibinfo{person}{Mingzhang Huang}, \bibinfo{person}{Hongfei
  Fu}, \bibinfo{person}{Krishnendu Chatterjee}, {and}
  \bibinfo{person}{Amir~Kafshdar Goharshady}.} \bibinfo{year}{2019}\natexlab{}.
\newblock \showarticletitle{Modular verification for almost-sure termination of
  probabilistic programs}. In \bibinfo{booktitle}{\emph{OOPSLA}}.
  \bibinfo{pages}{1--29}.
\newblock


\bibitem[\protect\citeauthoryear{Humenberger, Jaroschek, and
  Kov{\'{a}}cs}{Humenberger et~al\mbox{.}}{2017}]%
        {HumenbergerJK17}
\bibfield{author}{\bibinfo{person}{Andreas Humenberger},
  \bibinfo{person}{Maximilian Jaroschek}, {and} \bibinfo{person}{Laura
  Kov{\'{a}}cs}.} \bibinfo{year}{2017}\natexlab{}.
\newblock \showarticletitle{Automated Generation of Non-Linear Loop Invariants
  Utilizing Hypergeometric Sequences}. In \bibinfo{booktitle}{\emph{ISSAC}}.
  \bibinfo{pages}{221--228}.
\newblock


\bibitem[\protect\citeauthoryear{Kapur}{Kapur}{2004}]%
        {kapur2004automatically}
\bibfield{author}{\bibinfo{person}{Deepak Kapur}.}
  \bibinfo{year}{2004}\natexlab{}.
\newblock \showarticletitle{Automatically generating loop invariants using
  quantifier elimination, preliminary report}. In
  \bibinfo{booktitle}{\emph{ACA}}.
\newblock


\bibitem[\protect\citeauthoryear{Katoen, McIver, Meinicke, and Morgan}{Katoen
  et~al\mbox{.}}{2010}]%
        {DBLP:conf/sas/KatoenMMM10}
\bibfield{author}{\bibinfo{person}{Joost{-}Pieter Katoen},
  \bibinfo{person}{Annabelle McIver}, \bibinfo{person}{Larissa Meinicke}, {and}
  \bibinfo{person}{Carroll~C. Morgan}.} \bibinfo{year}{2010}\natexlab{}.
\newblock \showarticletitle{Linear-Invariant Generation for Probabilistic
  Programs: - Automated Support for Proof-Based Methods}. In
  \bibinfo{booktitle}{\emph{SAS}}. \bibinfo{pages}{390--406}.
\newblock


\bibitem[\protect\citeauthoryear{Kincaid, Breck, Boroujeni, and Reps}{Kincaid
  et~al\mbox{.}}{2017}]%
        {DBLP:conf/pldi/KincaidBBR17}
\bibfield{author}{\bibinfo{person}{Zachary Kincaid}, \bibinfo{person}{Jason
  Breck}, \bibinfo{person}{Ashkan~Forouhi Boroujeni}, {and}
  \bibinfo{person}{Thomas~W. Reps}.} \bibinfo{year}{2017}\natexlab{}.
\newblock \showarticletitle{Compositional recurrence analysis revisited}. In
  \bibinfo{booktitle}{\emph{PLDI}}. \bibinfo{pages}{248--262}.
\newblock


\bibitem[\protect\citeauthoryear{Kincaid, Cyphert, Breck, and Reps}{Kincaid
  et~al\mbox{.}}{2018}]%
        {DBLP:journals/pacmpl/KincaidCBR18}
\bibfield{author}{\bibinfo{person}{Zachary Kincaid}, \bibinfo{person}{John
  Cyphert}, \bibinfo{person}{Jason Breck}, {and} \bibinfo{person}{Thomas~W.
  Reps}.} \bibinfo{year}{2018}\natexlab{}.
\newblock \showarticletitle{Non-linear reasoning for invariant synthesis}. In
  \bibinfo{booktitle}{\emph{POPL}}. \bibinfo{pages}{54:1--54:33}.
\newblock


\bibitem[\protect\citeauthoryear{Lin, Wu, Yang, and Zeng}{Lin
  et~al\mbox{.}}{2014}]%
        {DBLP:journals/fcsc/LinWYZ14}
\bibfield{author}{\bibinfo{person}{Wang Lin}, \bibinfo{person}{Min Wu},
  \bibinfo{person}{Zhengfeng Yang}, {and} \bibinfo{person}{Zhenbing Zeng}.}
  \bibinfo{year}{2014}\natexlab{}.
\newblock \showarticletitle{Proving total correctness and generating
  preconditions for loop programs via symbolic-numeric computation methods}.
\newblock \bibinfo{journal}{\emph{Frontiers of Computer Science}}
  \bibinfo{volume}{8}, \bibinfo{number}{2} (\bibinfo{year}{2014}),
  \bibinfo{pages}{192--202}.
\newblock


\bibitem[\protect\citeauthoryear{Manna and Pnueli}{Manna and Pnueli}{1995}]%
        {DBLP:books/daglib/0080029}
\bibfield{author}{\bibinfo{person}{Zohar Manna} {and} \bibinfo{person}{Amir
  Pnueli}.} \bibinfo{year}{1995}\natexlab{}.
\newblock \bibinfo{booktitle}{\emph{Temporal verification of reactive systems:
  Safety}}.
\newblock \bibinfo{publisher}{Springer}.
\newblock
\showISBNx{978-0-387-94459-3}


\bibitem[\protect\citeauthoryear{McMillan}{McMillan}{2008}]%
        {DBLP:conf/tacas/McMillan08}
\bibfield{author}{\bibinfo{person}{Kenneth~L. McMillan}.}
  \bibinfo{year}{2008}\natexlab{}.
\newblock \showarticletitle{Quantified Invariant Generation Using an
  Interpolating Saturation Prover}. In \bibinfo{booktitle}{\emph{TACAS}}.
  \bibinfo{pages}{413--427}.
\newblock


\bibitem[\protect\citeauthoryear{M{\"{u}}ller{-}Olm and
  Seidl}{M{\"{u}}ller{-}Olm and Seidl}{2004}]%
        {DBLP:journals/ipl/Muller-OlmS04}
\bibfield{author}{\bibinfo{person}{Markus M{\"{u}}ller{-}Olm} {and}
  \bibinfo{person}{Helmut Seidl}.} \bibinfo{year}{2004}\natexlab{}.
\newblock \showarticletitle{Computing polynomial program invariants}.
\newblock \bibinfo{journal}{\emph{Inform. Process. Lett.}}
  \bibinfo{volume}{91}, \bibinfo{number}{5} (\bibinfo{year}{2004}).
\newblock


\bibitem[\protect\citeauthoryear{Nechaev}{Nechaev}{2011}]%
        {contfrac}
\bibfield{author}{\bibinfo{person}{Vasilii~Il'ich Nechaev}.}
  \bibinfo{year}{2011}\natexlab{}.
\newblock \showarticletitle{Continued fraction}.
\newblock \bibinfo{journal}{\emph{Encyclopedia of Mathematics}}
  (\bibinfo{year}{2011}).
\newblock
\urldef\tempurl%
\url{http://www.encyclopediaofmath.org/index.php?title=Continued_fraction&oldid=30344}
\showURL{%
\tempurl}


\bibitem[\protect\citeauthoryear{Ngo, Carbonneaux, and Hoffmann}{Ngo
  et~al\mbox{.}}{2018}]%
        {ngo2018bounded}
\bibfield{author}{\bibinfo{person}{Van~Chan Ngo}, \bibinfo{person}{Quentin
  Carbonneaux}, {and} \bibinfo{person}{Jan Hoffmann}.}
  \bibinfo{year}{2018}\natexlab{}.
\newblock \showarticletitle{Bounded expectations: resource analysis for
  probabilistic programs}. In \bibinfo{booktitle}{\emph{PLDI}}. ACM,
  \bibinfo{pages}{496--512}.
\newblock


\bibitem[\protect\citeauthoryear{Nguyen, Kapur, Weimer, and Forrest}{Nguyen
  et~al\mbox{.}}{2012}]%
        {DBLP:conf/icse/NguyenKWF12}
\bibfield{author}{\bibinfo{person}{ThanhVu Nguyen}, \bibinfo{person}{Deepak
  Kapur}, \bibinfo{person}{Westley Weimer}, {and} \bibinfo{person}{Stephanie
  Forrest}.} \bibinfo{year}{2012}\natexlab{}.
\newblock \showarticletitle{Using dynamic analysis to discover polynomial and
  array invariants}. In \bibinfo{booktitle}{\emph{ICSE}}.
  \bibinfo{pages}{683--693}.
\newblock


\bibitem[\protect\citeauthoryear{Oustry, Tacchi, and Henrion}{Oustry
  et~al\mbox{.}}{2019}]%
        {dyn1}
\bibfield{author}{\bibinfo{person}{Antoine Oustry}, \bibinfo{person}{Matteo
  Tacchi}, {and} \bibinfo{person}{Didier Henrion}.}
  \bibinfo{year}{2019}\natexlab{}.
\newblock \showarticletitle{Inner approximations of the maximal positively
  invariant set for polynomial dynamical systems}.
\newblock \bibinfo{journal}{\emph{IEEE Control Systems Letters}}
  \bibinfo{volume}{3}, \bibinfo{number}{3} (\bibinfo{year}{2019}),
  \bibinfo{pages}{733--738}.
\newblock


\bibitem[\protect\citeauthoryear{Padon, McMillan, Panda, Sagiv, and
  Shoham}{Padon et~al\mbox{.}}{2016}]%
        {padon2016ivy}
\bibfield{author}{\bibinfo{person}{Oded Padon}, \bibinfo{person}{Kenneth~L
  McMillan}, \bibinfo{person}{Aurojit Panda}, \bibinfo{person}{Mooly Sagiv},
  {and} \bibinfo{person}{Sharon Shoham}.} \bibinfo{year}{2016}\natexlab{}.
\newblock \showarticletitle{Ivy: safety verification by interactive
  generalization}.
\newblock \bibinfo{journal}{\emph{PLDI}} (\bibinfo{year}{2016}),
  \bibinfo{pages}{614--630}.
\newblock


\bibitem[\protect\citeauthoryear{Putinar}{Putinar}{1993}]%
        {putinar1993positive}
\bibfield{author}{\bibinfo{person}{Mihai Putinar}.}
  \bibinfo{year}{1993}\natexlab{}.
\newblock \showarticletitle{Positive polynomials on compact semi-algebraic
  sets}.
\newblock \bibinfo{journal}{\emph{Indiana University Mathematics Journal}}
  \bibinfo{volume}{42}, \bibinfo{number}{3} (\bibinfo{year}{1993}),
  \bibinfo{pages}{969--984}.
\newblock


\bibitem[\protect\citeauthoryear{Research}{Research}{[n. d.]}]%
        {Mathematica}
\bibfield{author}{\bibinfo{person}{Wolfram Research}.} \bibinfo{year}{[n.
  d.]}\natexlab{}.
\newblock \bibinfo{title}{Mathematica, {V}ersion 12.0}.
\newblock   (\bibinfo{year}{[n. d.]}).
\newblock
\urldef\tempurl%
\url{https://www.wolfram.com/mathematica}
\showURL{%
\tempurl}


\bibitem[\protect\citeauthoryear{Rodr{\'{\i}}guez{-}Carbonell}{Rodr{\'{\i}}guez{-}Carbonell}{2018}]%
        {benchmarks}
\bibfield{author}{\bibinfo{person}{Enric Rodr{\'{\i}}guez{-}Carbonell}.}
  \bibinfo{year}{2018}\natexlab{}.
\newblock \bibinfo{title}{Some programs that need polynomial invariants in
  order to be verified}.
\newblock   (\bibinfo{year}{2018}).
\newblock
\urldef\tempurl%
\url{http://www.cs.upc.edu/~erodri/webpage/polynomial_invariants/list.html}
\showURL{%
\tempurl}


\bibitem[\protect\citeauthoryear{Rodr{\'\i}guez-Carbonell and
  Kapur}{Rodr{\'\i}guez-Carbonell and Kapur}{2004}]%
        {rodriguez2004automatic}
\bibfield{author}{\bibinfo{person}{Enric Rodr{\'\i}guez-Carbonell} {and}
  \bibinfo{person}{Deepak Kapur}.} \bibinfo{year}{2004}\natexlab{}.
\newblock \showarticletitle{Automatic generation of polynomial loop invariants:
  Algebraic foundations}. In \bibinfo{booktitle}{\emph{ISSAC}}. ACM,
  \bibinfo{pages}{266--273}.
\newblock


\bibitem[\protect\citeauthoryear{Rodr{\'{\i}}guez{-}Carbonell and
  Kapur}{Rodr{\'{\i}}guez{-}Carbonell and Kapur}{2007}]%
        {DBLP:journals/scp/Rodriguez-CarbonellK07}
\bibfield{author}{\bibinfo{person}{Enric Rodr{\'{\i}}guez{-}Carbonell} {and}
  \bibinfo{person}{Deepak Kapur}.} \bibinfo{year}{2007}\natexlab{}.
\newblock \showarticletitle{Automatic generation of polynomial invariants of
  bounded degree using abstract interpretation}.
\newblock \bibinfo{journal}{\emph{Science of Computer Programming}}
  \bibinfo{volume}{64}, \bibinfo{number}{1} (\bibinfo{year}{2007}),
  \bibinfo{pages}{54--75}.
\newblock


\bibitem[\protect\citeauthoryear{Sankaranarayanan}{Sankaranarayanan}{2011}]%
        {dyn3}
\bibfield{author}{\bibinfo{person}{Sriram Sankaranarayanan}.}
  \bibinfo{year}{2011}\natexlab{}.
\newblock \showarticletitle{Automatic abstraction of non-linear systems using
  change of bases transformations}. In \bibinfo{booktitle}{\emph{HSCC}}.
  \bibinfo{pages}{143--152}.
\newblock


\bibitem[\protect\citeauthoryear{Sankaranarayanan, Sipma, and
  Manna}{Sankaranarayanan et~al\mbox{.}}{2004a}]%
        {DBLP:conf/popl/SankaranarayananSM04}
\bibfield{author}{\bibinfo{person}{Sriram Sankaranarayanan},
  \bibinfo{person}{Henny Sipma}, {and} \bibinfo{person}{Zohar Manna}.}
  \bibinfo{year}{2004}\natexlab{a}.
\newblock \showarticletitle{Non-linear loop invariant generation using
  {Gr{\"{o}}bner} bases}. In \bibinfo{booktitle}{\emph{POPL}}.
  \bibinfo{pages}{318--329}.
\newblock


\bibitem[\protect\citeauthoryear{Sankaranarayanan, Sipma, and
  Manna}{Sankaranarayanan et~al\mbox{.}}{2004b}]%
        {sankaranarayanan2004constraint22}
\bibfield{author}{\bibinfo{person}{Sriram Sankaranarayanan},
  \bibinfo{person}{Henny~B Sipma}, {and} \bibinfo{person}{Zohar Manna}.}
  \bibinfo{year}{2004}\natexlab{b}.
\newblock \showarticletitle{Constraint-based linear-relations analysis}. In
  \bibinfo{booktitle}{\emph{SAS}}. Springer, \bibinfo{pages}{53--68}.
\newblock


\bibitem[\protect\citeauthoryear{Sharma and Aiken}{Sharma and Aiken}{2016}]%
        {DBLP:journals/fmsd/SharmaA16}
\bibfield{author}{\bibinfo{person}{Rahul Sharma} {and} \bibinfo{person}{Alex
  Aiken}.} \bibinfo{year}{2016}\natexlab{}.
\newblock \showarticletitle{From invariant checking to invariant inference
  using randomized search}.
\newblock \bibinfo{journal}{\emph{Formal Methods in System Design}}
  \bibinfo{volume}{48}, \bibinfo{number}{3} (\bibinfo{year}{2016}),
  \bibinfo{pages}{235--256}.
\newblock


\bibitem[\protect\citeauthoryear{Singh, P{\"u}schel, and Vechev}{Singh
  et~al\mbox{.}}{2015}]%
        {vechev2}
\bibfield{author}{\bibinfo{person}{Gagandeep Singh}, \bibinfo{person}{Markus
  P{\"u}schel}, {and} \bibinfo{person}{Martin Vechev}.}
  \bibinfo{year}{2015}\natexlab{}.
\newblock \showarticletitle{Making numerical program analysis fast}. In
  \bibinfo{booktitle}{\emph{PLDI}}. ACM, \bibinfo{pages}{303--313}.
\newblock


\bibitem[\protect\citeauthoryear{Singh, P{\"u}schel, and Vechev}{Singh
  et~al\mbox{.}}{2017}]%
        {vechev1}
\bibfield{author}{\bibinfo{person}{Gagandeep Singh}, \bibinfo{person}{Markus
  P{\"u}schel}, {and} \bibinfo{person}{Martin Vechev}.}
  \bibinfo{year}{2017}\natexlab{}.
\newblock \showarticletitle{Fast polyhedra abstract domain}. In
  \bibinfo{booktitle}{\emph{POPL}}. \bibinfo{pages}{46--59}.
\newblock


\bibitem[\protect\citeauthoryear{Stengle}{Stengle}{1974}]%
        {stengle1974nullstellensatz}
\bibfield{author}{\bibinfo{person}{Gilbert Stengle}.}
  \bibinfo{year}{1974}\natexlab{}.
\newblock \showarticletitle{A Nullstellensatz and a Positivstellensatz in
  semialgebraic geometry}.
\newblock \bibinfo{journal}{\emph{Math. Ann.}} \bibinfo{volume}{207},
  \bibinfo{number}{2} (\bibinfo{year}{1974}), \bibinfo{pages}{87--97}.
\newblock


\bibitem[\protect\citeauthoryear{Sturmfels}{Sturmfels}{2002}]%
        {sturmfels2002solving}
\bibfield{author}{\bibinfo{person}{Bernd Sturmfels}.}
  \bibinfo{year}{2002}\natexlab{}.
\newblock \bibinfo{booktitle}{\emph{Solving systems of polynomial equations}}.
\newblock \bibinfo{publisher}{American Mathematical Society}.
\newblock


\bibitem[\protect\citeauthoryear{Thorup}{Thorup}{1998}]%
        {thorup1998all}
\bibfield{author}{\bibinfo{person}{Mikkel Thorup}.}
  \bibinfo{year}{1998}\natexlab{}.
\newblock \showarticletitle{All structured programs have small tree width and
  good register allocation}.
\newblock \bibinfo{journal}{\emph{Information and Computation}}
  \bibinfo{volume}{142}, \bibinfo{number}{2} (\bibinfo{year}{1998}),
  \bibinfo{pages}{159--181}.
\newblock


\bibitem[\protect\citeauthoryear{Vanderbei}{Vanderbei}{2006}]%
        {loqo}
\bibfield{author}{\bibinfo{person}{Robert~J. Vanderbei}.}
  \bibinfo{year}{2006}\natexlab{}.
\newblock \bibinfo{booktitle}{\emph{{LOQO} User's Manual - Version 4.05}}.
\newblock \bibinfo{type}{{T}echnical {R}eport}. \bibinfo{institution}{Princeton
  University}.
\newblock


\bibitem[\protect\citeauthoryear{Wang, Fu, Goharshady, Chatterjee, Qin, and
  Shi}{Wang et~al\mbox{.}}{2019}]%
        {wang2019cost}
\bibfield{author}{\bibinfo{person}{Peixin Wang}, \bibinfo{person}{Hongfei Fu},
  \bibinfo{person}{Amir~Kafshdar Goharshady}, \bibinfo{person}{Krishnendu
  Chatterjee}, \bibinfo{person}{Xudong Qin}, {and} \bibinfo{person}{Wenjun
  Shi}.} \bibinfo{year}{2019}\natexlab{}.
\newblock \showarticletitle{Cost analysis of nondeterministic probabilistic
  programs}. In \bibinfo{booktitle}{\emph{PLDI}}. \bibinfo{pages}{204--220}.
\newblock


\bibitem[\protect\citeauthoryear{Weisstein}{Weisstein}{2018}]%
        {periodic}
\bibfield{author}{\bibinfo{person}{Eric Weisstein}.}
  \bibinfo{year}{2018}\natexlab{}.
\newblock \showarticletitle{Periodic Continued Fraction}. In
  \bibinfo{booktitle}{\emph{MathWorld--A Wolfram Web Resource}}.
\newblock
\urldef\tempurl%
\url{http://mathworld.wolfram.com/PeriodicContinuedFraction.html}
\showURL{%
\tempurl}


\bibitem[\protect\citeauthoryear{Yang, Zhou, Zhan, and Xia}{Yang
  et~al\mbox{.}}{2010}]%
        {DBLP:journals/fcsc/YangZZX10}
\bibfield{author}{\bibinfo{person}{Lu Yang}, \bibinfo{person}{Chaochen Zhou},
  \bibinfo{person}{Naijun Zhan}, {and} \bibinfo{person}{Bican Xia}.}
  \bibinfo{year}{2010}\natexlab{}.
\newblock \showarticletitle{Recent advances in program verification through
  computer algebra}.
\newblock \bibinfo{journal}{\emph{Frontiers of Computer Science in China}}
  \bibinfo{volume}{4}, \bibinfo{number}{1} (\bibinfo{year}{2010}),
  \bibinfo{pages}{1--16}.
\newblock


\bibitem[\protect\citeauthoryear{Yen, Zhong, Hsieh, Ravikumar, and Dhillon}{Yen
  et~al\mbox{.}}{2015}]%
        {yen2015sparse}
\bibfield{author}{\bibinfo{person}{Ian En-Hsu Yen}, \bibinfo{person}{Kai
  Zhong}, \bibinfo{person}{Cho-Jui Hsieh}, \bibinfo{person}{Pradeep~K
  Ravikumar}, {and} \bibinfo{person}{Inderjit~S Dhillon}.}
  \bibinfo{year}{2015}\natexlab{}.
\newblock \showarticletitle{Sparse linear programming via primal and dual
  augmented coordinate descent}. In \bibinfo{booktitle}{\emph{NIPS}}.
  \bibinfo{pages}{2368--2376}.
\newblock


\bibitem[\protect\citeauthoryear{Zhu, Xiong, Magill, and Jagannathan}{Zhu
  et~al\mbox{.}}{2019}]%
        {qlearn}
\bibfield{author}{\bibinfo{person}{He Zhu}, \bibinfo{person}{Zikang Xiong},
  \bibinfo{person}{Stephen Magill}, {and} \bibinfo{person}{Suresh
  Jagannathan}.} \bibinfo{year}{2019}\natexlab{}.
\newblock \showarticletitle{An Inductive Synthesis Framework for Verifiable
  Reinforcement Learning}. In \bibinfo{booktitle}{\emph{PLDI}}.
  \bibinfo{pages}{686--701}.
\newblock


\end{thebibliography}
	
	\clearpage
	\appendix
\section{Detailed Syntax} \label{app:syntax}

\paragraph{Polynomial Arithmetic Expressions} A \emph{polynomial arithmetic expression} $\mathfrak{e}$ over $\pvars$ is an expression built from the variables in $\pvars$, real constants, and the arithmetic operations of addition, subtraction and multiplication. 

\paragraph{Propositional Polynomial Predicates} A \emph{propositional polynomial predicate} is a propositional formula built from (i)~\emph{atomic assertions} of the form $\mathfrak{e}_1 \bowtie \mathfrak{e}_2$, where $\mathfrak{e}_1$ and $\mathfrak{e}_2$ are polynomial arithmetic expressions, and $\bowtie \quad\!\!\! \in \{ <, \leq, \geq, > \}$ and (ii)~propositional connectives $\vee$, $\wedge$ and $\neg$. The satisfaction relation $\satisfies$ between a valuation $\nu$ and a propositional polynomial predicate $\phi$ is defined in the natural way, i.e.~by substituting the variables with their values in $\nu$ and evaluating the resulting boolean expression.

\paragraph{Detailed Grammar} Figure~\ref{fig:syntax-complete} provides a more detailed grammar specifying the syntax of non-deterministic recursive programs with polynomial assignments and guards.

\begin{figure}[!h]
	\scalebox{0.9}{
	$\begin{array}{rrl}
	
	\obj{prog} & ::= &  \obj{func}\\
	&& \mid \obj{func}~\obj{prog}\\
	
	\obj{func} & ::= & \obj{fname}~\kw{$($}~\obj{varlist}~\kw{$)$}~\kw{$\{$}~\obj{stmtlist}~\kw{$\}$} \\ 
	
	\obj{varlist} & ::= & \obj{var}\\
	&& \mid \obj{var}~\kw{$,$}~\obj{varlist}\\
	
	\obj{stmtlist} & ::= & \obj{stmt} \\
	&& \mid \obj{stmt}~\kw{$;$}~\obj{stmtlist}\\
	
	\obj{stmt} & ::= & \kw{skip}\\
	&& \mid \obj{var}~\kw{$:=$}~\obj{expr}\\
	&& \mid \kw{if}~\obj{bexpr}~\kw{then}~\obj{stmtlist}~\kw{else}~\obj{stmtlist}~\kw{fi}\\
	&& \mid \kw{if}~\kw{$\star$}~\kw{then}~\obj{stmtlist}~\kw{else}~\obj{stmtlist}~\kw{fi}\\
	&& \mid \kw{while}~\obj{bexpr}~\kw{do}~\obj{stmtlist}~\kw{od}\\
	&& \mid \obj{var}:=\obj{fname}~\kw{$($}~\obj{varlist}~\kw{$)$}\\
	&& \mid \kw{return}~\obj{expr}
	\\
	
	\obj{bexpr} & ::= & \obj{literal}\\
	&& \mid \kw{$\neg$}~\obj{bexpr} \\
	&& \mid \obj{bexpr}~\kw{$\vee$}~\obj{bexpr}\\
	&& \mid \obj{bexpr}~\kw{$\wedge$}~\obj{bexpr}\\
	
	\obj{literal} & ::= & \obj{expr}~\kw{$<$}~\obj{expr}\\
	&& \mid \obj{expr}~\kw{$\leq$}~\obj{expr}\\
	&& \mid \obj{expr}~\kw{$\geq$}~\obj{expr}\\
	&& \mid \obj{expr}~\kw{$>$}~\obj{expr}\\

	\obj{expr} & ::= & \obj{var}\\
	&& \mid \obj{constant} \\
	&& \mid \obj{expr}~\kw{$+$}~\obj{expr}\\
	&& \mid \obj{expr}~\kw{$-$}~\obj{expr}\\
	&& \mid \obj{expr}~\kw{$*$}~\obj{expr}\\
	
	\end{array}$}
	\caption{Detailed Syntax of Non-deterministic Recursive Programs}
	\label{fig:syntax-complete}
\end{figure}

Below, we intuitively explain some aspects of the syntax: 
\begin{compactitem}
	\item \emph{Variables and Function Names.} Expressions $\obj{var}$ (resp. $\obj{fname}$) range over the set $\pvars$ (resp.~$\fnames$).
	\item \emph{Arithmetic and Boolean Expressions.} Expressions $\obj{expr}$ range over all polynomial arithmetic expressions over program variables. Similarly, expressions $\obj{bexpr}$ range over propositional polynomial predicates.
	\item \emph{Statements.} A statement can be one of the following:
	\begin{compactitem}
		\item A special \kw{skip} statement which does not do anything,
		\item An assignment statement ($\obj{var}$~\kw{$:=$}~$\obj{expr}$), 
		\item A conditional branch (\kw{if}~$\obj{bexpr}$) in which the $\obj{bexpr}$ serves as the branching condition;
		\item A non-deterministic branch (\kw{if~$\star$}),
		\item A while-loop (\kw{while}~$\obj{bexpr}$) in which the $\obj{bexpr}$ serves as the loop guard;
		\item A function call statement ($\obj{var}:=\obj{fname}~\kw{$($}~\obj{varlist}~\kw{$)$}$) which calls the function specified by $\obj{fname}$ using the parameters specified in the $\obj{varlist}$ and assigns the resulting returned value to the variable on its left hand side;
		\item A return statement (\kw{return}~$\obj{expr}$) that ends the current function and returns the value of the expression~$\obj{expr}$ and the control to the parent function or ends the program if there is no parent function.
	\end{compactitem}
	\item \emph{Programs and Functions.} A program is simply a list of functions. Each function has a name, a set of parameters and a body. The function body is a sequence of statements. We assume that there is a distinguished function $\mainFunction$ that is the starting point of the program.
\end{compactitem}

\paragraph{Syntactic Assumptions} We assume that each function in $\fnames$ is defined exactly once in the program, function headers do not contain duplicate variables, and each function call statement provides exactly as many parameters as defined in the header of the function that is being called. Moreover, we assume that no variable appears in both sides of a function call statement. 

\paragraph{Simple vs.~Recursive Programs} We call a program \emph{simple}, or \emph{non-recursive}, if it contains only one function and no function call statements. Otherwise, we say that the program is \emph{recursive}.

\section{Detailed Semantics} \label{app:semantics}

\paragraph{Valuations} A \emph{valuation} over a set $W \subseteq \pvars$ of variables is a function $\nu : W \rightarrow \mathbb{R}$ that assigns a real value to each variable in $W.$ We denote the set of all valuations on $W$ by $\vals{W}$. We sometimes use a valuation $\nu$ over a set $W' \subset W$ of variables as a valuation over $W$. In such cases, we assume that $\nu(w) = 0$ for every $w \in W \setminus W'$. Given a valuation $\nu$, a variable $v$ and $x \in \mathbb{R}$, we write $\nu[v \leftarrow x]$ to denote a valuation $\nu'$ such that $\nu'(v) = x$ and $\nu'$ agrees with $\nu$ for every other variable.

\paragraph{Notation} We define $\pvars_*^f := \{ \ret^f, v_1, \ldots, v_n, \bar{v}_1, \ldots, \bar{v}_n \}$ and let $\pvars^f$ be the set containing all members of $\pvars_*^f$, as well as any variable that appears somewhere in the body of the function $f$.  W.l.o.g.~we assume that the $\pvars^f$'s are pairwise disjoint. Moreover,  we write $\vals{f}$ as a shorthand for $\vals{\pvars^f}$. In other words, $\vals{f}$ is the set of all valuations over the variables that appear in $f$, including its header, its body and its new variables.  Similarly, we define  $\labels^f$ as the set of labels that occur in $f$.

\paragraph{Configurations} A \emph{stack element} $\xi$ is a tuple $(f, \loc, \nu)$ where $f \in \fnames$ is a function and $\loc \in \labels^f$ and $\nu \in \vals{f}$ are respectively a label and a valuation in $f$. A configuration $\kappa = \langle \xi_i\rangle_{i=0}^n$ is a finite sequence of stack elements. 

\paragraph{Notation} Given a configuration $\kappa$ and a stack element $\xi$, we write $\kappa \cdot \xi$ to denote the configuration obtained by adding $\xi$ to the end of $\kappa$. Also, we define $\kappa^{-i}$ as the sequence obtained by removing the last $i$ stack elements of $\kappa$.

\paragraph{Runs} A \emph{run} is an infinite sequence of configurations that starts at the first label of $\mainFunction$ and follows the requirements of the CFG. Intuitively, a run models the sequence of configurations that are met in an execution of the program.

\paragraph{Formal Definition of Runs} Given a program $P$ and its CFG $(\fnames, \labels, \rightarrow)$, a \emph{run} is a sequence $\rho = \{\kappa_i\}_{i=0}^\infty$ of configurations such that:
\begin{compactitem}
	\item $\kappa_0 = \langle (\mainFunction, \lin^\mainFunction, \nu) \rangle$ for some valuation $\nu \in \vals{\mainFunction}$. Intuitively, a run begins from the $\mainFunction$ function.
	\item If $\vert \kappa_i \vert = 0$, then $\vert \kappa_{i+1} \vert = 0$, too. Informally, this case corresponds to when the program has already terminated.
	\item Let $\xi = (f, \loc, \nu)$ be the last stack element in $\kappa_i$. Then, $\kappa_{i+1}$ should satisfy one of the following rules:
	\begin{compactitem}
		\item[(a)] $\loc \in \locs_a$ and $(\loc, \alpha, \loc') \in \rightarrow$ and $\kappa_{i+1} = \kappa_i^{-1} \cdot (f, \loc', \alpha(\nu))$.
		\item[(b)] $\loc \in \locs_b$ and $(\loc, \phi, \loc') \in \rightarrow$ where $\phi$ is a predicate such that $\nu \satisfies \phi$ and $\kappa_{i+1} = \kappa_i^{-1} \cdot (f, \loc', \nu)$.
		\item[(c)] $\loc \in \locs_c$, the statement corresponding to $\loc$ is the function call $v_0 := f'(v_1, v_2, \ldots, v_n)$,  the header of the function $f'$ is $f'(v'_1, v'_2, \ldots, v'_n)$, and $\kappa_{i+1} = \kappa_i \cdot (f', \lin^{f'}, \nu')$ where
		$$
		\nu'(x) = 
		\left\{\begin{matrix}
		\nu(v_i) & &  ~~x \in \{ v'_i, \bar{v}'_i \}\\ 
		0 & &\text{otherwise}
		\end{matrix}\right..
		$$
		Intuitively, this corresponds to adding the new function to the stack. 
		\item[(d)] $\loc \in \locs_d$ and $(\loc, \star, \loc') \in \rightarrow$ and $\kappa_{i+1} = \kappa_i^{-1} \cdot (f, \loc', \nu)$.
		\item[(e$_1$)] $\loc \in \locs_e$ and $\vert \kappa_i \vert = 1$ and $\vert \kappa_{i+1} \vert = 0$. Informally, this case corresponds to the termination of the program when the $\mainFunction$ function returns and the stack becomes empty.
		\item[(e$_2$)] $\loc \in \locs_e$, $\vert \kappa_i \vert > 1$, $\hat{\xi} = (
		\hat{f}, \hat{\loc}, \hat{\nu})$ is the stack element before $\xi$ in $\kappa_i$, the label $\hat{\loc}$ corresponds to a function call of the form $v_0 := f(v_1, \ldots, v_n)$, $(\hat{\loc}, \perp, \hat{\loc'}) \in \rightarrow$ and $\kappa_{i+1} = \kappa_i^{-2} \cdot (\hat{f}, \hat{\loc'}, \hat{\nu}[v_0 \leftarrow \nu(\ret^f)])$. Informally, this corresponds to returning control from the function $f$ into its parent function $\hat{f}$. 
	\end{compactitem}
\end{compactitem}

\paragraph{Return Assumption} We assume that every execution of a function ends with a \textbf{return} statement. If this is not the case, we can add ``\textbf{return} 0'' to suitable points of the program to obtain an equivalent program that satisfies this condition.

\paragraph{Semi-runs and Paths} A \emph{semi-run} starting at a stack element $\xi = (f, \loc, \nu)$ is a sequence $\varrho = \langle \kappa_i \rangle_{i=0}^\infty$ that satisfies all the conditions of a run, except that it starts with $\kappa_0 = \langle \xi \rangle$. A \emph{path} $\pi = \langle\kappa_i \rangle_{i=0}^n$ of length $n$ is a finite prefix of a semi-run.

\paragraph{Valid Runs} A run $\rho$ is \emph{valid} w.r.t.~a pre-condition $\precond$, if for every stack element $\xi = (f, \loc, \nu)$ appearing in one of its configurations, we have $\nu \satisfies \precond(\loc)$. Valid semi-runs and paths are defined similarly. A stack element is \emph{reachable} if it appears in a valid run.

\paragraph{Abstract Paths} Given a pre-condition $\precond$ and a post-condition $\postcond$, an \emph{abstract path} starting at a stack element $\xi = (f, \loc_0, \nu_0)$ is a sequence $\varpi = \langle \kappa_i = \langle(f, \loc_i, \nu_i)\rangle \rangle_{i=0}^{n}$ such that for all $i<n$, $\kappa_{i+1}$ satisfies either one of the conditions (a), (b) and (d) as in the definition of runs or the following modified (c) condition:
\begin{itemize}
	\item[(c$'$)] $\loc_i \in \locs_c$, i.e.~the statement corresponding to $\loc_i$ is a function call $v_0 := f'(v_1, \ldots, v_n)$ where $f'$ is a function with the header	$f'(v'_1, \ldots, v'_n)$ and $\nu_i \satisfies \precond(\lin^{f'})[v'_i \leftarrow v_i, \bar{v}'_i \leftarrow v_i]$. Moreover, $(\loc_i, \perp, \loc_{i+1}) \in \rightarrow$, the valuation $\nu_{i+1}$ agrees with $\nu_i$ over every variable, except possibly $v_0$, and $\nu_{i+1} \satisfies \postcond(f)[\bar{v}_i' \leftarrow v_i, \ret^f \leftarrow v_0]$. The latter is the result of replacing each occurrence of $\bar{v}_i'$ with its respective $v_i$ and $\ret^f$ with $v_0$ in $\postcond(f)$.
\end{itemize}
An abstract path always remains in the same function and hence each configuration in an abstract path consists of only one stack element. A valid abstract path is defined similarly to a valid path. 

\section{Inductive Assertion Maps and Invariants} \label{app:invariant}

\paragraph{Invariants} Given a program $P$ and a pre-condition $\precond$, an \emph{invariant} is a function $\invariant$ mapping each label $\loc \in \locs^f$ of the program to a conjunctive propositional formula $\inv(\loc) := \bigwedge_{i=0}^m \left( \mathfrak{e}_i > 0 \right)$ over $\pvars^f$, such that for every reachable stack element $(f, \loc, \nu)$, it holds that $\nu \satisfies \inv(\loc)$. 

\paragraph{Inductive Assertion Maps} Given a \emph{non-recursive} program $P$ and a pre-condition $\precond$, an \emph{inductive assertion map} is a function $\ind$ mapping each label $\loc \in \locs^f$ of the program to a conjunctive propositional formula $\ind(\loc) := \bigwedge_{i=0}^m \left( \mathfrak{e}_i > 0 \right)$ over $\pvars^f$, such that the following two conditions hold:
\begin{compactitem}
	\item \emph{Initiation.} For every stack element $\xi = (\mainFunction, \lin^\mainFunction, \nu_0)$, we have $\nu_0 \satisfies \precond(\lin^\mainFunction) \Rightarrow \nu_0 \satisfies \ind(\lin^\mainFunction)$. Intuitively, this means that $\ind(\lin^\mainFunction)$ should be deducible from the pre-condition $\precond(\lin^\mainFunction).$ 
	\item \emph{Consecution.} For every valid unit-length path $\pi = \langle (\mainFunction, \loc_0, \nu_0),\\ (\mainFunction, \loc_1, \nu_1) \rangle$ that starts at $\loc_0$ and ends at $\loc_1$, we have $\nu_0 \satisfies \ind(\loc_0) \Rightarrow \nu_1 \satisfies \ind(\loc_1).$ Intuitively, this condition means that the inductive assertion map cannot be falsified by running a valid step of the execution of the program.
\end{compactitem}

	\begin{lemma} 
		Given a non-recursive program $P$ and a pre-condition $\precond$, every inductive assertion map $\ind$ is an invariant.
\end{lemma}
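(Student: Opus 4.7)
The plan is to prove this by induction on the position of a stack element within a valid run. Concretely, I would show that for every valid run $\rho = \langle \kappa_i \rangle_{i=0}^\infty$ and every index $i$ such that $\kappa_i$ is non-empty, writing $\kappa_i = \langle (\mainFunction, \loc_i, \nu_i) \rangle$ (noting that since $P$ is non-recursive the stack always has at most one element, as there are no function-call statements and $\fnames = \{\mainFunction\}$), we have $\nu_i \satisfies \ind(\loc_i)$. Once this is established, the invariant property follows immediately: any reachable stack element, by definition, appears in some valid run, hence it satisfies $\ind$ at its label.

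For the base case, $\kappa_0 = \langle (\mainFunction, \lin^\mainFunction, \nu_0) \rangle$. Since $\rho$ is valid with respect to $\precond$, we have $\nu_0 \satisfies \precond(\lin^\mainFunction)$. The Initiation condition of the inductive assertion map then yields $\nu_0 \satisfies \ind(\lin^\mainFunction)$, as required.

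For the inductive step, assume $\nu_i \satisfies \ind(\loc_i)$. Consider the transition from $\kappa_i$ to $\kappa_{i+1}$. If $\kappa_{i+1}$ is empty, there is nothing to prove. Otherwise, since $P$ is non-recursive, cases (c) and (e$_2$) in the definition of runs cannot occur, so $\kappa_{i+1} = \langle (\mainFunction, \loc_{i+1}, \nu_{i+1}) \rangle$ is obtained by one of the rules (a), (b), (d), or (e$_1$). In each of these cases, $\pi = \langle \kappa_i, \kappa_{i+1} \rangle$ is a unit-length path of $P$, and because $\rho$ is valid, $\pi$ is also valid (both $\nu_i \satisfies \precond(\loc_i)$ and $\nu_{i+1} \satisfies \precond(\loc_{i+1})$ hold). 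The Consecution condition then gives $\nu_{i+1} \satisfies \ind(\loc_{i+1})$.

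The argument is essentially routine; the only subtle point is ensuring that the non-recursive assumption is used precisely to rule out the stack-push/pop rules (c) and (e$_2$) so that consecutive configurations always give rise to a genuine unit-length path in the CFG of the unique function $\mainFunction$. This is the step I would be most careful about, since otherwise one would need to reason about how $\ind$ is transported across call and return boundaries, which is exactly what the recursive case handles via abstract paths and post-conditions.
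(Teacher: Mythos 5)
Your proof is correct and follows essentially the same route as the paper's: induction on the length of a valid prefix of the run, with Initiation handling the base case and Consecution applied to each unit-length valid path for the step. Your added care about ruling out the call/return rules (c) and (e$_2$) via non-recursiveness, and about the empty-configuration case, is a harmless elaboration of what the paper leaves implicit.
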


\begin{proof}
	Consider a valid run $\rho = \langle \kappa_i  \rangle_{i=0}^\infty = \langle \langle (\mainFunction, \loc_i, \nu_i) \rangle \rangle_{i=0}^\infty$ of $P$. Let $\pi = \langle \kappa_i \rangle_{i=0}^n$ be a prefix of $\rho$, which is a valid path of length $n$. We prove that $\nu_n \satisfies \ind(\loc_n)$. Our proof is by induction on $n$. 
		For the base case of $n=0$, we have $\loc_0 = \lin^\mainFunction$. By validity of $\rho$, we have $\nu_0 \satisfies \precond(\lin^\mainFunction)$. Hence, by initiation, $\nu_0 \satisfies \ind(\lin^\mainFunction)$. 
		For the induction step, assuming that $\nu_{n-1} \satisfies \ind(\loc_{n-1})$, we prove that $\nu_{n} \satisfies \ind(\loc_{n}).$ We apply the consecution property to the unit-length valid path $\langle (\mainFunction, \loc_{n-1}, \nu_{n-1}), (\mainFunction, \loc_n, \nu_n) \rangle$, which leads to $\nu_n \satisfies \ind(\loc_n)$. 
	
		Hence, for every reachable stack element $(\mainFunction, \loc, \nu)$, we have $\nu \satisfies \ind(\loc)$, which means $\ind$ is an invariant. \qed
\end{proof}

\paragraph{Recursive Inductive Invariants} Given a recursive program $P$ and a pre-condition $\precond$, a \emph{recursive inductive invariant} is a pair $(\postcond, \ind)$ where $\postcond$ is a post-condition and $\ind$ is a function that maps every label $\loc \in \locs^f$ of the program to a conjunctive propositional formula $\ind(\loc) := \bigwedge_{i=0}^m \left( \mathfrak{e}_i > 0 \right)$, such that the following requirements are met:
\begin{compactitem}
	\item \emph{Initiation.} For every stack element $\xi = (f, \lin^f, \nu_0)$ at  start of a function $f$, we have $\nu_0 \satisfies \precond(\lin^f) \Rightarrow \nu_0 \satisfies \ind(\lin^f).$
	\item \emph{Consecution.} For every valid unit-length \emph{abstract} path $\pi = \langle (f, \loc_0, \nu_0), (f, \loc_1, \nu_1) \rangle$ that starts at $\loc_0 \in \locs^f$ and ends at $\loc_1 \in \locs^f$, we have $\nu_0 \satisfies \ind(\loc_0) \Rightarrow \nu_1 \satisfies \ind(\loc_1).$ 
	
	\item \emph{Post-condition Consecution.} For every valid unit-length \emph{abstract} path $\pi = \langle (f, \loc_0, \nu_0), (f, \lout^f, \nu_1) \rangle$ that starts at $\loc_0 \in \locs^f$ and ends at the endpoint label $\lout^f$, we have $\nu_0 \satisfies \ind(\loc_0) \Rightarrow \nu_1 \satisfies \postcond(f).$
\end{compactitem}

	\begin{lemma}
		Given a recursive program $P$ and a pre-condition $\precond$, if  $(\postcond, \ind)$ is a recursive inductive invariant, then the function $\ind$ is an invariant.
\end{lemma}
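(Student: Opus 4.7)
I plan to prove the statement by induction on the length of valid path prefixes, but with a strengthened induction hypothesis. Specifically, I will show that for every valid path $\pi = \langle \kappa_i \rangle_{i=0}^n$ of the program $P$ and every stack element $(f,\loc,\nu)$ appearing in the final configuration $\kappa_n$, the valuation $\nu$ satisfies $\ind(\loc)$. Since every reachable stack element is, by definition, the top element of $\kappa_n$ for some valid path of length $n$, this immediately yields the desired conclusion that $\ind$ is an invariant.

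The base case $n=0$ is immediate: $\kappa_0 = \langle(\mainFunction,\lin^\mainFunction,\nu_0)\rangle$, and validity gives $\nu_0 \satisfies \precond(\lin^\mainFunction)$; initiation then yields $\nu_0 \satisfies \ind(\lin^\mainFunction)$. For the inductive step, I will do case analysis on the transition rule from $\kappa_{n-1}$ to $\kappa_n$ as listed in the formal semantics of runs (Appendix~\ref{app:semantics}). Cases (a), (b), and (d) only modify the top stack element within a single function $f$, and in each of these the transition $\langle(f,\loc_0,\nu_0),(f,\loc_1,\nu_1)\rangle$ is itself a valid unit-length abstract path of the corresponding type; hence the consecution clause of the recursive inductive invariant applies directly and gives $\nu_1 \satisfies \ind(\loc_1)$. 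For a function-call step (case (c)), a fresh stack element at $\lin^{f'}$ is pushed, and validity of the run guarantees that its valuation satisfies $\precond(\lin^{f'})$; initiation then supplies $\ind(\lin^{f'})$. The stack elements below the top are unchanged in all these cases, so the induction hypothesis still covers them. Case (e$_1$) terminates the program and there is nothing to verify.

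The main obstacle is case (e$_2$), the return step, where the top stack element $(f,\lout^f,\nu)$ is popped and the element beneath it, $(\hat f,\hat\loc,\hat\nu)$, is replaced by $(\hat f,\hat\loc',\hat\nu')$ with $\hat\nu' := \hat\nu[v_0 \leftarrow \nu(\ret^f)]$. Here I need a two-step argument. First, I observe that the previous transition, which brought the program to $\lout^f$, must have been a return statement executed under rule (a); this is a valid unit-length abstract path in $f$ ending at the endpoint label. The induction hypothesis gives $\ind$ at the return statement's label, so the \emph{post-condition consecution} clause applies and yields $\nu \satisfies \postcond(f)[\ret^f \leftarrow \ret^f, \bar v_i \leftarrow \bar v_i]$, i.e.\ the return value together with the preserved $\bar v_i$'s witnesses $\postcond(f)$. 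Second, I assemble a valid unit-length abstract path $\varpi = \langle(\hat f,\hat\loc,\hat\nu),(\hat f,\hat\loc',\hat\nu')\rangle$ in the parent function using rule (c$'$): (i)~$\hat\nu \satisfies \precond(\hat\loc)$ by validity of the run; (ii)~$\hat\nu$ satisfies the call-site-translated pre-condition $\precond(\lin^f)[v_i' \leftarrow v_i,\bar v_i' \leftarrow v_i]$, because the pushed stack element at $\lin^f$ was required to satisfy $\precond(\lin^f)$ and its valuation was obtained from $\hat\nu$ by exactly this substitution; (iii)~$\hat\nu'$ coincides with $\hat\nu$ outside $v_0$ by construction; and (iv)~$\hat\nu'$ satisfies $\postcond(f)[\bar v_i' \leftarrow v_i,\ret^f \leftarrow v_0]$, which is a syntactic translation of the post-condition witness we just obtained. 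Since $\hat\nu \satisfies \ind(\hat\loc)$ by the induction hypothesis applied to $\kappa_{n-1}$, the consecution clause on this abstract path $\varpi$ delivers $\hat\nu' \satisfies \ind(\hat\loc')$, completing the inductive step.

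The trickiest bookkeeping is making sure the pre-condition and post-condition substitutions line up precisely; the frozen ``bar'' variables $\bar v_i$ play a crucial role here, since they remain equal to the initial call arguments throughout $f$'s execution (enforced by the assertions $v - \bar v \geq 0$ and $\bar v - v \geq 0$ in $\precond(\lin^f)$), which is exactly what allows the post-condition obtained at $\lout^f$ to be translated back into an assertion over the caller's variables via $\bar v_i' \leftarrow v_i$. Once this is verified, the strengthened induction goes through and the lemma follows. \qed
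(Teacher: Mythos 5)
Your proposal is correct and follows essentially the same route as the paper's own proof: induction on the length of valid path prefixes, with initiation handling the base case and fresh function entries, consecution handling intra-function steps, and post-condition consecution combined with an assembled abstract call-step handling returns. The only cosmetic difference is that you strengthen the induction hypothesis to cover every stack element of $\kappa_n$, whereas the paper tracks only the top element and instead appeals (via strong induction) to the earlier configuration in which the suspended caller was on top; both variants are sound and the bookkeeping you spell out for the return case matches what the paper leaves terse.
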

\begin{proof}
		Consider an arbitrary valid run $\rho = \langle \kappa_i  \rangle_{i=0}^\infty$ of $P$. Let $\pi = \langle \kappa_i \rangle_{i=0}^n$ be a prefix of $\rho$, which is a valid path of length $n$ and $\xi = (f, \loc, \nu)$ the last stack element of $\kappa_n$. We prove that $\nu \satisfies \ind(\loc)$. Our proof is by induction on $n$.

		For the base case of $n=0$, we have $\loc = \lin^\mainFunction.$ By validity of $\rho$, we have $\nu \satisfies \precond(\lin^\mainFunction)$. Hence, by initiation, $\nu \satisfies \ind(\lin^\mainFunction).$ For the inductive step, we let $\xi' = (f', \loc', \nu')$ be the last stack element in $\kappa_{n-1}$. We prove that $\nu \satisfies \ind(\loc).$ We consider the following cases:
		\begin{itemize}
			\item If $f' = f$, then $\langle (f', \loc', \nu'), (f, \loc, \nu) \rangle$ is a valid abstract path of length $1$. Hence, by consecution, we have $\nu \satisfies \ind(\loc).$
			\item If $f'$ is the parent function of $f$, i.e.~$\loc = \lin^f$ and $\loc'$ is a function-call statement calling $f$, then by validity of $\rho$, we have $\nu \satisfies \precond(\loc)$ and by initiation, we infer $\nu \satisfies \ind(\loc)$.
			\item If $f$ is the parent function of $f'$, i.e.~$\loc' = \lout^{f'}$, then let $\hat{\xi} = (f, \hat{\loc}, \hat{\nu})$ be the last visited stack element in $f$ before $f'$ was called. It is easy to verify that $\hat{\loc}$ is a function-call statement calling $f'$ and $(\hat{\loc}, \perp, \loc) \in \rightarrow$. By post-condition consecution, $\nu' \satisfies \postcond(f')$, hence $\langle (f, \hat{\loc}, \hat{\nu}), (f, \loc, \nu ) \rangle$ is a valid abstract path of length $1$. By the induction hypothesis, we have $\hat{\nu} \satisfies \ind(\hat{\loc})$, hence, by consecution, we deduce $\nu \satisfies \ind(\loc).$
	\end{itemize}
	
		Hence, for every reachable stack element $\xi = (f, \loc, \nu)$, we have $\nu \satisfies \ind(\loc)$ which means $\ind$ is an invariant. \qed
\end{proof}

\section{Mathematical Tools and Lemmas} \label{app:maths}

\subsection{Proof of Corollary~\ref{col:put}} \label{app:col:put}

\noindent\textbf{Corollary~\ref{col:put}.}
	Let $V, g, g_1, \ldots, g_m$ and $\Pi$ be as in Theorem~\ref{thm:putinar}. Then $g(x) > 0$ for all $x \in \Pi$ \emph{if and only if}:
	\begin{equation} \label{eq:colput2}
	g = \epsilon + h_0 + \sum_{i=1}^m h_i \cdot g_i
	\end{equation}
	where $\epsilon >0$ is a real number and each polynomial $h_i$ is the sum of squares of some polynomials in $\mathbb{R}[V]$.

\begin{proof}
		It is obvious that if~(\ref{eq:colput2}) holds, then $g(x)>0$ for all $x \in \Pi$. We prove the other side. Let $g(x) > 0$ for all $x \in \Pi$. Given that $\Pi$ is compact and $g$ continuous, $g(\Pi)$ must also be compact and hence closed. Therefore, $\delta := \inf_{x \in \Pi} g(x) > 0$. Let $\epsilon = \delta / 2$, then $g(x) - \epsilon > 0$ for all $x \in \Pi$. Applying Putinar's Positivstellensatz, i.e. equation (\ref{eq:putinar}), to $g - \epsilon$ leads to the desired result. \qed
\end{proof}

\subsection{Proof of Lemma~\ref{lm:sq}} \label{app:maths2}

In our algorithm, we have to reduce the problem of checking whether a polynomial $h \in \mathbb{R}[V]$ is a sum-of-squares to solving a quadratic system. We now present this reduction in detail. Our reduction is based on the following two well-known theorems:

\begin{theorem}[See \cite{horn1990matrix}, Corollary 7.2.9]
	A polynomial $h \in \mathbb{R}[V]$ of even degree $d$ is a sum-of-squares \emph{if and only if} there exists a $k$-dimensional symmetric positive semi-definite matrix $Q$ such that $h = y^T Q y$, where $k$ is the number of monomials of degree no greater than $d/2$ and $y$ is a column vector consisting of every such monomial.
\end{theorem}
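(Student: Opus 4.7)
The plan is to prove the biconditional by translating directly between the two representations of $h$. Both directions rely on the fact that every sum-of-squares decomposition $h = \sum_i f_i^2$ with $\deg h = d$ forces each $f_i$ to have degree at most $d/2$, since the highest-degree term in $f_i^2$ cannot cancel against terms in $f_j^2$ (the leading term of $\sum_i f_i^2$ is positive). Consequently, each $f_i$ lies in the linear span of the monomials in $y$.

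For the forward direction ($\Rightarrow$), I would start from a decomposition $h = \sum_{i=1}^r f_i^2$ and write each $f_i$ as a linear combination $f_i = c_i^T y$ for some coefficient vector $c_i \in \mathbb{R}^k$. Then
\[
f_i^2 = (c_i^T y)(c_i^T y) = y^T (c_i c_i^T) y,
\]
so summing gives $h = y^T Q y$ with $Q := \sum_{i=1}^r c_i c_i^T$. The matrix $Q$ is symmetric because each $c_i c_i^T$ is symmetric, and it is positive semi-definite as a sum of rank-one PSD matrices (for any $z$, $z^T Q z = \sum_i (c_i^T z)^2 \geq 0$).

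For the backward direction ($\Leftarrow$), I would invoke the spectral theorem for symmetric PSD matrices to write $Q = \sum_{i=1}^k \lambda_i v_i v_i^T$ with $\lambda_i \geq 0$ and orthonormal eigenvectors $v_i \in \mathbb{R}^k$. Substituting,
\[
h = y^T Q y = \sum_{i=1}^k \lambda_i (v_i^T y)^2 = \sum_{i=1}^k \bigl(\sqrt{\lambda_i}\, v_i^T y\bigr)^2,
\]
and each $\sqrt{\lambda_i}\, v_i^T y$ is a polynomial in $\mathbb{R}[V]$ of degree at most $d/2$. Hence $h$ is a sum of squares. (One could equally well use a Cholesky factorization $Q = L L^T$ and take the sum of squares of the entries of $L^T y$.)

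The only subtle point---and the place where one must be careful---is that the entries of $y$ are not algebraically independent: distinct index pairs $(i,j) \neq (k,l)$ may satisfy $y_i y_j = y_k y_l$ as polynomials, which means the matrix $Q$ witnessing $h = y^T Q y$ is not unique. This is not an obstacle to the equivalence itself (we only need \emph{existence} of some symmetric PSD $Q$), but it is precisely what makes the reduction in Lemma~\ref{lm:sq} nontrivial: the equality $h = y^T Q y$ is an identity of polynomials, so matching coefficients of like monomials on both sides produces a system of \emph{linear} equalities in the entries of $Q$, while the PSD constraint is expressed by requiring the existence of a factorization $Q = L L^T$, which is a system of \emph{quadratic} equalities in the entries of $L$. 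The proof of the theorem itself, however, is just the linear-algebra argument above.
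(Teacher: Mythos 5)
The paper does not prove this statement at all: it imports it verbatim from Horn and Johnson (Corollary 7.2.9) and immediately pairs it with the Cholesky factorization theorem to justify the reduction in Lemma~\ref{lm:sq}. Your proposal supplies the standard self-contained Gram-matrix proof, and it is correct. The forward direction (writing each $f_i$ as $c_i^T y$ and taking $Q = \sum_i c_i c_i^T$) and the backward direction (spectral or Cholesky factorization of $Q$) are both sound; the one step that deserves the care you give it is the degree bound on the $f_i$, and your cancellation argument is right --- more precisely, if some $f_i$ had degree $D > d/2$, the degree-$2D$ homogeneous part of $\sum_i f_i^2$ would be $\sum_i \bigl(f_i^{(D)}\bigr)^2$, a sum of squares that vanishes identically only if every top part $f_i^{(D)}$ does, contradicting $\deg h = d$. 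Your closing remark about the non-uniqueness of $Q$ (because the entries of $y$ are multiplicatively dependent) is also exactly the right observation for why the paper's reduction produces linear constraints in the entries of $Q$ plus quadratic ones from $Q = LL^T$; the paper itself makes this split implicitly in the reduction following the two cited theorems but never spells out why the equivalence holds.
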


\begin{theorem}[\cite{cholesky,gene1996matrix}]
	A symmetric square matrix $Q$ is positive semi-definite \emph{if and only if} it has a Cholesky decomposition of the form $Q = L L^T$ where $L$ is a lower-triangular matrix with non-negative diagonal entries.
\end{theorem}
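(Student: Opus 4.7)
The plan is to prove the two directions separately, treating the easier implication first and then building the Cholesky factor constructively by induction on the dimension $n$ of $Q$.

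For the ``if'' direction, suppose $Q = L L^T$ for some lower-triangular $L$ (the non-negativity of the diagonal is not even needed here). Then $Q^T = (LL^T)^T = LL^T = Q$, so $Q$ is symmetric. For any vector $x \in \mathbb{R}^n$, we have $x^T Q x = x^T L L^T x = (L^T x)^T (L^T x) = \lVert L^T x \rVert_2^2 \geq 0$, so $Q$ is positive semi-definite.

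For the ``only if'' direction, I would proceed by strong induction on $n$, the size of $Q$. The base case $n = 1$ is immediate: $Q = [q]$ with $q \geq 0$, so take $L = [\sqrt{q}]$. For the inductive step, write $Q$ in block form as
\[
Q = \begin{pmatrix} q_{11} & v^T \\ v & Q' \end{pmatrix},
\]
where $q_{11} \in \mathbb{R}$, $v \in \mathbb{R}^{n-1}$, and $Q' \in \mathbb{R}^{(n-1)\times(n-1)}$ is symmetric. Since $Q$ is PSD, testing on $e_1$ gives $q_{11} \geq 0$. I split into two cases. If $q_{11} > 0$, set $\ell_{11} = \sqrt{q_{11}} > 0$ and $u = v/\ell_{11}$, and consider the Schur complement $S := Q' - u u^T = Q' - vv^T/q_{11}$. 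A short calculation shows $S$ is symmetric and PSD: for any $y \in \mathbb{R}^{n-1}$, one applies the PSD property of $Q$ to the vector $(-u^T y,\, y^T)^T$ to obtain $y^T S y \geq 0$. By the inductive hypothesis, $S = L' L'^T$ with $L'$ lower-triangular and non-negative diagonal, and then
\[
L = \begin{pmatrix} \ell_{11} & 0 \\ u & L' \end{pmatrix}
\]
satisfies $LL^T = Q$ and is of the required form. The case $q_{11} = 0$ is the delicate one: here the PSD property of $Q$ forces $v = 0$ (otherwise testing on a vector of the form $(t, \alpha v^T)^T$ for suitable $t, \alpha$ produces a negative quadratic form), so $Q$ reduces to a block of zeros and $Q'$, and one applies induction to $Q'$ directly, taking the first column of $L$ to be zero.

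The main obstacle I expect is precisely the $q_{11} = 0$ case: one must verify carefully that PSD forces $v = 0$ so that the recursion is well-defined, and one must also check that the resulting $L$ still has non-negative diagonal (which is automatic from the inductive hypothesis plus the choice $\ell_{11} = 0$). The $q_{11} > 0$ case is routine once the Schur complement identity is established, but writing out that Schur complement argument cleanly (i.e., the identity $y^T S y = z^T Q z$ for an appropriate lift $z$) is the technical heart of the inductive step.
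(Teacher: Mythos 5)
The paper does not actually prove this statement: it is quoted as a known result and attributed to the cited references on the Cholesky decomposition, so there is no in-paper argument to compare against. Your proof is the standard constructive one and is essentially correct. The ``if'' direction via $x^T LL^Tx = \lVert L^Tx\rVert_2^2 \ge 0$ is fine, and the induction on dimension with the two cases $q_{11}>0$ (Schur complement) and $q_{11}=0$ (forcing $v=0$) is exactly the textbook route; the degenerate case is indeed the only place where positive \emph{semi}-definiteness, as opposed to definiteness, requires extra care, and your argument that $v\ne 0$ would let $2t\lVert v\rVert^2 + v^TQ'v$ go negative as $t\to-\infty$ handles it correctly.

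One small correction to the Schur-complement step: the lift you name, $z=(-u^Ty,\;y^T)^T$, is off by a factor of $\ell_{11}$. Writing $z=(t,\,y^T)^T$ one gets
\begin{equation*}
z^TQz \;=\; q_{11}t^2 + 2t\,v^Ty + y^TQ'y \;=\; \bigl(\ell_{11}t + u^Ty\bigr)^2 + y^TSy ,
\end{equation*}
so the choice that kills the square term is $t=-u^Ty/\ell_{11} = -v^Ty/q_{11}$, not $t=-u^Ty$; with your $t$ the identity $y^TSy = z^TQz$ fails and you only get an inequality in the unhelpful direction. Since $q_{11}>0$ in this case the corrected lift is well defined and the rest of the induction goes through, including the non-negativity of the diagonal of $L$ (inherited from $\ell_{11}=\sqrt{q_{11}}\ge 0$ and the inductive hypothesis on $L'$). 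With that one fix the proof is complete.
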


Given the two theorems above, our reduction uses the following procedure for generating quadratic equations that are equivalent to the assertion that $h$ is a sum-of-squares:

\paragraph{The Reduction} The algorithm generates the set $\monomials_{\lfloor d/2 \rfloor}$ of monomials of degree at most $\lfloor d/2 \rfloor$ over $V.$ It then orders these monomials arbitrarily into a vector $y$ and symbolically computes the equality 
\begin{equation} \label{eq:sos}
h = y^T L L^T y
\end{equation}
where $L$ is a lower-triangular matrix whose every non-zero entry is a new variable in the system. We call these variables $l$-variables. For every $l_{i,i}$, i.e.~every $l$-variable that appears on the diagonal of $L$, the algorithm adds the constraint $l_{i,i} \geq 0$ to the quadratic system. Then, it translates Equation~\eqref{eq:sos} into quadratic equations over the coefficients of $h$\footnote{These coefficients are called $t$-variables in our algorithm.} and $l$-variables by equating the coefficients of corresponding terms on the two sides of \eqref{eq:sos}. The resulting system encodes the property that $h$ is a sum-of-squares.

\begin{example}
	Let $V = \{a, b\}$ be the set of variables and $h \in \mathbb{R}[V]$ a quadratic polynomial, i.e.~$h(a, b) = t_1 + t_2 \cdot a + t_3 \cdot b + t_4 \cdot a^2 + t_5 \cdot a \cdot b + t_6 \cdot b^2$. We aim to encode the property that $h$ is a sum-of-squares as a system of quadratic equalities and inequalities. To do so, we first generate all monomials of degree at most $\lfloor d/2 \rfloor = 1,$ which are $1,$ $a$ and $b$. Hence, we let $y = \begin{bmatrix}1 & a & b\end{bmatrix}^T.$ We then generate a lower-triangular matrix $L$ whose every non-zero entry is a new variable: 
	$$
	L = \begin{bmatrix}
	l_1 & 0 & 0\\ 
	l_2 & l_3 & 0 \\ 
	l_4 & l_5  & l_6 
	\end{bmatrix}.
	$$
	We also add the inequalities $l_1 \geq 0, l_3 \geq 0$ and $l_6 \geq 0$ to our system.
	Now, we write the equation $h = y^T L L^T y$ and compute it symbolically:
	$$
	h = \begin{bmatrix}
	1 & a & b
	\end{bmatrix}
	\begin{bmatrix}
	l_1 & 0 & 0\\ 
	l_2 & l_3 & 0 \\ 
	l_4 & l_5  & l_6 
	\end{bmatrix}
	\begin{bmatrix}
	l_1 & l_2 & l_4\\
	0 & l_3 & l_5\\
	0 & 0 & l_6
	\end{bmatrix}
	\begin{bmatrix}
	1\\
	a\\
	b
	\end{bmatrix},
	$$
	which leads to:
	$
	t_1  +  t_2 \cdot a  +  t_3 \cdot b  +  t_4 \cdot a^2  +  t_5 \cdot a \cdot b  +  t_6 \cdot b^2 
	=  l_1^2  +  2 \cdot l_1 \cdot l_2 \cdot a  +  2 \cdot l_1 \cdot l_4 \cdot b  +  (l_2^2 + l_3^2) \cdot a^2  +  (2 \cdot l_2 \cdot l_4 + 2 \cdot l_3 \cdot l_5) \cdot a \cdot b  +  (l_4^2 + l_5^2+l_6^2) \cdot b^2.
	$
	
	Note that both sides of the equation above are polynomials over $\{a, b\},$ hence they are equal iff their corresponding coefficients are equal. So, we get the following quadratic equalities over the $t$-variables and $l$-variables: $t_1 = l_1^2, t_2 = 2 \cdot l_1 \cdot l_2, \ldots, t_6 = l_4^2 + l_5^2 + l_6^2.$
	This concludes the construction of our quadratic system.
\end{example}

\section{Experimental Results} \label{app:exp}

\subsection{The Invariant Synthesized for Our Running Example} \label{app:exp-run}

Table~\ref{tab:simpleex} shows the output of our invariant generation algorithm, i.e.~$\weakInvAlgo$, on the running example of Figure~\ref{fig:running}.

\begin{table}
		\begin{tabular}{|C{2mm}|c|C{5cm}|}
			\hline
			$\loc$ & $\precond(\loc)$ & $\ind(\loc)$\\
			\hline \hline
			$1$ & $\bar{n}=n , n\ge1$ & 
			$0.13 - 0.01 \cdot \bar{n} -0.05 \cdot r -0.07\cdot s -0.24\cdot i + 0.06\cdot n + 0.16\cdot \bar{n}^2 -0.08\cdot  \bar{n}\cdot r + 0.11 \cdot r^2 -0.13 \cdot \bar{n} \cdot s + 0.18\cdot r\cdot s + 0.15\cdot s^2 -0.13\cdot \bar{n}\cdot i + 0.16\cdot  r \cdot i + 0.24 \cdot i\cdot  s+ 0.23\cdot i^2 +0.07 \cdot \bar{n}\cdot  n - 0.52 \cdot r\cdot  n  -0.67\cdot n\cdot s - 0.66 \cdot i\cdot n + 1.10 \cdot n^2 > 0 $\\
			
			\hline
			$2$ & \textbf{true} & $0.09 - 0.01 \cdot \bar{n} - 0.18 \cdot r - 0.30\cdot s + 0.09\cdot i + 0.11\cdot n + 0.03\cdot \bar{n}^2 + 0.01\cdot \bar{n}\cdot r  + 0.13\cdot r^2 -0.03\cdot \bar{n}\cdot s + 0.16\cdot r\cdot s + 0.23\cdot s^2 -0.01\cdot  \bar{n} \cdot i -0.18\cdot r\cdot i  -0.30\cdot i\cdot s + 0.09\cdot i^2 + 0.01\cdot  \bar{n}\cdot  n + 0.11\cdot i\cdot n  > 0$ \\
			
			\hline
			
			$3$ & \textbf{true} & $0.49 + 0.01\cdot \bar{n} + 0.11\cdot r -0.59\cdot s -0.30\cdot i -0.29\cdot n  + 0.13 \cdot r^2 -0.01 \cdot \bar{n}\cdot s -0.35 \cdot r\cdot s + 0.60\cdot s^2 + 0.05\cdot r\cdot i -0.01\cdot i\cdot s + 0.16\cdot i^2 -0.04\cdot r\cdot n -0.06\cdot n \cdot s -0.04\cdot  i\cdot n + 0.18\cdot n^2 > 0$\\
			
			\hline
			
			$4$ & \textbf{true} & $0.20 -0.12 \cdot \bar{n} + 0.01 \cdot r -0.01 \cdot s -0.22\cdot i -0.07 \cdot n + 1.08 \cdot \bar{n}^2 + 0.10 \cdot \bar{n} \cdot r + 0.15\cdot r^2 -0.49 \cdot \bar{n} \cdot s -0.08 \cdot r \cdot s + 0.10 \cdot s^2 -0.65 \cdot \bar{n}\cdot i -0.15\cdot r \cdot i + 0.16 \cdot i \cdot s + 0.14 \cdot i^2 -0.57 \cdot \bar{n} \cdot n -0.11\cdot r \cdot n + 0.13\cdot n \cdot s + 0.24 \cdot i \cdot n + 0.22\cdot n^2 > 0$\\
			
			\hline
			$5$ & \textbf{true} & $0.22 -0.12 \cdot \bar{n} + 0.01 \cdot r -0.02 \cdot s -0.05 \cdot i -0.28 \cdot n+ 1.08\cdot \bar{n}^2 + 0.10\cdot  \bar{n} \cdot r  + 0.15\cdot r^2 -0.49 \cdot \bar{n} \cdot s -0.08 \cdot r \cdot s + 0.10 \cdot s^2 -0.63 \cdot \bar{n}\cdot i -0.13 \cdot r \cdot i  + 0.16 \cdot i \cdot s + 0.14 \cdot i^2 -0.60\cdot \bar{n} \cdot n -0.12 \cdot r \cdot n + 0.13 \cdot n \cdot s + 0.22 \cdot i \cdot n + 0.24 \cdot n^2 > 0$\\
			
			\hline

			$6$ & \textbf{true} & $0.22 -0.12 \cdot \bar{n} + 0.01\cdot r -0.02 \cdot s -0.05 \cdot i -0.28 \cdot n + 1.08 \cdot \bar{n}^2 + 0.10 \cdot \bar{n} \cdot r + 0.15 \cdot r^2 -0.49 \cdot \bar{n} \cdot s -0.08 \cdot r \cdot s + 0.10 \cdot s^2 -0.63 \cdot \bar{n} \cdot i -0.13\cdot r\cdot i + 0.16 \cdot i\cdot s + 0.14 \cdot i^2 -0.60 \cdot \bar{n} \cdot n -0.12 \cdot r \cdot n + 0.13\cdot n \cdot s + 0.22\cdot i \cdot n + 0.24 \cdot n^2 > 0$\\
			
			\hline
			$7$ & \textbf{true} & $0.15 -0.09\cdot \bar{n} -0.12\cdot r + 0.03\cdot s -0.07\cdot i -0.13 \cdot n + 0.23\cdot \bar{n}^2 + 0.48 \cdot \bar{n} \cdot r + 0.35\cdot r^2 -0.31\cdot \bar{n} \cdot s -0.40 \cdot r\cdot s + 0.13 \cdot s^2 + 0.16\cdot \bar{n} \cdot i + 0.13 \cdot r \cdot i -0.09\cdot i \cdot s + 0.06 \cdot i^2 -0.24\cdot \bar{n} \cdot n -0.24 \cdot r\cdot n + 0.18\cdot n \cdot s -0.10 \cdot i \cdot n + 0.14 \cdot n^2 > 0$\\
			
			\hline
			$8$ & \textbf{true} & $0.18 -0.11\cdot \bar{n} + 0.01\cdot r + 0.50 \cdot i -0.79\cdot n + 1.09\cdot \bar{n}^2 + 0.11 \cdot \bar{n} \cdot r + 0.15\cdot r^2 -0.48\cdot \bar{n} \cdot s -0.08\cdot  r \cdot s + 0.10 \cdot s^2 -0.57 \cdot \bar{n} \cdot i -0.09\cdot r \cdot i + 0.16\cdot i \cdot s + 0.18 \cdot i^2 -0.66 \cdot \bar{n} \cdot n -0.16\cdot r \cdot n + 0.12 \cdot n \cdot s + 0.13\cdot i \cdot n + 0.27\cdot n^2 > 0$\\
			\hline
			$9$ & \textbf{true} & $1+0.5 \cdot \bar{n} -r + 0.5 \cdot \bar{n}^2 > 0$\\ \hline
		\end{tabular}
\caption{The inductive invariant generated by $\weakInvAlgo$ for the running example in Figure~\ref{fig:running}}
\label{tab:simpleex}
\end{table}

\clearpage
\subsection{Recursive Examples} \label{app:rec-exp}

We used the following recursive examples as benchmarks. Desired assertions are shown in brackets. Pre-conditions are enclosed in $\#$ signs. In all cases our algorithm synthesizes an inductive invariant that contains the desired assertions. 

\begin{table}[H]
\begin{tabular}{c}

\lstset{tabsize=4}
\lstset{language=prog}
\begin{lstlisting}[mathescape,basicstyle=\small]
$\func{recursive-sum}$($n$) {
	# $n\geq 0$ #
	if $n \leq 0$ then
		return $n$
	else
		$m$ := $n - 1$;
		$s$ := $\func{recursive-sum}$($m$);
		if $\star$ then
			$s$ := $s+n$
		else
			skip
		fi;
		return $s$
	fi
 [$\ret^\func{recursive-sum} < 0.5 \cdot \bar{n}^2 + 0.5 \cdot \bar{n} + 1$]}
\end{lstlisting} 
\end{tabular}
\end{table}

\begin{table}[H]
	\begin{tabular}{c}
\lstset{tabsize=4}
\lstset{language=prog}
\begin{lstlisting}[mathescape,basicstyle=\small]
$\func{recursive-square-sum}$($n$) {
	# $n\geq 0$ #
	if $n \leq 0$ then
		return $n$
	else
		$m$ := $n - 1$;
		$s$ := $\func{recursive-sum}$($m$);
		if $\star$ then
			$s$ := $s+n * n$
		else
			skip
		fi;
		return $s$
	fi
 [$\ret^\func{recursive-square-sum} < 0.34 \cdot \bar{n}^3 + 0.5 \cdot \bar{n}^2 + 0.17 \cdot \bar{n} + 1$]}
\end{lstlisting}
\end{tabular}
\end{table}

\begin{table}[H]
\begin{tabular}{c}
\lstset{tabsize=4}
\lstset{language=prog}
\begin{lstlisting}[mathescape,basicstyle=\small]
$\func{recursive-cube-sum}$($n$) {
	# $n\geq 0$ #
	if $n \leq 0$ then
		return $n$
	else
		$m$ := $n - 1$;
		$s$ := $\func{recursive-sum}$($m$);
		if $\star$ then
			$s$ := $s+n * n * n$
		else
			skip
		fi;
		return $s$
	fi
 [$\ret^\func{recursive-cube-sum} < 0.25 \cdot \bar{n}^2 \cdot (\bar{n}+1)^2 + 1$]}
\end{lstlisting} 
\end{tabular}
\end{table}

\begin{table}[H]
\begin{tabular}{c}
\lstset{tabsize=4}
\lstset{language=prog}
\begin{lstlisting}[mathescape,basicstyle=\small]
$\func{pw2}$($x$) {
	//$\text{computes the largest power of 2 that is } \leq x$
	# $x \geq 1$ #
	if $x \geq 2$ then
		$y$ := $0.5 * x$;
		return $2 * \func{pw2}(y)$
	else
		return $1$
	fi
$[\ret^\func{pw2} \leq \bar{x} \wedge 2 \cdot \ret^\func{pw2} > \bar{x}]$}
\end{lstlisting}
\end{tabular}
\end{table}

\begin{table}
	\begin{tabular}{c}
		\lstset{tabsize=4}
		\lstset{language=prog}
\begin{lstlisting}[mathescape,basicstyle=\small]
$\func{merge-sort}$($s$, $e$) //$\text{sorts and returns number of inversions in} [s..e]$
{
	# $e \geq s$ #
	if $s \geq e$ then
		return $0$
	else
		$i$ := $0.5 * s + 0.5 * e$;
		$j$ := $\lfloor i \rfloor$;
		$i$ := $j+1$;
		$r$ := $\func{merge-sort}(s, j)$;
		$ans$ := $\func{merge-sort}(i, e)$;
		$ans$ := $ans + r$;
		$k$ := $s$;
		
		while $i \leq e$ do
			while $k \leq j$ do
				if $\star$ then //$array[k]\leq array[i]$
					$k$ := $k+1$; 
					skip //$\text{temp.push\_back}$($array[k]$)
				else //$array[k]>array[i]$
					$ans$ := $ans + j - k + 1$; //$\text{add inversions}$
					$i$ := $i+1$;
					skip //$\text{temp.push\_back}$($array[i]$)
				fi
			od;
			skip; //$\text{temp.push\_back}$($array[i]$)
			$i$ := $i+1$
		od;
		
		while $s \leq e$ do
			skip; //$\text{copy from } temp \text{ to } array$	
			$s$ := $s+1$
		od;
		
		return $ans$
	fi
[$\ret^\func{merge-sort} < 0.5 \cdot (\bar{e} - \bar{s}) \cdot (\bar{e} - \bar{s} + 1) + 1$]}
		\end{lstlisting}
	\end{tabular}
\end{table}

\clearpage
\subsection{Example with Two Functions Recursively Calling Each Other} \label{app:twofunc}

Our approach can handle any combination of recursive function calls as long as a polynomial recursive inductive invariant exists. As an example, consider the following program, consisting of two functions $f$ and $g$, which recursively call each other. Our algorithm is able to handle it in 47s using parameters $n=d=\Upsilon=2$. This program leads to a quadratic system of size 5453.

\begin{table}[H]
	\begin{tabular}{c}
		
		\lstset{tabsize=4}
		\lstset{language=prog}
		\begin{lstlisting}[mathescape,basicstyle=\small]
$\func{f}$($n$) {
	# $n\geq 1$ #
	if $n \leq 1$ then
		return $1$
	else
		$x$ := $g(n-1);$
		$x$ := $x + 2 \cdot n - 1;$
		$y$ := $0;$
		while $(y+1)^2 \leq x$ do
			$y$ := $y+1$
		od;
		return $y$
	fi
[$\ret^\func{f} \leq n$]}

$\func{g}$($n$) {
	# $n\geq 1$ #
	return $n \cdot f(n)$
[$\ret^\func{g} \leq n^2$]}
		\end{lstlisting} 
	\end{tabular}
\end{table}

\newpage
\subsection{Example of Synthesizing Polynomial Equality Invariants} \label{app:equality}

To demonstrate that our approach is able to generate invariants including polynomial equalities, we slightly change the program in Figure~\ref{fig:running} to obtain the one in Figure~\ref{fig:equ}. This program precisely computes $1+2+\ldots+n = \frac{n \cdot (n+1)}{2}.$ Let $r$ be the return value of $\func{sum}(n),$ we would like to prove that $r = 0.5 \cdot n^2 + 0.5 \cdot n,$ which is equivalent to $0.5 \cdot n^2 + 0.5 \cdot n - r \geq 0 ~\wedge~ r - 0.5 \cdot n^2 - 0.5 \cdot n \geq 0.$ We can therefore run the sound but incomplete variant of our approach, i.e.~the variant that does not incorporate positivity witnesses, with parameters $n=d=\Upsilon=2.$ Our algorithm successfully synthesizes an inductive invariant that proves the desired equality. The resulting inductive invariant is given in Table~\ref{tab:equ}.

\begin{figure}
	\begin{center}
		\begin{minipage}{0.5\linewidth}
			\lstset{language=prog}
			\lstset{tabsize=4}
			\begin{lstlisting}[aboveskip=0pt,belowskip=0pt,mathescape,basicstyle=\small]
   $\func{sum}$($n$) {
1:	$i$ := $1$;
2:	$s$ := $0$;
3:	while $i \leq n$ do
4:		$s$ := $s+i$;
5:		$i$ := $i+1$
	od;
6:	return s
7: }
			\end{lstlisting}
		\end{minipage}
	\end{center}
	\caption{A summation program}
	\label{fig:equ}
\end{figure}

\begin{table*}
		\begin{tabular}{|C{2mm}|c|C{6.5cm}|C{6.5cm}|}
			\hline
			$\loc$ & $\precond(\loc)$ & $\ind(\loc)$, first inequality & $\ind(\loc)$, second inequality\\
			\hline \hline
			$1$ & $\bar{n}=n , n\ge1$ & 
			$-0.09 - 0.79 \cdot \bar{n} - 0.95 \cdot r - 0.72 \cdot s - 1.12 \cdot i - 0.39 \cdot n + 1.90 \cdot \bar{n}^2 - 0.12 \cdot \bar{n} \cdot r + 1.62 \cdot r^2 - 0.25 \cdot \bar{n} \cdot s - 0.36 \cdot r \cdot s + 1.48 \cdot s^2 - 0.79 \cdot \bar{n} \cdot i - 0.67 \cdot r \cdot i - 0.94 \cdot i \cdot s + 1.57 \cdot i^2 + 0.42 \cdot \bar{n} \cdot n + 0.10 \cdot r \cdot n - 0.15 \cdot n \cdot s - 0.82 \cdot i \cdot n + 2.36 \cdot n^2 \geq 0 $ & 
			
			$0.45 - 0.77 \cdot \bar{n} - 0.85 \cdot r - 0.56 \cdot s - 0.75 \cdot i - 0.47 \cdot n + 1.68 \cdot \bar{n}^2 - 0.37 \cdot \bar{n} \cdot r + 1.45 \cdot r^2 - 0.44 \cdot \bar{n} \cdot s - 0.52 \cdot r \cdot s + 1.42 \cdot s^2 - 0.89 \cdot \bar{n} \cdot i - 0.67 \cdot r \cdot i - 0.89 \cdot i \cdot s + 1.73 \cdot i^2 + 0.09 \cdot \bar{n} \cdot n - 0.20 \cdot r \cdot n - 0.38 \cdot n \cdot s - 0.92 \cdot i \cdot n + 2.07 \cdot n^2 \geq 0$ \\
			
			\hline
			
			$2$ & \textbf{true} & $0.08 - 1.07 \cdot \bar{n} - 1.48 \cdot r - 2.00 \cdot s - 2.42 \cdot i - 0.34 \cdot n + 3.14 \cdot \bar{n}^2 - 0.62 \cdot \bar{n} \cdot r + 3.51 \cdot r^2 - 1.33 \cdot \bar{n} \cdot s - 2.12 \cdot r \cdot s + 2.32 \cdot s^2 + 0.64 \cdot \bar{n} \cdot i + 0.76 \cdot r \cdot i - 1.37 \cdot i \cdot s + 3.40 \cdot i^2 - 0.50 \cdot \bar{n} \cdot n + 0.22 \cdot r \cdot n - 0.68 \cdot n \cdot s + 0.94 \cdot i \cdot n + 3.67 \cdot n^2 \geq 0$ &
			
			$0.76 - 1.37 \cdot \bar{n} - 1.08 \cdot r - 1.64 \cdot s - 1.77 \cdot i - 0.01 \cdot n + 2.25 \cdot \bar{n}^2 - 1.34 \cdot \bar{n} \cdot r + 3.25 \cdot r^2 - 1.20 \cdot \bar{n} \cdot s - 2.27 \cdot r \cdot s + 2.68 \cdot s^2 - 0.04 \cdot \bar{n} \cdot i + 0.59 \cdot r \cdot i - 1.23 \cdot i \cdot s + 2.93 \cdot i^2 - 1.68 \cdot \bar{n} \cdot n - 0.32 \cdot r \cdot n - 0.95 \cdot n \cdot s + 0.31 \cdot i \cdot n + 2.93 \cdot n^2 \geq 0$ \\
			
			\hline
			
			$3$ & \textbf{true} & $0.94 - 0.37 \cdot \bar{n} - 0.07 \cdot r + 0.95 \cdot s + 0.38 \cdot i - 0.09 \cdot n + 0.07 \cdot \bar{n}^2 + 0.03 \cdot \bar{n} \cdot r - 0.19 \cdot \bar{n} \cdot s - 0.04 \cdot r \cdot s + 0.28 \cdot s^2 - 0.07 \cdot \bar{n} \cdot i - 0.01 \cdot r \cdot i + 0.27 \cdot i \cdot s + 0.11 \cdot i^2 + 0.01 \cdot \bar{n} \cdot n + 0.02 \cdot n \cdot s + 0.06 \cdot i \cdot n + 0.18 \cdot n^2 \geq 0$ &
			
			$3.69 + 3.15 \cdot \bar{n} - 0.09 \cdot r - 3.37 \cdot s - 0.06 \cdot i - 0.19 \cdot n + 4.87 \cdot \bar{n}^2 - 0.01 \cdot \bar{n} \cdot r + 0.01 \cdot r^2 - 1.44 \cdot \bar{n} \cdot s - 0.09 \cdot r \cdot s + 2.89 \cdot s^2 + 0.01 \cdot \bar{n} \cdot i - 0.05 \cdot i \cdot s + 0.02 \cdot i^2 + 0.02 \cdot \bar{n} \cdot n + 0.02 \cdot r \cdot n - 0.25 \cdot n \cdot s + 0.03 \cdot n^2 \geq 0$\\
			
			\hline
			
			$4$ & \textbf{true} & $0.89 - 0.31 \cdot \bar{n} - 0.06 \cdot r + 0.91 \cdot s + 0.39 \cdot i - 0.26 \cdot n + 0.06 \cdot \bar{n}^2 + 0.02 \cdot \bar{n} \cdot r - 0.16 \cdot \bar{n} \cdot s - 0.03 \cdot r \cdot s + 0.26 \cdot s^2 - 0.06 \cdot \bar{n} \cdot i - 0.01 \cdot r \cdot i + 0.27 \cdot i \cdot s + 0.11 \cdot i^2 + 0.01 \cdot \bar{n} \cdot n - 0.09 \cdot n \cdot s - 0.02 \cdot i \cdot n + 0.23 \cdot n^2 \geq 0$ &
			
			$0.34 + 0.47 \cdot \bar{n} - 0.55 \cdot s - 0.56 \cdot i + 0.01 \cdot n + 0.82 \cdot \bar{n}^2 - 0.01 \cdot \bar{n} \cdot r - 0.16 \cdot \bar{n} \cdot s + 0.25 \cdot s^2 - 0.15 \cdot \bar{n} \cdot i + 0.51 \cdot i \cdot s + 0.26 \cdot i^2 - 0.01 \cdot \bar{n} \cdot n - 0.01 \cdot n \cdot s - 0.01 \cdot i \cdot n \geq 0$\\
			
			\hline
			$5$ & \textbf{true} & $2.83 - 1.01 \cdot \bar{n} - 0.20 \cdot r + 2.86 \cdot s - 1.62 \cdot i - 0.86 \cdot n + 0.26 \cdot \bar{n}^2 + 0.10 \cdot \bar{n} \cdot r + 0.01 \cdot r^2 - 0.52 \cdot \bar{n} \cdot s - 0.11 \cdot r \cdot s + 0.83 \cdot s^2 + 0.30 \cdot \bar{n} \cdot i + 0.05 \cdot r \cdot i - 0.80 \cdot i \cdot s + 0.32 \cdot i^2 + 0.04 \cdot \bar{n} \cdot n + 0.01 \cdot r \cdot n - 0.27 \cdot n \cdot s + 0.21 \cdot i \cdot n + 0.73 \cdot n^2 \geq 0$ &
			
			$2.03 + 2.31 \cdot \bar{n} - 0.04 \cdot r - 2.82 \cdot s - 0.01 \cdot i - 0.02 \cdot n + 4.08 \cdot \bar{n}^2 - 0.03 \cdot \bar{n} \cdot r - 0.79 \cdot \bar{n} \cdot s - 0.01 \cdot r \cdot s + 1.32 \cdot s^2 + 0.03 \cdot \bar{n} \cdot i + 0.02 \cdot i^2 - 0.06 \cdot \bar{n} \cdot n + 0.01 \cdot r \cdot n - 0.06 \cdot n \cdot s + 0.01 \cdot n^2 \geq 0$\\

			\hline
			$6$ & \textbf{true} & $20.37 - 8.79 \cdot \bar{n} - 0.05 \cdot r + 25.48 \cdot s - 0.19 \cdot i - 2.16 \cdot n + 4.25 \cdot \bar{n}^2 - 0.25 \cdot \bar{n} \cdot r + 2.89 \cdot r^2 - 7.49 \cdot \bar{n} \cdot s + 2.00 \cdot r \cdot s + 25.53 \cdot s^2 - 1.08 \cdot \bar{n} \cdot i - 0.97 \cdot r \cdot i - 0.64 \cdot i \cdot s + 2.83 \cdot i^2 - 1.79 \cdot \bar{n} \cdot n - 2.21 \cdot r \cdot n - 4.19 \cdot n \cdot s - 0.86 \cdot i \cdot n + 4.49 \cdot n^2 \geq 0$ &
			
			$68.26 + 60.81 \cdot \bar{n} - 2.84 \cdot r - 68.34 \cdot s - 2.64 \cdot i - 3.44 \cdot n + 100.00 \cdot \bar{n}^2 + 0.76 \cdot \bar{n} \cdot r + 2.64 \cdot r^2 - 33.53 \cdot \bar{n} \cdot s - 2.63 \cdot r \cdot s + 57.63 \cdot s^2 + 0.71 \cdot \bar{n} \cdot i - 1.22 \cdot r \cdot i - 2.45 \cdot i \cdot s + 2.36 \cdot i^2 + 0.91 \cdot \bar{n} \cdot n - 1.74 \cdot r \cdot n - 3.16 \cdot n \cdot s - 1.82 \cdot i \cdot n + 2.19 \cdot n^2 \geq 0$\\
			
			\hline
			$7$ & \textbf{true} & $0.50 \cdot n^2 + 0.50 \cdot n - r \geq 0$ & 
			
			$r - 0.50 \cdot n^2 - 0.50 \cdot n \geq 0$\\
			\hline
		\end{tabular}
\caption{The inductive invariant generated by $\weakInvAlgo$ for the summation program in Figure~\ref{fig:equ}}
\label{tab:equ}
\end{table*}

\clearpage
\subsection{Continued Fraction Example} \label{app:frac}

Given that our approach has semi-completeness guarantees (over bounded reals), it is no surprise that it can generate desired polynomial invariants for inputs which no previous incomplete approach could handle. We now present a classical example of a program that approximates $\sqrt{2}$ using its continued fraction representation. Our implementation generates required invariants of degree $5$, which is beyond the reach of all previous methods in Table~\ref{tab:comp}. Specifically, we manually tried all the incomplete approaches in Table~\ref{tab:comp} over this example. They are either not applicable or fail to synthesize the desired invariant. However, some of them synthesize other invariants for the same program. 

We first review some well-known facts about continued fractions. A continued fraction is an expression of the following form:
$$
x = a_0 + \frac{1}{a_1 + \frac{1}{a_2 + \ldots}}
$$
in which the $a_i$'s are natural numbers. For brevity, we denote this fraction as $x = [a_0; a_1, a_2, \ldots].$ Note that the continued fraction representation might be finite (in case of rational numbers) or infinite (in case of irrationals). Specifically, it is easy to verify that $\sqrt{2} = [1; 2,2,2, \ldots].$ In any case, we define $x_n := [a_0; a_1, \ldots, a_n]$ and call it the $n$-th convergent of $x$. A standard way for approximating irrational numbers is to evaluate the convergents of their continued fraction representation. We consider a program that approximates $\sqrt{2}$ using this technique.

The following well-known lemma provides some properties of the convergents and a simple algorithm for computing them:
\begin{lemma}[\cite{contfrac}] \label{lemm:nechaev}
	Let $x = [a_0; a_1, a_2, \ldots].$ We define two sequences $\langle P_n \rangle_{n=-2}^\infty$ and $\langle Q_n \rangle_{n=-2}^\infty$ as follows:
	$$
	\begin{matrix}	P_n = a_n \cdot P_{n-1} + P_{n-2} & & & P_{-1} = 1 & P_{-2} = 0\\
	Q_n = a_n \cdot Q_{n-1} + Q_{n-2} & & & Q_{-1} = 0 & Q_{-2} = 1
	\end{matrix}.
	$$
	The following properties hold for all $0 \leq n < \infty$:
	\begin{enumerate}[(i)]	
		\item $x_n = \frac{P_n}{Q_n}.$
		\item $\vert x - x_n \vert \leq \vert x - x_{n+1} \vert.$
		\item $\vert x - x_n \vert \leq \frac{1}{Q_n \cdot Q_{n+1}}.$
		\item $Q_n \cdot P_{n-1} - Q_{n-1} \cdot P_n = (-1)^n.$
	\end{enumerate} \qed
\end{lemma}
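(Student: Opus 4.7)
\medskip

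My plan is to prove the four properties in the order (i), (iv), (iii), (ii), since each subsequent item leverages the preceding ones. All proofs are elementary and follow the classical treatment of continued fractions, relying almost entirely on induction together with the two-term recursions defining $P_n$ and $Q_n$.

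First, for (i) I would proceed by induction on $n$. The base case $n=0$ is a direct computation: $P_0 = a_0 \cdot 1 + 0 = a_0$ and $Q_0 = a_0 \cdot 0 + 1 = 1$, so $P_0/Q_0 = a_0 = x_0$. For the inductive step, the key algebraic identity is that $[a_0; a_1, \ldots, a_n, a_{n+1}]$ equals $[a_0; a_1, \ldots, a_n + 1/a_{n+1}]$, so we can treat $x_{n+1}$ as an $n$-th convergent with a modified last term $a_n' := a_n + 1/a_{n+1}$. Applying the induction hypothesis and the recursions gives $x_{n+1} = (a_n' P_{n-1} + P_{n-2})/(a_n' Q_{n-1} + Q_{n-2})$, and substituting $a_n' = a_n + 1/a_{n+1}$ and simplifying yields $(a_{n+1} P_n + P_{n-1})/(a_{n+1} Q_n + Q_{n-1}) = P_{n+1}/Q_{n+1}$, as required.

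Next, for (iv) I would again use induction. The base case $n=0$ gives $Q_0 P_{-1} - Q_{-1} P_0 = 1 \cdot 1 - 0 \cdot a_0 = 1 = (-1)^0$. For the inductive step, I substitute the recursions for $P_n$ and $Q_n$:
\[
Q_n P_{n-1} - Q_{n-1} P_n = (a_n Q_{n-1} + Q_{n-2}) P_{n-1} - Q_{n-1}(a_n P_{n-1} + P_{n-2}),
\]
and the $a_n$ terms cancel, leaving $-(Q_{n-1} P_{n-2} - Q_{n-2} P_{n-1}) = -(-1)^{n-1} = (-1)^n$ by the induction hypothesis. Dividing (iv) through by $Q_n Q_{n-1}$ yields the telescoping identity $x_n - x_{n-1} = (-1)^n/(Q_n Q_{n-1})$. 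I would then derive (iii) by writing $x - x_n$ as the tail sum $\sum_{k=n+1}^{\infty}(x_k - x_{k-1})$, observing that the signs alternate (by the above identity) and the magnitudes $1/(Q_k Q_{k-1})$ are strictly decreasing (since $Q_k$ grows with $k$ because $a_k \geq 1$ for $k \geq 1$); an alternating-series bound then gives $|x - x_n| \leq 1/(Q_n Q_{n+1})$.

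Finally, for (ii), the claim that $|x - x_n| \leq |x - x_{n+1}|$ as written looks like a typographical swap of the intended inequality; the standard property is the opposite direction, $|x - x_{n+1}| \leq |x - x_n|$, which follows from the same alternating-series analysis used above together with the monotonicity of $Q_k$. I would prove the standard monotonicity version and note that this is what is actually needed in the application. The main obstacle I anticipate is not technical but rather notational bookkeeping: keeping the indexing of $P_n, Q_n$ consistent with the initial conditions $P_{-1} = 1, P_{-2} = 0, Q_{-1} = 0, Q_{-2} = 1$ through all the inductive steps, and carefully handling the identity $[a_0;\ldots,a_n,a_{n+1}] = [a_0;\ldots, a_n + 1/a_{n+1}]$ that underlies (i).
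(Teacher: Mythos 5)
The paper offers no proof of this lemma at all: it is stated as a known fact, cited to the reference on continued fractions, with the \qed attached directly to the statement. So there is nothing in the paper to compare your argument against; what you have written is the standard textbook proof, and it is correct. The order (i), (iv), (iii), (ii) is the right dependency structure, the base cases check out against the initial conditions $P_{-1}=1$, $P_{-2}=0$, $Q_{-1}=0$, $Q_{-2}=1$, and the telescoping/alternating-series derivation of (iii) from the identity $x_n - x_{n-1} = (-1)^n/(Q_nQ_{n-1})$ is exactly how this is usually done. Two small points. First, in the inductive step of (i) you should state the strengthened hypothesis explicitly: the identity $x_n = (a_nP_{n-1}+P_{n-2})/(a_nQ_{n-1}+Q_{n-2})$ must be proved for an \emph{arbitrary positive real} last partial quotient (not just a natural number), since you substitute $a_n' = a_n + 1/a_{n+1}$; this works because $P_{n-1},P_{n-2},Q_{n-1},Q_{n-2}$ depend only on $a_0,\dots,a_{n-1}$, but it deserves a sentence. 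Second, you are right that item (ii) as printed has the inequality reversed — the correct classical statement is $\vert x - x_{n+1}\vert \leq \vert x - x_n\vert$ (successive convergents improve), which is what your alternating-series analysis yields; the version in the paper is false (e.g.\ for $\sqrt{2}$, $\vert x - x_1\vert > \vert x - x_2\vert$), so proving the corrected version is the only sensible reading.
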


Consider the program in Figure~\ref{prog:contfrac}. This program computes the values of $P_n$ and $Q_n$ for every $n \geq 0$. We use the variable $p_0$ to save values of $P_{2n}$, i.e.~even-indexed values of the sequence $P$, and $p_1$ to save $P_{2n+1}.$ The variables $q_0$ and $q_1$ are used in a similar manner. We can encode properties (ii) to (iv) of Lemma~\ref{lemm:nechaev} as partial invariants for the program in Figure~\ref{prog:contfrac} and check if our approach can synthesize inductive invariants that prove them.

\begin{figure}[H]
	\begin{tabular}{c}
		\lstset{tabsize=4}
		\lstset{language=prog}
		\begin{lstlisting}[mathescape,basicstyle=\small]
  $\func{continued-fraction}$()
  {
1:	$p_0$ := $1$;
2:	$p_1$ := $3$;
3:	$q_0$ := $1$;
4:	$q_1$ := $2$;
	
5:	while $1 \geq 0$ do
6:		$p_0$ := $2 \cdot p_1 + p_0$;
7:		$q_0$ := $2 \cdot q_1 + q_0$;
8:		skip; //$x_n$ := $p_0 / q_0$;
		
9:		$p_1$ := $2 \cdot p_0 + p_1$;
10:		$q_1$ := $2 \cdot q_0 + q_1$;
11:		skip //$x_n$ := $p_1 / q_1$
	od
}
		\end{lstlisting}
	\end{tabular}
	\caption{A program that approximates $\sqrt{2}$ by computing the two sequences defined in Lemma~\ref{lemm:nechaev}.}
	\label{prog:contfrac}
\end{figure}

 Specifically, letting $x = \sqrt{2}$ we want the algorithm to prove the following partial invariants:
\begin{itemize}
	\item[(ii)] $\vert x - x_n \vert \leq \vert x - x_{n+1} \vert$ can be rewritten as follows:
	\begin{gather*}
		(x - x_n)^2 \leq (x - x_{n+1})^2\\
		\left(x - \frac{P_n}{Q_n}\right)^2 \leq \left(x - \frac{P_{n+1}}{Q_{n+1}}\right)^2\\
		(Q_n \cdot Q_{n+1} \cdot x - P_n \cdot Q_{n+1})^2 \leq (Q_n \cdot Q_{n+1} \cdot x - P_{n+1} \cdot Q_n)^2
	\end{gather*}
	which is a polynomial inequality. Hence, we aim to find an inductive invariant that contains the following inequalities at lines 8 and 11, respectively:
	$$
	(q_0 \cdot q_1 \cdot \sqrt{2} - p_1 \cdot q_0)^2 \leq (q_0 \cdot q_1 \cdot \sqrt{2} - p_0 \cdot q_1)^2,
	$$
	$$
	(q_0 \cdot q_1 \cdot \sqrt{2} - p_0 \cdot q_1)^2 \leq (q_0 \cdot q_1 \cdot \sqrt{2} - p_1 \cdot q_0)^2.
	$$
	To model $\sqrt{2},$ we consider a new program variable $x$ whose value is always $\sqrt{2}.$ We enforce this by adding $x^2 \leq 2 \wedge x^2 \geq 2 \wedge x \geq 0$ to every pre-condition. 
	
	\item[(iii)] Similar to the previous case, this property can be rewritten as a polynomial inequality as follows:
	\begin{gather*}
		\vert x - x_n \vert \leq \frac{1}{Q_n \cdot Q_{n+1}}\\				
		\vert x - x_n \vert \cdot Q_n \cdot Q_{n+1} \leq 1\\
		(x-x_n)^2 \cdot Q_n^2 \cdot Q_{n+1}^2 \leq 1\\
		\left(x - \frac{P_n}{Q_n} \right)^2 \cdot Q_n^2 \cdot Q_{n+1}^2 \leq 1\\
		(Q_n \cdot x - P_n)^2 \cdot Q_{n+1}^2 \leq 1		
	\end{gather*}
	Therefore, we aim to find an inductive invariant that contains the following inequalities at lines 8 and 11, respectively:
	$$
	(q_1 \cdot \sqrt{2} - p_1)^2 \cdot q_0^2 \leq 1,
	$$
	$$
	(q_0 \cdot \sqrt{2} - p_0)^2 \cdot q_1^2 \leq 1.
	$$
	
	\item[(iv)] This property is already in polynomial form wrt our program variables. Therefore, it corresponds to the following equality at lines 8 and 11:
	$$
		q_0 \cdot p_1 - q_1 \cdot p_0 = 1.
	$$	
\end{itemize} 

\paragraph{Execution Results} We ran our approach on the program of Figure~\ref{prog:contfrac} with the goal of finding an inductive invariant containing the partial invariants listed above. Note that the partial invariants in (ii) are of degree $5$ and hence we set $d = \Upsilon = 5.$ Moreover, we set $n=5$, i.e.~we generate $5$ polynomial inequalities at each program point. Using these parameters, our approach was able to successfully generate the desired inductive invariant in 48m. To the best of our knowledge, no previous approach for polynomial invariant generation can handle this example.

\begin{remark}
Note that the choice of $x=\sqrt{2}$ in the example above was arbitrary. One can replace $\sqrt{2}$ with any other real number with a periodic continued fraction representation, thus obtaining a family of programs whose desired partial invariants (as in Lemma~\ref{lemm:nechaev}) can be automatically proven by our approach, but not by any of the previous approaches. It is well-known that the set of real numbers with periodic continued fraction representation is the same as the set of quadratic irrationals, i.e.~irrational roots of quadratic equations with integer coefficients~\cite{periodic}. For example, this set contains $\sqrt{n}$ for every non-square $n \in \mathbb{N}.$
\end{remark}

\end{document}